\documentclass[oneside,12pt,reqno]{amsart}
\PassOptionsToPackage{natbib=true,maxbibnames=10}{biblatex}
\usepackage[utf8]{inputenc}
\setcounter{secnumdepth}{2}
\setcounter{tocdepth}{2}
\usepackage{amstext}
\usepackage{amsthm}
\usepackage{esint}
\usepackage{hyperref}
\usepackage{comment}
\usepackage[T1]{fontenc}
\usepackage{tikz}
\usepackage{tikz-3dplot}
\usepackage{xcolor}
\usepackage{array}
\usepackage{tabularray}
\usepackage{enumitem}
\usepackage{amsmath}

\allowdisplaybreaks

\tikzset{
    mynode/.style={
        circle, draw, fill=black, inner sep=1.5pt
    }
}

\usepackage[
backend=biber,
style=authoryear,
uniquename = false]{biblatex}
\addbibresource{nbe.bib}

\ExecuteBibliographyOptions{
  isbn=false,
  url=false,
  doi=false,
  eprint=false
}

\renewbibmacro*{in:}{}


\makeatletter
\theoremstyle{plain}
\newtheorem{thm}{\protect\theoremname}
\theoremstyle{definition}
\newtheorem{example}{\protect\examplename}
\theoremstyle{definition}
\newtheorem{defn}{\protect\definitionname}
\theoremstyle{remark}

\theoremstyle{plain}
\newtheorem{prop}{\protect\propositionname}
\theoremstyle{plain}
\newtheorem{cor}{\protect\corollaryname}
\theoremstyle{plain}
\newtheorem{lem}{\protect\lemmaname}


%

\providecommand{\U}[1]{\protect \rule{.1in}{.1in}}

\usepackage{bbold}

\newcommand{\E}{\mathbb{E}}
\newcommand{\R}{\mathbb{R}}

\newcommand{\vt}[1]{\mathsf{#1}}
\newcommand{\marcin}[1]{}
\newcommand{\colin}[1]{}
\newcommand{\marcintwo}[1]{}
\newcommand{\colintwo}[1]{}

\providecommand{\corollaryname}{Corollary}
\providecommand{\definitionname}{Definition}
\providecommand{\examplename}{Example}
\providecommand{\lemmaname}{Lemma}
\providecommand{\propositionname}{Proposition}
\providecommand{\remarkname}{Remark}

\makeatother

\providecommand{\corollaryname}{Corollary}
\providecommand{\definitionname}{Definition}
\providecommand{\examplename}{Example}
\providecommand{\lemmaname}{Lemma}
\providecommand{\propositionname}{Proposition}
\providecommand{\remarkname}{Remark}
\providecommand{\theoremname}{Theorem}

\usepackage{setspace}
\onehalfspacing

\usepackage[margin=1.25in]{geometry}

\raggedbottom

\setlist{nosep}

\begin{document}

\title{Nondistortionary belief elicitation}
\author{Marcin Pęski \and Colin Stewart\\ University of Toronto}
\date{\today}

\thanks{This research was supported by the Social Sciences and Humanities Research Council of Canada.}

\begin{abstract}
    A researcher wants to ask a subject about a belief related to a choice the subject made; examples include eliciting confidence or cognitive uncertainty. When can the researcher provide incentives for the subject to report his belief truthfully without distorting his choice? We identify necessary and sufficient conditions for nondistortionary elicitation and fully characterize all incentivizable questions in three canonical classes of problems. For these problems, we show how to elicit beliefs using variants of the Becker-DeGroot-Marschak mechanism.\\

    Keywords: belief elicitation, cognitive uncertainty, experimental design
\end{abstract}

\maketitle


\section{Introduction}

\marcintwo{I think list is fine, but if Leeat does not like the list, we can always say "An example is when ... . For another example, ... }
Experimentalists frequently elicit subjects' beliefs about their choices. Examples include the following: (i) The subject chooses an action with a payoff that depends on an unknown state of the world. The researcher asks what probability the subject assigns to his action being correct, i.e, maximizing the ex post payoff. (See, e.g., \citet{coffman2014evidence}.) (ii) The subject provides a guess of some quantity and receives a reward according to how close his guess is to the true quantity. The researcher asks him the likelihood that his guess is within some fixed amount $x$ of the correct value. (See, e.g., \citet{enke2023cognitive}.) (iii) The subject takes a test consisting of a number of multiple choice questions, with a reward for each correct answer. The researcher asks him about the probability that he received a particular score. (See, e.g., \citet{gillen2019experimenting}.)



To ensure subjects' reported beliefs are reliable, researchers typically provide incentives that make truthful reporting uniquely optimal.\footnote{See \citet{healy2024belief} for a discussion of incentivized vs.\ unincentivized belief elicitation.} However, when the belief to be elicited is tied to an action choice, doing so could distort the incentives governing that choice. 
\marcintwo{Instead: "Consider the last example mentioned above ..."}
To take a simple example, suppose the subject must answer a multiple-choice question and then is asked the probability that he gave the correct answer. Suppose moreover that the subject is rewarded at the belief elicitation stage with a payment that is increasing in the probability the subject assigns to the true event (namely, whether his answer was correct or not). Then a subject who is not confident about the correct answer but is confident that one of the answers is incorrect may be able to increase his overall expected payment by choosing the obviously incorrect answer and then reporting a high probability that it is not correct, thereby obtaining a high expected payoff at the belief elicitation stage.\footnote{\marcintwo{This paragraph stands out. Maybe we remove it? Or put it into a footnote? The latter would make sense, as our paper is not about problems outside of experimental context.}
Similar problems arise outside of the experimental context. For example, when a broker (the ``subject'') chooses an investment portfolio on behalf a client (the ``researcher''), the client might ask about the probability that the returns will exceed a certain threshold. Shareholders might ask a CEO about the expected profit from a particular strategy she is choosing to pursue. A builder might ask a contractor about the likelihood of finishing a project by a given date.}

We ask for what questions truthful reports can be incentivized without distorting the incentives in the original decision problem. For those questions that can be incentivized in this way, we construct simple mechanisms to do so.

\marcintwo{Is the role of this paragraph explaining why nondistoritionary elicitation is important? Shouldnt we have a longer argument}
Designing payments for belief elicitation that do not distort the incentives in the original problem allows the researcher to honestly tell the subject that he will maximize his expected payment by choosing the action he believes is optimal in the decision problem and then reporting his belief truthfully.\marcintwo{Maybe the footnote should be a separate paragraph. Regardless, other alternative approaches, like making payoff small should be mentioned as well as the counterargument - the cost is that the researcher needs to be super clear about all the details}\footnote{One alternative approach would be not to inform subjects about the belief elicitation stage until after they have chosen an action. This approach is unlikely to be effective in experiments with repeated choices and more than one instance of belief elicitation. In any case, we would view this approach as misleading if the experimental design relies on the subject believing that he can maximize his earnings by treating the decision problem in isolation.} 
Such instructions should minimize any attempts by the subject to distort his behavior, even in cases where profitable distortions are not obvious.


\marcintwo{We need a paragraph or some sentences that state the research question. "We are interested in two questions: First, how to incentivize belief elicitation without distoriting }

\marcintwo{"In order to answer the above questions, ..."}
We introduce a model combining a general decision problem with a belief elicitation stage. The model allows us to consider a wide variety of belief elicitation questions that can depend on the action $a$ chosen in the decision problem. Formally, a question asks the subject for his subjective expectation of a function $X(a;\theta)$ (according to his belief about the unknown state $\theta$). For example, $X(a;\theta)$ could be an indicator function for the event that $a$ is an optimal action in state $\theta$, which corresponds to asking how likely $a$ is to be optimal ex post. Alternatively, $X(a;\theta)$ could be equal to the utility function $u(a;\theta)$ in the decision problem, which corresponds to asking what the subjective expected utility is in the decision problem.

We say that a question is \emph{incentivizable} if there exists a payment scheme at the belief elicitation stage for which (i) truthfully reporting the expectation of $X(a;\theta)$ is always the unique maximizer of the subject's expected payment, and (ii) the incentives in the decision problem are not distorted, meaning that, for any belief the subject may have about $\theta$, the set of optimal actions in the decision problem is the same as in the combined problem that includes the belief elicitation stage. 

We first identify questions that are incentivizable regardless of the decision problem; we refer to these questions as being aligned with the utility $u(a;\theta)$ in the decision problem. Alignment allows for all questions of the form $X(a;\theta)=u(a;\theta)+d(\theta)$, as well as all invertible affine transformations of such questions with parameters that may depend on the action $a$. 
Examples include asking the subject about the payoff he expects to receive in the decision problem, or asking his willingness to pay to have his action replaced with an ex post optimal one.\footnote{The latter question directly extends a simpler one used by \citet{hu2023confidence} to elicit whether subjects are uncertain about the optimality of their choices.} \colin{Marcin suggests moving the following two sentences to after the discussion of necessary conditions. But that discussion is focused on adjacency graphs, so this doesn't seem to fit naturally unless we do some rewriting of the following paragraphs. (I also think there is some pedagogical value in listing something that does not satisfy the definition.)} On the other hand, a question that asks the subject about the probability that his choice is ex post optimal does not generally take this form (and indeed is, in many problems, not incentivizable). \colin{The next sentence doesn't fit so well here but seems like an important point to make.} A researcher interested in eliciting a measure of cognitive uncertainty \citep{enke2023cognitive} may therefore do better to ask about the subject's expected payoff relative to the optimum rather than the probability that his action is optimal.

For questions that are aligned with the utility, we provide a simple construction of payments satisfying both of our incentivizability criteria. This construction is based on the classic Becker-DeGroot-Marschak method. One can first normalize each question $X(a;\cdot)$ to lie in $[0,1]$, and then elicit the value of $y\in [0,1]$ at which the subject is indifferent between winning a prize with probability $y$ and winning it with probability $X(a;\theta)$. The expected reward from this mechanism is monotone in the expectation of $X(a;\theta)$; alignment with $u(a;\theta)$ ensures that, if the questions are appropriately normalized, it is also monotone with respect to the expected utility in the decision problem. 
A similar construction applies to other questions that we show are incentivizable in some decision problems.

Which other questions are incentivizable, if any, depends on the structure of the decision problem. A particularly important role is played by what we call the ``adjacency graph.'' Two actions are adjacent if there is some belief at which they are both optimal and no other action is. We show that each adjacency places restrictions on how the questions following the adjacent actions are related to one another. Problems with more adjacencies therefore tend to involve stronger restrictions on which questions are incentivizable.

We fully characterize the set of incentivizable questions in three canonical classes of decision problems that differ in the structure of their adjacency graphs: adjacency trees, complete adjacency, and product adjacency. (For complete and product adjacency, we also make some mild richness assumptions regarding linear independence of payoffs.) 

Adjacency trees naturally arise in many problems with ordered states and actions that are monotone in beliefs. For example, if states and actions are discrete real numbers and the subject incurs a quadratic loss based on the distance between his action and the state, then the adjacency graph forms a line.\footnote{A typical example is when the decision problem is itself a belief elicitation problem---such as one of belief updating---with the states representing objective probabilities known to the researcher.} Because there are so few adjacencies, this case is, in a sense, the most permissive in terms of which questions are incentivizable. In particular, alignment with the utility on the full set of actions is no longer necessary: it suffices for the question to be ``piecewise aligned,'' meaning, in this case, that for each pair of adjacent actions it is aligned with the utility (but with the parameters governing the alignment possibly differing across pairs).

Complete adjacency graphs naturally arise in problems in which the subject chooses an action to match an unknown state and receives a payoff based on whether or not he succeeds, as in a multiple-choice problem where the state corresponds to which answer is correct and the subject receives a payment for a correct answer. Relative to adjacency trees, the presence of cycles in the adjacency graph imposes additional restrictions on which questions can be incentivized. In these problems, only questions that are aligned with the utility (in the sense described above) are incentivizable.

Product adjacencies arise when the decision problem comprises a number of separate tasks with complete adjacency graphs and the subject's expected reward is a sum of rewards across these tasks. For example, the subject may complete a multiple-question test and receive a payment proportional to his score. More generally, the researcher may ask the subject about his choices in an experiment with a sequence of tasks, one of which is randomly chosen to be rewarded. The adjacency graph has a special structure: two actions are adjacent if only if they differ on a single task. In this case, a question is incentivizable if and only if it is aligned with some weighted sum of the utilities in the various tasks (but not necessarily aligned with the overall utility in the decision problem). Thus, for example, a question that asks the subject about the likelihood that his score is above some fixed cutoff is not incentivizable, while a question that asks about the expected improvement in his score across two parts of the test is.

\begin{figure}[t]
    \centering
    {\scriptsize \scalebox{1}{
    \renewcommand{\arraystretch}{2} 
    \resizebox{\textwidth}{!}{%
        \begin{tblr}{|Q[t,3cm]|Q[c,h,3.8cm]|Q[c,h,3.4cm]|Q[c,h,3.6cm]|} 
        \hline
        & Tree & Complete graph  & Product structure \\ \hline
 
        \textbf{Adjacency graph} 
        
        & \begin{tikzpicture}
            \draw[white] (0,0) -- (3,0);
            \foreach \i in {1,...,4} {
                \node[mynode] (\i) at (\i-1, 1.5) {};
            }
            \foreach \i/\j in {1/2, 2/3, 3/4} {
                \draw (\i) -- (\j);
            }
        \end{tikzpicture}   
        
        & \begin{tikzpicture}
            \foreach \i in {1,...,6} {
                \node[mynode] (\i) at ({60*\i}:1.5) {};
            }
            \foreach \i in {1,...,6} {
                \foreach \j in {1,...,6} {
                    \ifnum\i<\j
                        \draw (\i) -- (\j);
                    \fi
                }
            }
        \end{tikzpicture} 
        
        & \begin{tikzpicture}
            \foreach \i/\x/\y/\z in {1/0/0/0, 2/2/0/0, 3/2/2/0, 4/0/2/0,
                                     5/0/0/1, 6/2/0/1, 7/2/2/1, 8/0/2/1} {
                \node[mynode] (\i) at (\x+\z, \y+\z) {};
            }
            \foreach \i/\j in {1/2, 2/3, 3/4, 4/1,
                               5/6, 6/7, 7/8, 8/5,
                               1/5, 2/6, 3/7, 4/8} {
                \draw (\i) -- (\j);
            }
        \end{tikzpicture} \\ \hline

        \textbf{Examples} 
        & Cognitive uncertainty (\cite{enke2023cognitive}), monotone one-dimensional problem
        & Multiple choice question, match-the-state problem
        & Random problem selection, 
        multiple-question test \\ \hline

        \textbf{Condition} 
        & Piecewise aligned  
        & Aligned
        & Weighted aligned \\ \hline
        \end{tblr}
    }
    }}
    \caption{Summary of main results. Nodes of the adjacency graph are actions, with edges indicating adjacencies.}
    \label{fig:summary}
\end{figure}

Figure \ref{fig:summary} illustrates the graphs and summarizes the results for all three classes of decision problems.


Two related features distinguish our approach from previous work on belief elicitation. First, the researcher asks the subject only to report a single number.\footnote{We discuss in Section \ref{sec: multiple questions} how to extend our methods to multiple questions.} Second, the quantity of interest to the researcher---as described by the question $X(a;\theta)$---depends nontrivially on the subject's choice of action in the decision problem. In the absence of either of these features, any question $X(a;\theta)$ is incentivizable using standard methods. For example, if the researcher could ask the subject to report his entire belief, it would be enough to incentivize truthful reports and randomly reward the subject either for his choice in the decision problem or for his reported belief. From a practical perspective, however, this approach could be burdensome for subjects if there are more than a few states to report on; if the researcher is only interested in a one-dimensional statistic of the belief, asking about it directly could save time and reduce noise in the reports.



\subsection{Related literature: experiments}


Elicitation of subjects' beliefs about their own performance on a task is common in the literature on self-confidence, overconfidence, and motivated reasoning. In a typical experiment, subjects complete a multiple-choice test, such as an IQ test. Following the test, they are asked about their relative or absolute performance (or both). For instance, in \citet{Zimmerman20} and \citet{benoit2015does}, subjects are asked about the likelihood of being above the median; in \citet{gillen2019experimenting}, subjects are asked how many questions they answered correctly and where they lie in the distribution and, in a different task, to report confidence in their decision on a qualitative scale; in \citet{burks2013overconfidence}, subjects are asked what quintile their performance lies in.\footnote{Other examples include \citet{butler2007imprecision}, \citet{blavatskyy_betting_2009}, \citet{clark2009overconfidence}, \citet{EilRao11}, \citet{ERTAC2011532}, \citet{ortoleva2015overconfidence}, \citet{Serra-GarciaGneezy21}, \citet*{mobius2022managingself}, and \citet{abdellaoui2024unpacking}, among many others. For elicitation of confidence in the psychological literature, see, for example, \citet{liberman2004local}, \citet{moore2008trouble}, and \citet{hoffrage2016overconfidence}.} 
In most of these experiments, the belief elicitation is incentivized, either through a quadratic scoring rule \citep[e.g.,][]{Zimmerman20,clark2009overconfidence}, a BDM mechanism \citep[e.g.,][]{benoit2015does}, or a payment for the correct answer \citep[e.g.,][]{burks2013overconfidence,Serra-GarciaGneezy21}. These incentives are typically distortionary, and our results suggest that nondistortionary elicitation is, in most cases, not possible for the confidence questions that have been asked, but would be possible for alternative confidence questions based directly on expected payoffs.

Elicitation questions of this form are particularly common in the experimental literature on gender and discrimination \citep[e.g.,][]{hoff2006discrimination,coffman2014evidence,bordalo2019beliefs,ExleyNielsen24}. In the seminal paper of \citet{niederle2007women}, subjects complete an arithmetic task and guess their own rank in a group of four, receiving a bonus if their guess is correct. In the competition treatment, they are paid for their performance in the main task only if they have the highest score in their group. The belief elicitation bonus creates distortionary incentives: subjects who believe they are unlikely to be the top in the group may be able to increase their expected payment by performing badly in the main task and guessing that they have the lowest rank, thereby increasing their likelihood of getting the belief elicitation bonus.

The problem we study is motivated in part by the recent literature on cognitive uncertainty initiated by \citet{enke2023cognitive} \citep*[see also][]{amelio2022cognitive,arts2024measuring,xiang2021confidence,de2024caution}. 
In each of these papers, subjects' cognitive uncertainty is elicited using unincentivized questions. 
\citet{hu2023confidence} is the first paper we are aware of that provides strict nondistortionary incentives for subjects to reveal whether they are uncertain about their decision in a complex choice task. His mechanism is essentially a simplified version of the Becker-DeGroot-Marschak mechanism we employ; in his, subjects make a binary choice of whether to pay a cost to have some chance that their action can be replaced with the optimal one.

Several papers explicitly address the possibility of distortions due to belief elicitation incentives in particular experiments. \citet*{mobius2022managingself} describe how their belief elicitation mechanism is designed with the intention of preserving incentives in their main task. \citet{clark2009overconfidence} note that the quadratic scoring rule they employ for belief elicitation introduces distortionary incentives, which they try to mitigate by making the incentives in the belief elicitation problem relatively weak. \citet{coffman2014evidence} elicits, for each question in a multiple-choice test, subjects' beliefs in the likelihood that their answer is correct; the BDM mechanism she uses is nondistortionary for this task.

In game theory experiments, subjects are sometimes asked about their beliefs regarding other players' actions. In an extensive-form game, treating the other players' strategies as an unknown state, such a question may depend on the subject's action.  
\citet{NyarkoSchotter02} elicit beliefs about the action of the other player in simultaneous-move 2x2 games. They note the potential for a different kind of distortion from the one we study: subjects may want to coordinate on predictable action profiles to increase their payments from the belief elicitation mechanism. 
\citet{AoyagiFréchetteYuksel24} elicit beliefs about the stage-game actions of the other player in a repeated prisoners' dilemma; as in our model, this question depends on the subject's actions insofar as the other player's action is a function of the history of play. 
\citet{rutstrom_stated_2009} compare incentivized and unincentivized belief elicitation in a repeated game. They find that belief elicitation incentives affect the play of the game, which may be due more to cognitive factors than to distortion of incentives. Some experiments with sequential-move games elicit beliefs about responses to all possible actions, not only the action actually chosen by the subject, eliminating any possibility of distortions \citep[e.g.,][]{bellemare2008measuring,trautmann2015belief,FehrPowellWilkening21}. \citet{halevy2025magic} elicit subjects' certainty equivalents for games given the actions they chose.

Belief elicitation is also common in field experiments. In some cases, researchers elicit the subject's willingness-to-pay (WTP) for a product or intervention \citep[e.g.,][]{AshrafBerryShapiro10,Dupas14,DizonRoss19,BelzilMaurelSidibe21}. Since the WTP generally depends on the subject's behavior in the main task, this question is action-dependent as in our model. Although some of the observability assumptions of our model may be violated (including observability of actions, states, and payoffs), our results nonetheless imply that eliciting WTP is incentivizable with a BDM mechanism.

The binarized scoring rule introduced by \citet{smith1961consistency}---and popularized by \citet{hossain2013binarized}---has become a common choice for eliciting beliefs in experiments. \citet{danz2022belief} find that subjects report more accurate beliefs when they are told that reporting truthfully will maximize the payment they can expect to receive than when the payments in the binarized scoring rule are described explicitly. In keeping with this finding, we would expect to see less distortion in behavior in belief elicitation settings like ours if subjects are instructed that the incentives are designed to ensure they can maximize their expected payment by treating each task in isolation and choosing what they believe to be optimal; such instructions can be honestly provided only if the belief elicitation question is incentivizable.

\subsection{Related literature: theory}

Belief elicitation has been widely studied and used in both theory and experiments (see \citet{schlag2015penny}, \citet{charness2021experimental}, \citet{haaland2023designing}, and \citet{healy2024belief} for surveys). We are not the first to observe that incentivized belief elicitation can distort other decisions. \citet{chambers2018dynamic} and \citet{healy2024belief} discuss the possibility that a subject would purposefully fail a test to increase the payment from belief elicitation about the likelihood of passing. \citet*{blanco2010belief} (see also \citet{SchotterTrevino2014}) find evidence that, in some problems, subjects who are paid for both a choice in a game and a reported belief take advantage of hedging opportunities, distorting either choice (or both). We implicitly assume that subjects are randomly paid either for the main task or the reported belief, eliminating such risk-hedging opportunities (while retaining other interactions between the action choice and the belief report).

\citet{ChassangPadróIMiquelSnowberg12} develop a theory of elicitation of beliefs about returns to an investment that depends on unobservable effort. In their model, correct elicitation of beliefs requires that the effort choice is not distorted. They show that a willingness-to-pay method can be used to identify returns without distorting incentives in the effort choice. 

The closest prior theoretical work to this paper is that of \citet{lambert2008eliciting} and \citet{lambert2011elicitation}, which ask which properties of distributions can be elicited. Our model shares the feature that the belief elicitation question does not ask the subject to report his entire belief. We sidestep their question of elicitability by restricting attention to questions that always correspond to elicitable properties, and we add the condition that the elicitation must not distort the decisions in the main decision problem.\footnote{\citet{Frongillo15} (and references therein) analyze elicitation of vector-valued properties of beliefs. One could view the action-report pair in our model as a special case of a multi-dimensional property of the belief (and nondistortionary elicitation as elicitation of that property), with the caveat that, unlike \citet{Frongillo15}, we must allow for the property to take multiple values at some beliefs. We are grateful to a referee for suggesting this connection.} In subsequent work, \citet{chen2026ask} characterize a stronger form of nondistortionary elicitation when the researcher can ask multiple questions.

\citet{azrieli2018incentives} study incentives in a sequence of tasks and find that paying for a randomly selected problem is the only incentive-compatible mechanism when allowing for a general class of preferences. In their model, the sequence of tasks is exogenously given, whereas in ours the belief-elicitation stage is itself an endogenous task since it depends on the subject's choice (or choices) in the main task. Random selection for payment is therefore not sufficient to ensure incentive compatibility.

\section{Model}

A subject (he) chooses an action and then faces a belief elicitation problem posed by a researcher (she) that may depend on the action he chose.

A \emph{decision problem} consists of a tuple $\left(\Theta,A,u\right)$, where $\Theta$ is a finite set of states of the world, $A$ is a finite set of actions, and $u:A\longrightarrow \mathbb{R}^\Theta$ is a utility function specifying, for each action, the vector of payoffs across all states. We write $u(a;\theta)$ for the $\theta$-coordinate of the vector $\vt{u}(a)$. (Note that we use a distinct font for vectors to distinguish, for example, between the vector $\vt{u}(a)$ and the function $u$.) 
For each belief $p\in \Delta (\Theta)$, let $A(p)=\arg \max_{a\in A} \sum_{\theta} p(\theta) u(a;\theta) $ denote the set of optimal actions at $p$; we refer to actions in $A(p)$ as $u$-optimal. For simplicity, we assume that (i) there are no redundant actions, i.e., no $a$ and $a'$ such that $\vt{u}(a)=\vt{u}(a')$, and (ii) every action is strictly rationalizable, i.e., for each $a$, there exists some $p$ such that $A(p)=\{a\}$.

After choosing an action $a$, the subject faces a \emph{question} $\vt{X}(a)\in \mathbb{R}^\Theta$ about his belief chosen according to a \emph{question profile} $X:A\longrightarrow \mathbb{R}^\Theta$, with $\theta$-coordinate $X(a;\theta)$. We interpret the question $\vt{X}(a)$ as asking the subject to report his subjective expected value $\E_{p}X\left(a;\cdot\right)$ given the action $a$ that he chose in the first stage and his belief $p$. The dependence of $X$ on $a$ allows for the possibility that the researcher seeks information about the subject's belief that is related to the chosen action. 


The following examples illustrate how this formulation allows for considerable flexibility in what the subject is asked to report:
\begin{enumerate}
    \item The question ``how likely is it that the chosen action $a$ is optimal ex post?'' corresponds
to 
\begin{align*}
X\left(a;\theta\right) & =\mathbb{1}\{a\in\arg\max_{b\in A}u\left(b;\theta\right)\}.
\end{align*}
Then $\E_{p}X\left(a;\cdot\right)$ is the subjective probability of the chosen action being $u$-optimal ex post. 





\item The question ``what is the expected regret from the chosen action?'' 
corresponds to 
\begin{align}\label{ex: En Hua}
X\left(a;\theta\right) & =\max_{b\in A}u\left(b;\theta\right) - u\left(a;\theta\right).
\end{align}

\item The question ``how likely is it that your guess $a$ is within $x$ of the true value $\theta$?" corresponds to
\begin{equation}\label{ex: within x}
X(a;\theta) = \mathbb{1} \{ |a-\theta|\leq x \}.
\end{equation}
\end{enumerate}

The subject announces a report $r\in\mathbb{R}$ and is given a reward that may depend on his report, his action $a$, and the realized state $\theta$. His overall payoff---including the payoff from the decision problem in the first stage---is given by a bounded function $V:\mathbb{R}\times A\times\Theta\longrightarrow\left[0,1\right]$ (which we normalize to the unit interval for convenience), the first argument of which is the subject's report $r$. We refer to $V$ as an \emph{elicitation method}.

For simplicity, we do not include the payoff $u(a;\theta)$ explicitly in the elicitation method; adding it would not change anything as the payoff from the elicitation problem can be adjusted accordingly so as to give the same overall payoff. In the applications we have in mind, the researcher pays the subject for the decision problem with some fixed probability $\alpha\in (0,1)$ and for the belief elicitation problem with the remaining probability $1-\alpha$. The elicitation method $V$ therefore takes the form $\alpha u (a;\theta) + (1-\alpha) V_0 (r,a,\theta)$ for some $V_0$, where $V_0$ is the payoff from the belief elicitation mechanism. As is standard in the recent literature on belief elicitation, we implicitly view $V_0$ as the probability of winning a fixed prize to avoid any influence of risk preferences on the reported belief.



\begin{defn}
A question profile $X$ is \emph{incentivizable} if there exists an elicitation method $V$ such that, for every $p\in \Delta(\Theta)$,
\begin{equation}\label{eq:incentivizability}
\arg\max_{(r,a)}\E_{p}V\left(r,a,\cdot\right)=\left\{ \left(\E_{p}X\left(a;\cdot\right),a\right):a\in A(p) \right\} . 
\end{equation}
Any $V$ satisfying this condition \emph{incentivizes} $X$.
\end{defn}

\marcintwo{Somewhere, we need to explain that our basic assumptions for the model are that the experimenter observes actions, states and payoffs in the experiment. The probability of the state, as well as beliefs, are not affected by the choice of action. However, the expected value of the question, or more generally, its distribution is, typically, affected. We can either do it here or, perhaps more naturally, at the beginning of the Overview for practitioners section }

If, given $p$, $(r,a)$ maximizes $\E_{p}V\left(r,a,\cdot\right)$, we say that $a$ is $V$-optimal.\footnote{We refer to reports $r$ or pairs $(a,r)$ that maximize $\E_{p}V\left(r,a,\cdot\right)$ simply as ``optimal'' since there is no risk of confusion with $u$-optimality.}

Incentivizability combines two requirements of the elicitation method. First, the payoffs at the belief elicitation stage must not distort the subject's action choices in the decision problem in the sense that the set of actions $a$ he optimally chooses in the overall problem with payoffs $V$ is the same as in the original decision problem with payoffs $u$; in other words, $u$-optimality and $V$-optimality are equivalent.\footnote{A researcher might be content with the weaker condition that the left-hand side of \eqref{eq:incentivizability} is a subset of the right-hand side, ensuring that any $V$-optimal action is also $u$-optimal. Since each action is $u$-optimal on a closed set of beliefs and multiplicity occurs only on the boundaries of these sets, and similarly for $V$-optimality, this requirement is equivalent to incentivizability as we define it.} Second, the subject must have strict incentives to report his true subjective expectation of $\E_p X(a;\cdot)$ given his action choice $a$. In an experiment, for any question profile that cannot be incentivized, no matter how the researcher designs the incentives at the belief elicitation stage, she cannot honestly tell subjects that they will maximize their earnings by considering the decision problem in isolation and by reporting beliefs truthfully.



\colin{Two paragraphs about equivalent decision problems are suppressed here because we can't remember the point.}



Our formulation imposes two substantive restrictions on the belief elicitation problem. First, the subject is asked to report only a one-dimensional statistic of the belief rather than, say, a full probability distribution. In practice, collecting more complicated information about beliefs quickly becomes impractical beyond a small number of states. If, however, the full probability distribution could be elicited, our problem would reduce to a standard belief elicitation problem since there would be no need to make the question or incentives dependent on the action chosen in the decision problem. Second, the elicited belief is based on the expectation of some question $X(a;\cdot)$. While this formulation captures many relevant cases, in principle, the researcher may want to elicit other properties of the distribution for which our approach may not apply. We discuss in Section \ref{sec: discussion} what changes if either of these restrictions is relaxed.

The model makes two important implicit assumptions. First, the action, state, and utility function are all observed by the researcher. This assumption typically holds in lab experiments. Second, the action does not directly affect the state or the subject's belief about the state. We discuss both of these assumptions in Section \ref{sec: discussion}.

\section{Overview for practitioners}

\marcintwo{Some comments: 1. Can we split this section into subsections: Single experiment and Parallel (or multiple) experiments? It's three pages and it is a bit long and it contains too much distinct material for a single complex.
}

\marcintwo{2. I would start the first subsection with a description of a Basic Setup: The experimenter observes actions, states of the world, and controls (designs, observes) payoffs. The decisions about actions and belief report are made at the same time. The states are not affected by actions.}

\marcintwo{3. The first section should contain an example of "expected payoff" question as an incentizable question, perhaps even before "expected regret". This would be a good point to mention that BDM incentivization of "expected payoffs" looks mathematically like an elicitation of the willingess-to-pay. For example, the BDM incentivization of the "expected payoffs" question is: tell me the "expected payoff $r$. Then, we will draw a number $z$. If $z>r$, we give you $z$. Otherwise, you receive the realized payoff from the experiment. The latter is equivalent to being allowed to participate in the experiment.}

\marcintwo{4. I will add at the end a final subsection on extensions/observability/limits.}

As an illustration of our results, consider an experiment in which the subject makes a choice under uncertainty. The experimenter then wishes to ask the subject about his confidence in his decision. To do so, she could ask the subject how likely he thinks it is that his decision is optimal ex post, or in other words, how likely it is that if he learns the true state of the world he will not want to change his action; call this the ``ex post optimality question.'' Alternatively, she could ask how much he would be willing to pay to have his action replaced with the ex post optimal one; call this the ``expected regret question.'' The difference between these questions is that, in the latter, likelihoods of states in which the action is suboptimal must be weighted according to the loss in payoff that occurs in those states. (Other questions could be used to measure confidence, but we will focus on these two for simplicity.)

Whichever confidence question the experimenter uses, to elicit reliable answers, she would like to provide incentives that make it uniquely optimal for the subject to report the answer truthfully. In addition, she does not want these incentives to contaminate the original decision problem. For example, it should not be that the subject can increase his total expected payment by making a choice that he knows is suboptimal ex post in the main problem but leads to a larger reward at the belief elicitation stage when he (correctly) reports low confidence in his choice. The question is incentivizable if there exists a payment scheme that satisfies both of these criteria.

It turns out that the ex post optimality question is only incentivizable if the original decision problem has a specific structure. For example, suppose the decision problem consists of a multiple-choice question, with the subject receiving a fixed reward for a correct answer. Formally, the set of actions $A$, which describe possible answers, is equal to the set of states $\Theta$, which describe the correct answer; the subject's answer is correct if $a=\theta$. The payoff in the decision problem is
\begin{equation}\label{eq:state_matching_payoff}
u(a,\theta) =
\begin{cases}
R &\text{if $a=\theta$},\\
0 &\text{otherwise,}
\end{cases}
\end{equation}
where $R>0$. In this case, the ex post optimality question amounts to asking the subject how likely he thinks it is that the answer he gave is correct. Notice that the probability that the answer $a$ is correct is equal to the expectation of $u(a,\theta)/R$. Thus the ex post optimality question is described by $X(a;\theta)=u(a,\theta)/R$.

Proposition \ref{Observation 0} below implies that this question is incentivizable in this decision problem. Moreover, it can be incentivized using a standard BDM mechanism to elicit the probability equivalent of the event $\theta=a$ for the action $a$ chosen by the subject.\footnote{In this mechanism, the subject is asked to report the probability $r$ that he assigns to the state $\theta=a$. A random number $z\sim U[0,1]$ is drawn. If $r\geq z$, the subject wins a fixed reward $R'>0$ if indeed $\theta=a$; if $r<z$, the subject wins the reward $R'$ with probability $z$.}

As is well known, the BDM mechanism makes it optimal for the subject to report his true belief $q$. To check whether it distorts incentives in the original decision problem, consider the payoff the subject receives from the mechanism. Given that he reports $r=q$, his subjective expected payoff from the mechanism is
\begin{equation}\label{eq:BDM_value}
R'\left(\int_0^q q dz + \int_q^1 zdz\right) = \frac{R'}{2} \left( 1 + q^2\right).
\end{equation}
Notice that this expression is increasing in $q$. Thus the subject maximizes his expected payment in the mechanism by choosing the action $a$ that maximizes his subjective probability that $\theta = a$---which is exactly the action that maximizes his expected payoff in the original decision problem. Therefore, the optimal action in the problem that combines the original decision problem and the belief elicitation mechanism is the same as when facing the original problem in isolation, as needed for incentivizability.

This argument relies on two key features of the mechanism. First, the subject's expected payoff from the mechanism is increasing in $q$. More generally, this value should be monotone in the subjective expectation of $X$. The use of the BDM mechanism ensures that this property holds for each $a$.\footnote{This property does not necessarily hold for other mechanisms that incentivize truthful reports. For example, the quadratic binarized scoring rule is not monotone.} Moreover, the expected payoff from the mechanism depends only on $q$, not on the choice of action $a$. More generally, the use of BDM ensures that this property holds after some suitable renormalization of each $X(a;\cdot)$ by a non-zero affine transformation.

Second, the original decision problem is such that the optimal action is the one that maximizes $q$; this feature results from $X$ being proportional to $u$. More generally, following a suitable renormalization, the optimal action $a$ in the original decision problem should be the one that maximizes the expectation of $X(a;\cdot)$. This property holds under a condition that we call ``alignment'' that extends beyond proportionality of $X$ and $u$ to a larger class of transformations.

For the ex post optimality question, the latter condition fails in many decision problems. Examples \ref{ex: second order beliefs} and \ref{ex: complete adjacency thm} describe two decision problems in which the utility function is more complicated than the one above and is not aligned with $X$. In these examples, our necessary conditions imply that the ex post optimality question cannot be incentivized.

On the other hand, the expected regret question is \emph{always} aligned with the utility: $-X$ differs from $u$ by a constant that depends only on $\theta$. This question is therefore incentivizable in every decision problem, and can be incentivized using a BDM mechanism. In this case, the first step is to take an affine transformation of $X$ to turn it into a question $\hat X$ that takes values in $[0,1]$ and that is increasing as $u$ increases. A random number $z\sim U[0,1]$ is then compared to the subject's report $r$; if $r<z$, the subject wins a reward $R'>0$ with probability $z$, and if $r\geq z$, the subject wins the reward $R'$ with probability $\hat X (a,\theta)$.

Now suppose the decision problem is a test comprising $I$ multiple choice questions, with the subject receiving a reward for each question he answers correctly. Formally, actions take the form $a=(a_1,\dots,a_I)\in A^I$ and states the form $\theta=(\theta_1,\dots,\theta_I)\in \Theta^I$, with $A=\Theta$. The payoff in the decision problem is $U(a,\theta)=\frac{1}{I}\sum_{i=1}^I u(a_i,\theta_i)$, where $u$ is defined by \eqref{eq:state_matching_payoff}.

Following the test, the experimenter would like to learn some statistic of the subject's belief regarding his performance. For example, the experimenter may want to elicit the subject's subjective expected score on the test. This corresponds to the question $X(a;\theta)=U(a,\theta)/R$, which is incentivizable using a BDM mechanism in essentially the same way as in the case of the expected regret question above.

On the other hand, our results imply that the experimenter cannot incentivize the subject to truthfully report the probability he assigns to achieving a score above a fixed threshold $\zeta\in \{0,\dots,I-1\}$ without distorting incentives in the decision problem. For example, suppose the experimenter uses a BDM mechanism to elicit this belief. Because the value associated with this mechanism is increasing in the belief, the subject maximizes his payoff at the belief elicitation stage by maximizing the probability that his score exceeds $\zeta$. This objective can be in conflict with maximizing the probability of answering each question correctly if the subject does not believe $\theta_i$ is independent across $i$. For a particularly simple (if unrealistic) example, suppose the test asks the same question four times and there are four possible answers (that is, $I=|A|=4)$, and take $\zeta=0$. The subject can guarantee a score of $1$ by providing a different answer to each question; since $1>\zeta$, this strategy combined with a reported belief that his score exceeds $\zeta$ with probability $1$ gives the largest possible payoff at the belief elicitation stage. The strategy that maximizes the subject's expected payoff in the decision problem alone is to provide the same answer---the one that is most likely correct---to all four multiple choice questions. If the subject is not sufficiently confident about the correct answer, the former strategy yields a higher overall expected payoff than the latter does.

Some more complicated belief elicitation questions \emph{are} incentivizable. Suppose the experimenter wants to elicit the subject's belief about his expected improvement across two parts of the test. For example, take $I$ to be even and let 
\[
X(a;\theta) = \sum_{i=I/2+1}^I \mathbb{1}(a_i=\theta_i) - \sum_{i=1}^{I/2} \mathbb{1}(a_i=\theta_i).
\]
This question corresponds to asking the subject his expectation of the increase in his raw score from the first half of the test to the second. The previous logic does not apply directly here because increasing $u(a_i,\theta_i)$ increases $X$ for $i>I/2$ but decreases it for $i\leq I/2$.

And yet this question is incentivizable using a BDM mechanism, with the caveat that the incentives in the belief elicitation task cannot be too strong relative to those in the decision problem. To implement this mechanism, we first rescale $X$ according to $\hat X(a;\theta) = 1/2 + X(a;\theta)/I$; doing so ensures that it takes values in $[0,1]$. By \eqref{eq:BDM_value}, reporting the expectation of $X$ truthfully gives an expected payoff of $R'(1+\E_p [\hat X(a,\cdot)]^2)/2$.

Now suppose the subject is rewarded for the decision problem with probability $\alpha$ and for the belief elicitation problem with the remaining probability $1-\alpha$, where $\alpha\in (0,1)$. Notice that it is possible for the subject to increase his expected payment from the belief elicitation mechanism by choosing an action profile that is suboptimal in the decision problem, for example by intentionally reducing his expected score in the first half of the test to increase his expected improvement. However, if he chooses an action profile that lowers his expected payoff in the decision problem by some $\Delta>0$ relative to the optimal profile $a^*$, doing so will change the value of $\hat X$ by at most $\Delta$, and therefore increases the expected payoff at the belief elicitation stage by at most a fixed multiple of $\Delta$. If the incentives in the decision problem are strong enough relative to those at the belief elicitation stage, it is therefore optimal for the subject to choose $a^*$. In fact, one can show that as long as $\alpha R\geq (1-\alpha)R'$, there is no distortion in incentives in the decision problem.

The same logic applies to all questions $X$ that are proportional to linear combinations of the payoffs $u(a_i,\theta_i)$, and to much more general decision problems in which the payoff is the sum of payoffs across $i$, with the action sets, state spaces, and payoffs possibly varying with $i$.

\section{Sufficient conditions}\label{sec: sufficient conditions}

We begin by identifying simple sufficient conditions under which a question profile $X$ is incentivizable. In Section \ref{sec: complete}, we show that these conditions are also necessary in some natural applications.

\begin{defn}
Questions $\vt{X},\vt{Y}\in\mathbb{R}^\Theta$ are \emph{equivalent} if there exist $\gamma\in\R\setminus\{0\}$ and $\kappa\in\R$ such that
\[
\vt{X}=\gamma\vt{Y}+\kappa \vt{1},
\]
where $\vt{1}\in \R^\Theta$ is the vector of all ones. We say that question profiles $X$ and $Y$ are equivalent if $\vt{X}(a)$ and $\vt{Y}(a)$ are equivalent for all $a\in A$.
\end{defn}

Given two equivalent questions, for any given belief, knowing the expectation of one of them is sufficient to calculate the expectation of the other (indeed, one can show that this property defines equivalence). Switching from one question to the other therefore amounts to shifting and rescaling the subject's reports, and does not affect the information the researcher would obtain from truthful responses.

\begin{defn}\label{def:aligned}
A question profile $X$ is \emph{aligned with $u$ on $B\subseteq A$} if there exists $\vt{d}\in\R^\Theta$ such that either $\vt{X}(a)$ is equivalent to the question $\vt{u}\left(a\right)+\vt{d}$ for all $a\in B$, or $\vt{X}(a)$ is equivalent to the question $\vt{d}$ for all $a\in B$. We say that $X$ is \emph{nontrivially aligned with $u$ on $B$} in the former case, and \emph{trivially} so in the latter case. If $B=A$, we say simply that $X$ is \emph{aligned with $u$} (and similarly with the (non)-trivial qualifier).
\end{defn}


Relative to the question profile $X=u$ that asks the subject about the expected utility from his chosen action, questions aligned with $u$ allow for three changes. First, a vector $\vt{d}$ may be added to payoffs. Since $\vt{d}$ is independent of $a$, this change has no effect on the $u$-optimal action. Then, for each $a$, the question $\vt{X}(a)$ can be rescaled by a (nonzero) constant $\gamma(a)$ and translated by another constant $\kappa(a)$ uniformly across $\theta$. These changes make each question $\vt{X}(a)$ equivalent to the question $\vt{u}(a)+\vt{d}$. The case of trivial alignment can be viewed as a limit of these operations as $\vt{d}$ is scaled up and $\gamma(a)$ scaled down by the same constant, causing the $\vt{u}(a)$ term to vanish.

\marcin{Can we add that the researcher asking question $u+d$ can always compute any other question? So, in some sense, $d$ is the only relevant degree of freedom. 
Maybe we can have a notion of equivalent questions (like we have a notion of equivalent decision problems) and then the definition of aligned would be: $X$ is equivalent to an action-free question or to a question $u+d$ for some vector $d$.}
\colin{To do later: add definition of equivalent questions. Add statements after (in?) theorems of the form in Marcin's comment.}

Note that for $X$ aligned with $u$, the parameters $\gamma$, $\kappa$, and $\vt{d}$ are not uniquely determined in general.


\begin{prop}
\label{Observation 0}
If $X$ is aligned with $u$, then it is incentivizable.
\end{prop}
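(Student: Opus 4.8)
The plan is to construct an explicit elicitation method via the Becker--DeGroot--Marschak (BDM) scheme sketched after the statement, using alignment to ensure that the ``score'' the DM earns from reporting truthfully is itself a monotone function of her expected utility, so that the elicitation payment does not perturb the ranking of actions.

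First I would put the problem in a convenient normal form. By alignment there are $\vt d$ and, for each $a$, scalars $\gamma(a)\neq 0$, $\kappa(a)$ with $\vt X(a)=\gamma(a)\vt Z(a)+\kappa(a)\vt 1$, where $\vt Z(a)=\vt u(a)+\vt d$ in the nontrivial case and $\vt Z(a)=\vt d$ in the trivial case. Since the researcher knows $\gamma,\kappa,\vt d$, a report of $\E_p\vt X(a)$ carries the same content as a report of $\E_p\vt Z(a)$ through the invertible affine recoding $s_0=(r-\kappa(a))/\gamma(a)$, so I will design payments in terms of $s_0$ and substitute. Because $A$ and $\Theta$ are finite, $Z$ is bounded; and because there are no redundant actions (so, for $|A|\ge 2$, $Z$ cannot be constant in $(a,\theta)$), there is a global orientation-preserving affine map sending $Z$ into a fixed compact subinterval $[\ell,h]\subset(0,1)$. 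Call the result $\tilde Z$; then $\tilde Z(a)=\lambda\vt u(a)+\tilde{\vt d}$ with $\lambda>0$ in the nontrivial case and $\tilde Z=\tilde{\vt d}$ (constant in $a$) in the trivial case. Keeping $\tilde Z$ strictly inside $(0,1)$ will later prevent ties at the endpoints of the report range.

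Next I apply BDM to elicit $\E_p\tilde Z(a)$. Draw $t\sim U[0,1]$ independently; after composing the two affine recodings to send the report $r$ to $s\in[0,1]$ (clamping to the nearest endpoint if it falls outside), let the DM win the prize with probability $\tilde Z(a;\theta)$ when $s\ge t$ and with probability $t$ otherwise. Integrating over $t$ gives the deterministic payoff $B(s,a,\theta)=s\,\tilde Z(a;\theta)+\tfrac12(1-s^2)\in[0,1]$, and I set
\[
V(r,a,\theta)=\alpha\,\hat u(a;\theta)+(1-\alpha)\,B\!\left(s,a,\theta\right),
\]
where $\hat u$ is a positive affine normalization of $u$ into $[0,1]$ (which leaves $A(p)$ unchanged for every $p$) and $\alpha\in(0,1)$, so that $V\in[0,1]$. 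For each fixed $a$, the map $r\mapsto s$ is affine and strictly monotone while $B$ is strictly concave in $s$ with interior maximizer $s=\E_p\tilde Z(a)\in[\ell,h]$; hence $\E_p V(\cdot,a,\theta)$ has a unique maximizer $r^\ast$, and unwinding the two changes of variable gives $r^\ast=\gamma(a)\E_p\vt Z(a)+\kappa(a)=\E_p\vt X(a)$, the truthful report.

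Finally I check that maximizing over $a$ selects exactly $A(p)$. The per-action maximized value is $g(a)=\alpha\E_p\hat u(a)+(1-\alpha)\tfrac12\bigl(1+(\E_p\tilde Z(a))^2\bigr)$. In the nontrivial case $\E_p\tilde Z(a)=\lambda\,\E_p u(a;\cdot)+c_p$ for some $c_p$ independent of $a$, and this quantity lies in $[\ell,h]\subset(0,\infty)$; since $m\mapsto\tfrac12(1+m^2)$ is strictly increasing for $m\ge 0$ and $\hat u$ contributes a further strictly increasing term, $g$ is a strictly increasing function of $\E_p u(a;\cdot)$, so $\arg\max_a g(a)=A(p)$. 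In the trivial case the score term is constant across actions, and monotonicity comes solely from the $\alpha\hat u$ term. Combining this with the report conclusion yields $\arg\max_{(r,a)}\E_p V=\{(\E_p\vt X(a),a):a\in A(p)\}$, which is incentivizability. The genuinely delicate point---and the crux of the argument---is the monotonicity of $g$ in $\E_p u(a;\cdot)$: it is exactly alignment that collapses the a priori action-dependent score into a monotone transform of expected utility. The only routine caveats are the degenerate subcases ($|A|=1$, or a constant $\vt d$ in the trivial case), where the quantity to be reported is already pinned down and a degenerate strictly concave scoring rule suffices.
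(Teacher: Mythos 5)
Your proof is correct and takes essentially the same route as the paper's: reduce to the normal form $\vt{u}(a)+\vt{d}$ (or $\vt{d}$) by an invertible affine recoding of the report, apply a BDM construction whose optimized value is a strictly increasing function of $\E_p[u(a;\cdot)]$, and conclude that the action ranking is undistorted. The paper's version works directly with the unnormalized payoff $\left(u(a;\theta)+d(\theta)\right)(r-L)-\tfrac{1}{2}r^{2}$ rather than your $[0,1]$-normalized probability-of-prize mixture, but the argument is the same.
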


Proofs omitted from the main text may be found in the appendix.

The proposition indicates that alignment with $u$ is sufficient for incentivizability. The proof proceeds by construction using a standard Becker-DeGroot-Marschak (BDM) mechanism.\footnote{While all of our constructions employ variants of BDM, these are presumably not the only nondistortionary elicitation methods that incentivize truthful reporting of beliefs for incentivizable question profiles. Identifying the set of such elicitation methods is beyond the scope of this paper.} The idea is to first replace each question with an equivalent one to make it of the form $\vt{X}\left(a\right)=\vt{u}\left(a\right)+\vt{d}$ for all $a$, or $\vt{X}(a)=\vt{d}$ for all $a$. Let $[L,M]$ be an interval containing every value of $X(a;\theta)$. After learning the subject's report, $r$, of his expectation of $\vt{X}(a)$, the researcher draws a number $x$ uniformly from $[L,M]$. The subject receives a fixed prize with probability $(X(a;\theta)-L)/(M-L)$ if $r>x$, and with probability $(x-L)/(M-L)$ otherwise. By standard arguments, this mechanism provides strict incentives for the subject to truthfully report his expectation of $\vt{X}(a)$. Moreover, given that the subject reports truthfully, his expected value from the mechanism is an increasing function of his expectation of $\vt{X}(a)$. If $\vt{X}\left(a\right)=\vt{u}\left(a\right)+\vt{d}$, it follows that, to maximize the expected payoff from the mechanism, the subject should choose an action $a$ that maximizes the expectation of $\vt{u}(a)$, as needed for incentivizability. If $\vt{X}(a)=\vt{d}$, the expected payoff from the mechanism does not depend on the action choice; an overall payoff $V$ equal to $u$ plus the payoff obtained from the BDM mechanism ensures incentivizability.

For a researcher interested in eliciting a measure of the subject's confidence in his choice of action, an immediate implication of the proposition, captured in the following corollary, is that it is possible to elicit the subject's expected regret without distorting his decisions.

\begin{cor}\label{cor: En Hua}
For any decision problem, the question about regret in equation \eqref{ex: En Hua} 
is incentivizable, as is any question profile that does not depend on the chosen action. 
\end{cor}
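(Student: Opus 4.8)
The plan is to derive Corollary \ref{cor: En Hua} directly from Proposition \ref{Observation 0} by checking that each of the two claimed question profiles is aligned with $u$, so that incentivizability follows immediately. The only work is to verify the alignment condition in each case; no new mechanism needs to be constructed, since the BDM construction behind Proposition \ref{Observation 0} already handles every aligned question.

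First I would treat the regret question. Here $X(a;\theta) = \max_{b\in A} u(b;\theta) - u(a;\theta)$. The key observation is that the term $\max_{b\in A} u(b;\theta)$ does not depend on the action $a$: it is a fixed vector in $\R^\Theta$, call it $\vt{d}$ with $d(\theta) = \max_{b} u(b;\theta)$. Thus $\vt{X}(a) = \vt{d} - \vt{u}(a) = (-1)\bigl(\vt{u}(a) + (-\vt{d})\bigr) + \vt{0}$. Taking the alignment vector to be $-\vt{d}$, and the scaling $\gamma(a) = -1$, $\kappa(a)=0$ for every $a$, we see that $\vt{X}(a)$ is equivalent to $\vt{u}(a) + (-\vt{d})$ for all $a$. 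Hence $X$ is (nontrivially) aligned with $u$, and Proposition \ref{Observation 0} gives incentivizability.

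Next I would treat a question profile $X$ that does not depend on the chosen action, i.e., $\vt{X}(a) = \vt{X}(a')$ for all $a,a'$; write this common vector as $\vt{d}$. Then trivially $\vt{X}(a) = 1\cdot\vt{d} + 0\cdot\vt{1}$ is equivalent to the constant question $\vt{d}$ for every $a$, which is exactly the trivial-alignment case in the definition of alignment with $u$. Again Proposition \ref{Observation 0} applies.

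I do not anticipate a genuine obstacle: the entire content of the corollary is recognizing that in both cases the action-dependent and action-independent pieces of $X$ separate in the right way to match the alignment template $\gamma(a)\vt{u}(a) + \kappa(a)\vt{1} + \vt{d}$. The one point worth stating carefully is that the regret question's $\max_b u(b;\theta)$ term is constant in $a$, which is precisely what lets it be absorbed into the alignment vector $\vt{d}$ rather than distorting the action-dependence; this is the step a reader might otherwise overlook.
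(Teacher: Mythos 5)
Your proof is correct and matches the paper's (implicit) argument: the paper treats the corollary as an immediate consequence of Proposition \ref{Observation 0}, and your verification that the regret question is nontrivially aligned with $u$ (via $\gamma(a)=-1$ and the action-independent vector $-\max_b u(b;\cdot)$) and that an action-independent profile is trivially aligned is exactly the intended reasoning.
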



\section{Necessary conditions}\label{sec: necessary conditions}

We now identify simple necessary conditions for question profiles to be incentivizable. Since such questions must not distort incentives for the action choice in the decision problem, it is natural to focus on beliefs where the subject is indifferent between two actions. Any distortion that arises at a point of indifference implies that there are also distortions in an open set around that point.

Say that two actions $a,b\in A$ are \emph{adjacent} if there is a belief $p\in\Delta (\Theta)$ such that $A(p)=\{a,b\}$, that is, at belief $p$, $a$ and $b$ are both $u$-optimal and there is no other $u$-optimal action. 

Our necessary conditions make use of an observation about the structure of the \emph{value of information}
\[V^*(p)=\max_{a,r}\E_pV(a,r,\cdot);\]
$V^*(p)$ is the payoff associated with any optimal pair $(a,r)$ at belief $p$. Note that $V^*(p)$ is the upper envelope of the affine functions $\E_pV(a,r,\cdot)$ across $(a,r)$, and is therefore convex. Moreover, given beliefs $p\neq p'$, if there exists some $(a,r)$ that is optimal at both $p$ and $p'$, then $V^*$ is affine along the line segment $\overline{pp'}$ connecting $p$ to $p'$. On the other hand, if for some $\alpha\in (0,1)$ there is a pair $(a,r)$ that is optimal at $q=\alpha p +(1-\alpha) p'$ but is not optimal at one of $p$ or $p'$, then the value of information is not affine along $\overline{pp'}$.\footnote{To see why, first suppose $(a,r)$ is optimal at $p$ and $p'$. Since $V^*$ is convex, for any convex combination $q=\alpha p+(1-\alpha)p'$, we have $\E_qV(a,r,\cdot)\leq V^*(q)\leq \alpha V^*(p)+(1-\alpha)V^*(p')=\alpha \E_pV(a,r,\cdot)+(1-\alpha)\E_{p'}V(a,r,\cdot)=\E_qV(a,r,\cdot)$, which implies that all inequalities hold with equality, as needed. For the converse, suppose $(a,r)$ is optimal at $q$ but not at $p$. Then $\alpha V^*(p)+(1-\alpha) V^*(p')>\alpha \E_p V(a,r,\cdot)+(1-\alpha) \E_{p'}V(a,r,\cdot)=\E_qV(a,r,\cdot)=V^*(q)$.}

Consider an elicitation method $V$ that incentivizes a question profile $X$. Let $p$ be a belief with $u$-optimal actions $A(p)=\{a,b\}$. Notice that the values associated with actions $a$ and $b$ must match at belief $p$: $\E_p V(a,r,\cdot)=\E_p V(b,s,\cdot)$ for the reports $r=\E_p X(a;\cdot)$ and $s=\E_p X(b;\cdot)$. If $(a,r)$ is also optimal at another belief $p'$ with $A(p')=\{a,b\}$, then by the above observation, the value of information is affine along the line segment $\overline{pp'}$, and $(a,r)$ is optimal everywhere along $\overline{pp'}$. Because the values must match everywhere along $\overline{pp'}$, $\E_q V(b,s(q),\cdot)$ is affine along $\overline{pp'}$ for the optimal report $s(q)$ at each belief $q$. By the above observation, it follows that $(b,s)$ is optimal everywhere on $\overline{pp'}$.

These observations impose strong conditions on how the questions $\vt{X}(a)$ and $\vt{X}(b)$ relate to one another for adjacent actions $a$ and $b$. Formalizing this idea requires two additional pieces of notation. First, for any vector $\vt{v}\in\R^{\Theta}$, let 
$\bar{\vt{v}}=\vt{v}-\frac{1}{|\Theta|}\sum_{\theta'\in\Theta} v(\theta')\vt{1}$
be the projection of $\vt{v}$ onto the hyperplane of vectors whose coordinates sum to $0$. 
Second, let $\Delta_a^b=\bar{\vt{u}}(b)-\bar{\vt{u}}(a)$ be the payoff difference vector. 

\begin{lem}[Adjacency Lemma]\label{lem: adjacency}
Suppose $X$ is incentivizable. Then it is aligned with $u$ on $\{a,b\}$ for every pair of adjacent actions $a$ and $b$. Equivalently, for each pair of adjacent actions $a$ and $b$, there exist $\rho\in\R$ and $\sigma\in\R\setminus\{0\}$ such that 
\[
\bar{\vt{X}}\left(b\right)=\rho\Delta_a^b+\sigma\bar{\vt{X}}\left(a\right); 
\]
if $\bar{\vt{X}}(a)$ or $\bar{\vt{X}}(b)$ is collinear with $\Delta_a^b$, then we can take $\rho\neq 0$.\footnote{Recall that two vectors $\vt{u},\vt{v}\in\R^\Theta$ are \textit{collinear} if there exists $\alpha\neq0$ such that $\vt{v}=\alpha \vt{u}$.}
\end{lem}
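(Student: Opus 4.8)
The statement has two parts: the equivalence of ``aligned with $u$ on $\{a,b\}$'' with the displayed linear relation, and the relation itself. The first part is bookkeeping: applying the bar projection to $\vt X(a)=\gamma(a)(\vt u(a)+\vt d)+\kappa(a)\vt1$ kills the $\vt1$-terms and, after eliminating $\bar{\vt d}$ between the equations for $a$ and $b$, yields $\bar{\vt X}(b)=\gamma(b)\Delta_a^b+\frac{\gamma(b)}{\gamma(a)}\bar{\vt X}(a)$ in the nontrivial case (so $\rho=\gamma(b)\neq0$, $\sigma=\gamma(b)/\gamma(a)\neq0$) and $\bar{\vt X}(b)=\frac{\gamma(b)}{\gamma(a)}\bar{\vt X}(a)$ in the trivial case ($\rho=0$); conversely one solves for $\vt d,\gamma,\kappa$ given any such relation with $\sigma\neq0$. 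So I would reduce to proving the displayed relation. Fix an elicitation method $V$ incentivizing $X$, write $\vt V_a(r)=(V(r,a,\theta))_\theta$, and set $\Phi_a(p)=\max_r\langle p,\vt V_a(r)\rangle$ (convex in $p$) and $\Phi(p)=\max_a\Phi_a(p)$. Incentivizability says that for $a\in A(p)$ the report maximizing $\langle p,\vt V_a(\cdot)\rangle$ is the unique value $\langle p,\vt X(a)\rangle$, and $\Phi_a(p)=\Phi(p)$.

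For an adjacent pair $a,b$, work on the relative interior of $F_{ab}=\{p:A(p)=\{a,b\}\}$, an open piece of the indifference hyperplane with tangent space $T=\{w:\langle w,\vt1\rangle=0,\ \langle w,\Delta_a^b\rangle=0\}$. At each such $p$ the best report of $a$ (and of $b$) is unique, so $\Phi_a,\Phi_b$ are differentiable there with $\nabla\Phi_c(p)=\vt V_c(\langle p,\vt X(c)\rangle)$. Since $\Phi_a\equiv\Phi_b$ on $F_{ab}$, their gradients agree on $T$, whence $\nabla\Phi_a(p)-\nabla\Phi_b(p)\in T^\perp=\mathrm{span}(\vt1,\Delta_a^b)$; projecting,
\[
\overline{\vt V_a(s_a(p))}-\overline{\vt V_b(s_b(p))}=\lambda(p)\,\Delta_a^b,\qquad s_c(p):=\langle p,\vt X(c)\rangle,
\]
for all $p$ in the relative interior of $F_{ab}$. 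Call this $(\dagger)$.

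The crux --- and the step I expect to be the main obstacle --- is to eliminate the unknown scores from $(\dagger)$ in favour of $\bar{\vt X}$. I would prove that on the open set $U_c$ where $c$ is uniquely optimal one has $\overline{\vt V_c(s)}=\beta_c(s)\,\bar{\vt X}(c)+\vt k_c$ for a scalar $\beta_c$ and constant vector $\vt k_c$, with $s\mapsto\overline{\vt V_c(s)}$ injective. This is the characterization of strictly proper scores for the mean: because the optimal report depends on $p$ only through $s=\langle p,\vt X(c)\rangle$, the gradient $\nabla\Phi_c$ is constant on each level set $\{s=\text{const}\}$, so $\Phi_c$ is affine along $T_c=\{\langle\cdot,\vt1\rangle=0,\ \langle\cdot,\vt X(c)\rangle=0\}$ while $\langle\bar{\vt X}(c),\nabla\Phi_c\rangle$ depends only on $s$; commuting these two directional derivatives forces every slope of $\Phi_c$ along $T_c$ to be independent of $s$, i.e.\ increments of $\overline{\vt V_c}$ lie in $\mathrm{span}(\bar{\vt X}(c))$, and injectivity follows because two reports whose scores differ by a multiple of $\vt1$ cannot both be strictly optimal. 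Making ``commuting derivatives'' rigorous (via second-order differentiability of convex functions, or by citing Savage's characterization) and pushing the representation from $U_c$ to the boundary beliefs $F_{ab}\subseteq\partial U_a\cap\partial U_b$ by continuity of the convex gradient is the delicate part.

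Finally I would difference $(\dagger)$ at two beliefs $p,p'$ in the relative interior of $F_{ab}$ and substitute the representation to obtain
\[
\xi_a\,\bar{\vt X}(a)-\xi_b\,\bar{\vt X}(b)=\mu\,\Delta_a^b,
\]
with $\xi_c=\beta_c(s_c(p))-\beta_c(s_c(p'))\neq0$ exactly when $\langle p-p',\bar{\vt X}(c)\rangle\neq0$ (the displacement $p-p'$ ranges over $T$). Call this $(\star)$. If $|\Theta|=2$ then $T=\{0\}$ and $\bar{\vt X}(a),\bar{\vt X}(b),\Delta_a^b$ are all collinear, so the relation and the refinement hold by an elementary choice of $\rho,\sigma$; assume $\dim T\ge1$. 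Let $\pi_c$ be the projection of $\bar{\vt X}(c)$ onto $T$, so $\pi_c=0$ iff $\bar{\vt X}(c)\in\mathrm{span}(\Delta_a^b)$. Choosing the displacement $w=p-p'\in T$ suitably, $(\star)$ first rules out $\pi_a=0\neq\pi_b$: a $w$ with $\langle w,\bar{\vt X}(b)\rangle\neq0$ forces $\xi_a=0\neq\xi_b$ and hence $\bar{\vt X}(b)\in\mathrm{span}(\Delta_a^b)$, a contradiction. When both $\pi_c\neq0$, a $w$ avoiding the two proper subspaces $\{\langle\cdot,\bar{\vt X}(c)\rangle=0\}$ of $T$ gives $\xi_a,\xi_b\neq0$ and $\bar{\vt X}(b)=(\xi_a/\xi_b)\bar{\vt X}(a)-(\mu/\xi_b)\Delta_a^b$ with $\sigma=\xi_a/\xi_b\neq0$; here neither vector is collinear with $\Delta_a^b$. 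When both $\pi_c=0$, we have $\bar{\vt X}(a)=\alpha\Delta_a^b$, $\bar{\vt X}(b)=\beta\Delta_a^b$, and $\beta=\rho+\sigma\alpha$ is solvable with $\sigma\neq0$ and, whenever $\alpha\neq0$ or $\beta\neq0$, with $\rho\neq0$ as well --- exactly the collinear refinement. These cases are exhaustive.
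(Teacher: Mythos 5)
Your endgame (the case analysis on collinearity with $\Delta_a^b$ and the equivalence between the displayed relation and alignment on $\{a,b\}$) matches the paper's and is fine, but the two analytic steps at the core of your route are not established, and the first is not justified as stated. You claim that uniqueness of the optimal report makes $\Phi_c(p)=\max_r\langle p,\vt{V}_c(r)\rangle$ differentiable with $\nabla\Phi_c(p)=\vt{V}_c(s_c(p))$. The model puts no continuity on $V$ as a function of $r$: $\Phi_c$ is a supremum of affine functions over a non-compact index set, and a unique maximizer does not preclude a nontrivial subdifferential generated by limits of $\varepsilon$-active slopes $\vt{V}_c(r_n)$ along reports $r_n$ that are never optimal. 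Without differentiability, $\Phi_a\equiv\Phi_b$ on $F_{ab}$ only yields equality of the projections of the two subdifferentials onto $T$ as sets, which does not deliver your $(\dagger)$ for the particular elements $\vt{V}_a(s_a(p))$ and $\vt{V}_b(s_b(p))$. Second, the representation $\overline{\vt{V}_c(s)}=\beta_c(s)\bar{\vt{X}}(c)+\vt{k}_c$ --- which you yourself identify as the crux --- is only gestured at via a ``commuting derivatives'' heuristic that presupposes still more regularity, and its transport from $U_c$ to the boundary face $F_{ab}$ is asserted rather than proved. As written, the proposal defers its main step instead of closing it.

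The paper's proof avoids all of this machinery. It fixes $p_0$ with $a,b$ both optimal and perturbs to $p_1$ with $p_1-p_0$ orthogonal to $\vt{1}$, $\vt{u}(a)-\vt{u}(b)$, and $\vt{X}(a)$; then the pair $(a,r)$ with $r=\E_{p_0}[X(a;\cdot)]$ is optimal along the whole segment, so the value is affine there, and convexity of $\alpha\mapsto\max_s\E_{p_\alpha}[V(s,b,\cdot)]$ forces a single report $r'$ to be optimal for $b$ along the segment, whence $p_1-p_0\perp\vt{X}(b)$. Letting $p_1$ vary gives $\vt{X}(b)\in\mathrm{span}(\vt{1},\vt{u}(a)-\vt{u}(b),\vt{X}(a))$ directly, with no regularity on $V$ and no structure theorem for mean-eliciting scores. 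If you want to keep your route you would need to (i) rework $(\dagger)$ in terms of subdifferentials and justify selecting the right elements, and (ii) actually prove the Savage-type representation on the restricted belief set $\{p:c\in A(p)\}$ and at its boundary; the segment argument is both shorter and assumption-free.
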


The Adjacency Lemma provides a key tool in testing whether a question profile is incentivizable: it identifies a restriction on the values of the question at each pair of adjacent actions $a$ and $b$. 

The proof of the lemma has two steps. The first step uses the above observations about the value of information to show that, for adjacent actions $a$ and $b$ and beliefs $p$ and $p'$ at which $a$ and $b$ are $u$-optimal, if the same report $r$ is optimal at beliefs $p$ and $p'$ following action $a$, then there must also be a report $s$ that is optimal at both $p$ and $p'$ following action $b$. 
For the second step, we employ a linear algebra argument to show that the vector $\vt{X}(b)$ must belong to the linear space spanned by the vectors $\vt{u}(b)-\vt{u}(a)$, $\vt{X}(a)$, and $\vt{1}$ (the latter because we apply the linear condition to the space of beliefs, which satisfy the condition $\sum p(\theta)=1$). The result then follows from straightforward algebra.

Adjacency of actions $a$ and $b$ implies that there is a $\left(\left|\Theta\right|-2\right)$-dimensional set of beliefs $p\in\Delta(\Theta)$ for which actions $a$ and $b$ are uniquely $u$-optimal. Since the Adjacency Lemma is based on arguments restricted to this space, it has no bite when $|\Theta|=2$---indeed, when $|\Theta|=2$, every question profile is aligned with $u$ on each pair of adjacent actions---and limited bite when $|\Theta|=3$.  We discuss this issue in Section \ref{sec: Three states}.


\marcin{Colin - can you check computations in this Example?}\colin{I double checked and I think it's correct.}
\begin{example}[Second-order beliefs]\label{ex: second order beliefs}
The following example is a slightly simplified version of a belief-updating experiment from \citet{enke2023cognitive}; the same conclusions apply to their original experiment.

The decision problem involves forecasting a binary event. The action set and state space $A=\Theta=\{0,1/n,\dots,1\}$ consist of (discretized) probabilities that the event occurs, where $n\geq 4$. One can think of $\theta$ as the ``true'' probability given the available information, which is known to the researcher but about which the subject may be uncertain (for example because he has doubts about how to update his beliefs in light of the information he observes). This uncertainty is captured by the belief $p\in \Delta(\Theta)$.

The subject is rewarded more for forecasts that are closer to the state according to the payoff function $u(a;\theta)=-(a-\theta)^2$. The $u$-optimal actions are those that are closest to $\E_p[\theta]$. The adjacency graph therefore forms a line: $a_i$ and $a_j$ are adjacent if and only if $|a_i-a_j|=1/n$. 


For some $x\in [0,1/2]$, the researcher wishes to elicit the subject's confidence in his choice by asking how likely he believes it is that his action is within $x$ of the true value of $\theta$, as in equation \eqref{ex: within x}. 

To check whether $X$ is incentivizable, we use Lemma \ref{lem: adjacency}. Note that the coordinates of the vector $\vt{\bar{X}}(a)$ take at most two values. Hence if $a$ and $b=a+1/n$ are two adjacent actions and $\sigma\neq 0$, then the coordinates of $\vt{\bar{X}}(a)-\sigma\vt{\bar{X}}(b)\neq \vt{0}$ take at most four distinct values. 
At the same time, for each $\theta$,
\[
\Delta_a^b(\theta)=\bar{u}(b;\theta)-\bar{u}(a;\theta)=\frac{2\theta-1}{n},
\]
and hence the coordinates of $\Delta_a^b$ take on $n+1\geq 5$ different values. By the Adjacency Lemma, since $\vt{\bar{X}}(a)-\sigma\vt{\bar{X}}(b)\neq \vt{0}$ implies $\rho\neq 0$, $X$ is not incentivizable. 

\marcintwo{The Adjacency Lemma says that $\rho\neq 0 $ only if $\bar{\vt{X}}(a)$ or $\bar{\vt{X}}(b)$ is collinear with $\Delta_a^b$, which clearly is not the case here. I think the argument should be that, because $\bar{\vt{X}}(a)-\sigma$ takes between 1 or 4 values, and $\rho\Delta_a^B$ takes either $n$ or $0$ (in case $\rho=0$) values, we get a contradiction. In any case, the example seems to be to complicated to explain the value of information argument. } \colintwo{The fact that $\rho\neq 0$ should be a consequence of the Adjacency Lemma and the observation that $\vt{\bar{X}}(a)-\sigma\vt{\bar{X}}(b)\neq \vt{0}$ noted above. I guess this could be more clear in the text. On your last comment, the purpose of this example is to illustrate how the Adjacency Lemma is useful, not to explain the argument behind it.}


This result is not specific to the quadratic payoffs in the decision problem: the same conclusion applies if the payoff is replaced with any strictly proper scoring rule (one for which reporting an action close to the expectation of $\theta$ is $u$-optimal). 

Faced with this result, what should the researcher do? One option is to use a different measure of decision confidence that \emph{is} incentivizable. For instance, according to Corollary \ref{cor: En Hua}, the expected regret question of equation $\eqref{ex: En Hua}$ is incentivizable in every decision problem.

\end{example}




There is a gap between the necessary condition for incentivizability from the Adjacency Lemma ---namely, alignment on all adjacent pairs---and the sufficient condition from Proposition \ref{Observation 0}---namely, alignment on the full set of actions. In the rest of the paper, we show how to close this gap in three canonical classes of decision problems. In one of those classes, alignment on adjacent pairs is sufficient (and necessary); in another, alignment on the full set of actions is necessary (and sufficient); and in the third, a weaker form of alignment on all actions is both necessary and sufficient.

Our approach relies on the following observation. If the decision problem has three actions $a$, $b$, and $c$ such that $\{a,b\}$ and $\{b,c\}$ are both adjacent pairs, the restrictions implied by the Adjacency Lemma for these two pairs may interact with each other, leading to additional information about which questions are incentivizable. We therefore look to analyze the restrictions across all adjacent pairs simultaneously.

The \emph{adjacency graph} is the undirected graph with vertices $A$ and edges consisting of the adjacent pairs $\{a,b\}$. Note that, since there are no redundant or dominated actions, the adjacency graph is connected for every decision problem. A basic intuition across the next three sections is that the more edges there are in the adjacency graph, the more powerful are the restrictions imposed by the Adjacency Lemma.

\section{Adjacency trees}\label{sec: trees}

We first consider the case in which the adjacency graph is a tree (i.e., has no cycles). Trees naturally arise in problems with ordered actions, as in Example \ref{ex: second order beliefs}. Example \ref{ex: star} below describes a simple decision problem in which the adjacency graph forms a star. In this case, as the following result shows, alignment with $u$ on the full set of actions is not necessary for incentivizability.

\begin{thm}\label{prop: tree}
Suppose the adjacency graph is a tree. Then $X$ is incentivizable if and only if it is aligned with $u$ on every pair of adjacent actions.
\end{thm}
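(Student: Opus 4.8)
The plan is to prove both directions. Necessity (the ``only if'') is already handed to us by the Adjacency Lemma: if $X$ is incentivizable, then it is aligned with $u$ on $\{a,b\}$ for every adjacent pair, which is exactly the right-hand condition. So the substance is the converse, and the tree structure is what makes it work. The key observation is that in the sufficiency argument of Proposition \ref{Observation 0}, nontrivial alignment on the \emph{whole} set of actions produces a single vector $\vt{d}$ and a BDM mechanism built around the renormalized questions $\vt{X}(a)\sim\vt{u}(a)+\vt{d}$. Here we only have alignment \emph{edge by edge}, so the parameters $(\gamma,\kappa,\vt{d})$ governing the alignment may differ from one adjacent pair to another. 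The task is to show that, on a tree, we can stitch these local alignments into a \emph{global} incentive scheme even though there may be no single $\vt{d}$ that works everywhere.

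The approach I would take is to construct the elicitation method directly, action by action, exploiting the acyclicity of the graph. First, for each action $a$, the question $\vt{X}(a)$ is (up to the equivalence of the definition) either of the form $\vt{u}(a)+\vt{d}_a$ or action-free; after renormalizing via the affine transformation $r\mapsto(r-\kappa(a))/\gamma(a)$ I may assume the reported quantity following $a$ is $\E_p[\vt{u}(a)+\vt{d}_a]$ (handling the trivially-aligned actions as a limiting/degenerate case as discussed after the alignment definition). The freedom I want to use is that incentivizability is insensitive to adding an action-dependent \emph{constant} to the elicitation payoff: since the DM's report does not change the conditional expectation that determines the optimal action, I can add any constant $c(a)$ to the payoff following action $a$ without affecting truthful reporting, and this constant shifts the value function $\max_r \E_p V(r,a,\cdot)$ by $c(a)$. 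Choosing the $c(a)$ amounts to choosing, for each action, an additive shift of the per-action value function. The nondistortion requirement is that at every point of indifference between adjacent $a$ and $b$ in the \emph{decision} problem, the DM remains indifferent (or the original strict preference is preserved) in the \emph{combined} problem.

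The crux is therefore a consistency condition across edges: along each edge $\{a,b\}$, alignment on that pair pins down the relationship $\bar{\vt{X}}(b)=\rho\,\Delta_a^b+\sigma\,\bar{\vt{X}}(a)$, which tells me exactly how the value functions for $a$ and $b$ compare on the shared indifference set, up to one free additive constant per action. Each edge imposes one linear constraint relating $c(a)$ and $c(b)$ (matching the two affine value functions along the $(|\Theta|-2)$-dimensional indifference face so that neither action is distorted). This gives a system of one equation per edge in the unknowns $\{c(a)\}_{a\in A}$. Because the adjacency graph is a \emph{tree}, this system has no cycles, so it is always solvable: fix $c(a_0)$ arbitrarily at a root, then propagate the constants outward along the tree, each edge determining the next constant uniquely from its neighbour. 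Acyclicity is precisely what guarantees there is no overdetermination — on a graph with a cycle the constants would have to satisfy a loop consistency condition that the local alignments need not meet, which is exactly why stronger conditions are needed in the complete and product cases. I expect the main obstacle to be the bookkeeping of this propagation: verifying that the constructed per-action value functions, once shifted by the tree-consistent constants, assemble into a single convex (piecewise-affine) value function whose domains of affinity coincide with the original optimal regions $A(p)$, so that the BDM-type reward achieving these value functions both elicits the correct expectation following each action and leaves the decision-problem incentives exactly undistorted. The trivially-aligned actions and the sign/scaling normalizations ($\gamma(a)$ possibly negative, requiring an orientation-reversal of the BDM draw) are the fiddly technical points to handle carefully, but they do not affect the core tree-propagation argument.
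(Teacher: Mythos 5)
Your necessity direction is fine and identical to the paper's (it is exactly the statement of the Adjacency Lemma), and your overall plan for sufficiency --- glue local BDM constructions along the tree, using acyclicity to propagate normalizations from a root --- is the same strategy the paper implements via ``splitting collections'' and Proposition \ref{prop: piecewise aligned sufficient}. However, there is a genuine gap in the gluing step: a scalar constant $c(a)$ per action is not enough degrees of freedom. At an action $b$ incident to two edges $\{a,b\}$ and $\{b,c\}$, the two edges' alignments generically involve \emph{different} vectors $\vt{d}_{ab}\neq\vt{d}_{bc}$ and hence different parameters $(\gamma,\kappa)$ at $b$, so the two candidate BDM payoffs for $b$ differ by a genuinely state-dependent function (the paper's computation shows the difference contains a term proportional to $X(b;\theta)$). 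Consequently the induced value functions differ by a non-constant affine function of the belief $p$, and a uniform shift by $c(b)$ cannot make them agree on the entire $(|\Theta|-2)$-dimensional indifference face; it can only match them on a lower-dimensional slice, leaving the decision incentives distorted elsewhere on that face. The correct gluing data is a state-dependent additive term $w(\theta)\in\R^{\Theta}$ (plus a scalar) applied uniformly to a whole subtree on one side of each splitting action --- uniform so that it distorts nothing within that subtree --- chosen so that the two payoff functions coincide identically in $(r,\theta)$ at the splitting action. Your ``no cycles, so propagate from a root'' intuition is exactly right, but the unknowns live in $\R^{\Theta}$ attached to subtrees, not in $\R$ attached to actions.

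The second issue is that you flag but do not supply the global no-deviation argument, and it is not mere bookkeeping. Matching values across each edge rules out distortions between \emph{adjacent} actions, but one must still show that at a belief where the optimum lies in one branch, no action deep in another branch becomes strictly better under the glued $V$. The paper does this with a separate geometric fact (Lemma \ref{lem:Splitting action}: any segment of beliefs joining the optimality regions of the two sides of a splitting action $a_0$ must pass through a belief where $a_0$ itself is optimal), combined with the monotonicity of the BDM value in $\E_p[u(a;\cdot)]$, to derive a contradiction from a hypothetical profitable cross-branch deviation. Without some version of this argument your construction is not verified to be incentive compatible on the full action set.
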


That alignment with $u$ on adjacent pairs is necessary for incentivizability holds in general, as indicated by Lemma \ref{lem: adjacency}. 
That it is sufficient follows from Proposition \ref{prop: piecewise aligned sufficient} below, which is a generalization of Proposition \ref{Observation 0}. 


Say that action $a$ is \emph{splitting} if removing it from the adjacency graph makes the graph disconnected. 
A \emph{splitting collection} $\left\{ A_{0},\dots,A_{k}\right\}$ is a collection of subsets $A_{i}\subseteq A$ such that $\bigcup A_{i}=A$ and, for all distinct $i$ and $j$, either $A_i\cap A_j = \emptyset$ or $A_i\cap A_j =\{a\}$ for some splitting action $a$. If the adjacency graph is a tree, a splitting collection is formed by the set $\mathcal{A}=\{A_0,\dots,A_k\}$ consisting of all pairs of adjacent actions; that is, $\tilde A\in \mathcal{A}$ if and only if $\tilde A = \{a,b\}$ for some adjacent actions $a$ and $b$. 




\begin{example}\label{ex: star}
    Suppose the decision problem involves guessing the correct state, with the option of taking a safe action. The action set is $A=\Theta\cup \{a_s\}$, with payoffs $u(\theta,\theta)=1$, $u(a_s,\theta)=s$ for some $s\in(0,1)$, and $u(a,\theta)=0$ for $a\neq \theta,a_s$. 
    If $s\geq 1/2$, then the adjacency graph is a star with action $a_s$ in the centre. Letting $\{B_0,B_1\}$ be any partition of $\Theta$, $\left\{B_0\cup \{a_s\}, B_1\cup\{a_s\}\right\}$ forms a splitting collection. Another splitting collection is given by the set of pairs $\{\theta,a_s\}$ for all $\theta\in\Theta$.
\end{example}


We say that $X$ is \emph{piecewise aligned with $u$} if it is aligned with $u$ on each element of some splitting collection. Theorem \ref{prop: tree} follows from the Adjacency Lemma together with the following result.

\begin{prop}\label{prop: piecewise aligned sufficient}
If $X$ is piecewise aligned with $u$, then it is incentivizable.
\end{prop}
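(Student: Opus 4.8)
The plan is to reduce the statement to a gluing argument over the splitting collection: if $X$ is aligned with $u$ on each piece $A_i$ of a splitting collection $\{A_0,\dots,A_k\}$, then I can construct a single elicitation method $V$ that works globally by suitably combining the BDM-type mechanisms from Proposition~\ref{Observation 0}. The key structural fact I would exploit is that two pieces overlap only at a splitting action $a$, and such an $a$ separates the adjacency graph so that any belief $p$ with a given optimal action forces the DM to be ``localized'' in one piece. The main technical content is that alignment constants can be renegotiated piece by piece, but the \emph{reward scheme} must be globally consistent so that the DM never has an incentive to jump across pieces.

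First I would set up notation: on each $A_i$, alignment gives scalars $\gamma_i(a)\neq 0$, $\kappa_i(a)$ and a vector $\vt{d}_i$ such that $\vt{X}(a)=\gamma_i(a)(\vt{u}(a)+\vt{d}_i)+\kappa_i(a)\vt{1}$ (or the trivial variant). On a single piece, Proposition~\ref{Observation 0} already furnishes a BDM mechanism $V_i$ that incentivizes the restriction of $X$ to $A_i$. The difficulty is that these $V_i$ are defined only for actions in $A_i$, and they may disagree about the base payoff level assigned to a shared splitting action $a\in A_i\cap A_j$. The plan is to add action-dependent and piece-dependent constants (additive shifts that are uniform across $\theta$, hence payoff-irrelevant for the comparison within the decision problem) so that the optimal expected payoffs match up at every splitting vertex, using that a splitting action lies on the boundary between the regions of belief space governed by different pieces.

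Concretely, I would argue by induction on the number of pieces $k$, peeling off one leaf piece at a time. Because $a$ splits the graph into $B_0$ and $B_1$, any belief $p$ at which some action in $B_1\setminus\{a\}$ is optimal cannot have an action in $B_0\setminus\{a\}$ also optimal (no edges cross $a$); the only shared optimum is $a$ itself. This means I can calibrate the additive constant for the $B_1$-side mechanism so that its value at beliefs where $a$ is optimal agrees with the $B_0$-side value, and this single matching condition at the splitting action suffices to glue without creating any profitable deviation. Since the BDM mechanism already guarantees truthful reporting of $\E_p\vt{X}(a)$ conditional on action $a$ within each piece, and the reports coincide wherever an action belongs to two pieces (alignment on both pieces pins down the same expectation up to the affine transformation, which the mechanism inverts), the combined $V$ incentivizes $X$ on all of $A$.

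The main obstacle I anticipate is the gluing at splitting vertices: I must verify that after adjusting constants, the DM strictly prefers the action prescribed by the decision problem at \emph{every} belief, including beliefs where the optimal action is the splitting action $a$ shared by several pieces, and that no deviation to a neighboring piece becomes profitable at the reporting stage. This requires checking that the convexity structure of the optimal value function is preserved across the seam—equivalently, that the value function glued from affine-plus-BDM pieces remains the upper envelope with the correct active action at each $p$. I would handle this by choosing the additive constants recursively so that the values agree (not merely are comparable) at each splitting action, which forces global consistency because the adjacency graph restricted to the splitting structure is acyclic in the relevant sense. The alignment-of-constants bookkeeping is routine once the separation property is in hand; the conceptual crux is recognizing that a splitting action is the unique ``interface'' between pieces, so a single value-matching equation per interface closes the argument.
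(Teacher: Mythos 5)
Your overall strategy is the paper's: build the BDM mechanism of Proposition~\ref{Observation 0} on each piece of the splitting collection and glue the pieces at the splitting actions. But two steps, as you describe them, do not go through. First, the matching at a splitting action $a_0\in A_i\cap A_j$ cannot be achieved by additive shifts that are uniform across $\theta$. The two pieces come with different alignment data $(\gamma_i,\kappa_i,\vt{d}_i)$ and $(\gamma_j,\kappa_j,\vt{d}_j)$, and a direct computation of the difference $V_j(r,a_0,\theta)-V_i(r,a_0,\theta)$ between the two BDM payoffs leaves a residual term proportional to $X(a_0;\theta)$, which varies with $\theta$. The paper therefore (i) rescales each $V_i$ by a positive constant so that the quadratic coefficients at $a_0$ agree, and (ii) adds a \emph{state-dependent} function $w_i(\theta)$ (plus a scalar) to absorb the remaining discrepancy; such additions are still non-distortionary because they do not depend on $(r,a)$, but a $\theta$-constant shift is not enough. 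What is arranged is equality of the full payoff functions $V_i(r,a_0,\theta)=V_j(r,a_0,\theta)$ for all $r,\theta$, not merely of optimal expected payoffs at beliefs where $a_0$ is optimal.

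Second, and more importantly, your claim that ``a single value-matching equation per interface closes the argument'' is exactly the step that needs a proof, and value matching alone does not supply it. After gluing, within each piece $V$ ranks actions by expected utility, but the two pieces' value functions are \emph{different} increasing transformations of $\E_p[u(\cdot;\cdot)+d_i(\cdot)]$ (different $\vt{d}_i$, different quadratics), so one cannot directly compare an action $b\in A_1\setminus\{a_0\}$ against an action $a^*\in A_0$ at a belief $p$ where only $A_0$-actions are decision-optimal. The paper's resolution is a separate geometric fact (its Lemma~\ref{lem:Splitting action}): if $a_0$ splits the graph into $B_0$ and $B_1$, then for any beliefs $p_0,p_1$ at which actions from $B_0$ and $B_1$ respectively are optimal, $a_0$ is optimal at some convex combination. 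Given this, if $b\in A_1\setminus\{a_0\}$ were weakly better than $a_0$ under $V$ at $p_0$, the within-piece ranking would force $\E_{p_0}[u(b;\cdot)]\geq\E_{p_0}[u(a_0;\cdot)]$; taking $p_1$ where $b$ is uniquely optimal and applying the lemma produces a belief where $a_0$ is decision-optimal yet strictly worse than $b$ in expected utility---a contradiction. Your ``localization'' observation (that no belief has decision-optimal actions in both $B_0\setminus\{a_0\}$ and $B_1\setminus\{a_0\}$) concerns only the optimal actions at a single belief; it neither follows immediately from the absence of cross edges (a belief could have three or more optimal actions, which requires the perturbation argument in the paper's proof) nor delivers the expected-utility comparison between $a_0$ and \emph{suboptimal} actions of the far piece that the deviation argument actually requires.
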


To understand the main idea of the proof of this result, consider a binary splitting collection with sets of actions $A_0$ and $A_1$ and splitting action $a_0$. If actions are restricted to either $A_i$, then by the piecewise alignment assumption, the BDM construction of Proposition \ref{Observation 0} can be used to incentivize $X$. This construction gives rise to a well-defined elicitation method on the full action set if the two methods agree on $A_0\cap A_1 = \{a_0\}$. In the proof, we construct a positive affine transformation of one of the elicitation methods that ensures agreement on $a_0$. 

We show that the new elicitation method incentivizes $X$ on the union of the two sets. Indeed, by construction, given any belief $p$ at which an action $a\in A_i$ is $u$-optimal, no other action in $A_i$ leads to a higher expected payoff from the elicitation method. It remains to show that, similarly, no action $a'\in A_j$ (for $j\neq i$) does better than $a$ at $p$. To do so, we prove that if $p_0$ and $p_1$ are beliefs such that, for each $i$, some action in $A_i$ is $u$-optimal at belief $p_i$, then there exists a convex combination of $p_0$ and $p_1$ at which $a_0$ is $u$-optimal. Taking one of these beliefs to be $p$ and the other to be a belief at which $a'$ is $u$-optimal, it follows that there is a belief at which $a_0$ is $u$-optimal but not $V$-optimal, contrary to the way the method was constructed.

\section{Adjacency cycles}\label{sec: complete}

When the adjacency graph forms a tree, the incentivizable questions are those that are piecewise aligned with $u$; the relationship between $\bar{\vt{X}}(a)$ and $\bar{\vt{X}}(b)$ described in the Adjacency Lemma is both necessary and sufficient for incentivizability. Cycles in the adjacency graph impose additional restrictions: not only must the relationship in the Adjacency Lemma hold for adjacent actions, it must also be consistent all the way around each cycle.

Formally, an \textit{adjacency cycle} is a tuple $C=\left(a_{0},\dots,a_{n}\right)$ such that $a_n=a_0$ and actions $a_{i}$ and $a_{i+1}$ are adjacent for each $i=0,\dots,n-1$. We say that $n$ is the \emph{length} of the cycle $C$. We abuse notation slightly by writing $a\in C$ to mean that $a=a_i$ for some $i$. 

The additional restrictions cycles impose are related to linear independence relationships among the corresponding utility vectors. To understand how, consider a cycle $(a_0,\dots,a_n)$ and recall that the Adjacency Lemma implies that, for each $i$, there exist $\rho_i$ and $\sigma_i$ such that
\[
\bar{\vt{X}}(a_i)= \rho_i \Delta^{a_i}_{a_{i+1}} + \sigma_i \bar{\vt{X}}(a_{i+1}).
\]
Iterating these equations around the cycle starting from $a_0$ leads to an expression for $\bar{\vt{X}}(a_0)$ as a linear combination of $\bar{\vt{X}}(a_0)$ and $\Delta_{a_1}^{a_0},\dots,\Delta_{a_n}^{a_{n-1}}$. Since $\sum_{i=0}^{n-1} \Delta^{a_i}_{a_{i+1}}=0$, we can replace the $\Delta_{a_n}^{a_{n-1}}$ with $-\sum_{i=0}^{n-2} \Delta^{a_i}_{a_{i+1}}=0$. If the remaining $\Delta_{a_{i+1}}^{a_i}$ terms are linearly independent, the resulting equation implies restrictions on the coefficients in the linear combination and on the vectors $\bar{\vt{X}}(a_0)$  and $\bar{\vt{u}}(a_i)$ that do not follow directly from the Adjacency Lemma.


A common setting in which adjacency cycles appear is when the adjacency graph is complete. This is the case, for instance, in Example \ref{ex: star} when $s<1/2$. 

The main result of this section shows that, under a mild richness condition on utilities, in decision problems with complete adjacency graphs, a question is incentivizable if and only if it is aligned with $u$.\footnote{Theorem \ref{prop:complete adjacency regular} is related to Proposition 3 of \citet{morris2004best}, which discusses conditions under which two games with identical best response correspondences must have payoffs that can be obtained from each other by positive affine transformations. The proofs of the two results share similar mathematical ideas.}

\begin{thm}\label{prop:complete adjacency regular}
Suppose the adjacency graph is complete and $\left|A\right|\geq4$. 
Suppose in addition that for any four distinct actions $a,b_{0},b_{1},b_{2}$, the set of vectors
$\{\Delta_{a}^{b_{i}}\}_{i=0,1,2}$ 
is linearly independent. Then $X$ is incentivizable if and only if it is aligned with $u$. 
\end{thm}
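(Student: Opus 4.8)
The plan is to prove both directions, with the ``only if'' direction being the substantive one. The ``if'' direction is immediate: if $X$ is aligned with $u$, then Proposition \ref{Observation 0} gives incentivizability directly. So I would devote the argument to showing that incentivizability plus a complete adjacency graph (with the linear independence hypothesis) forces alignment with $u$.

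First I would apply the Adjacency Lemma to every pair of actions---legitimate here because the graph is complete, so \emph{every} pair is adjacent. This gives, for each ordered pair $(a,b)$, scalars $\rho_{ab}\in\R$ and $\sigma_{ab}\in\R\setminus\{0\}$ with
\[
\bar{\vt{X}}(b)=\rho_{ab}\Delta_a^b+\sigma_{ab}\bar{\vt{X}}(a).
\]
The goal is to show that these local relations can be reconciled globally into a single representation $\bar{\vt{X}}(a)=\gamma(a)\bar{\vt{u}}(a)+\bar{\vt{d}}$ (nontrivial case) or $\bar{\vt{X}}(a)=\gamma(a)\bar{\vt{d}}$ (trivial case), which is exactly alignment with $u$ after unprojecting. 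I expect to split into two cases according to whether the $\rho$ coefficients vanish. If all the relevant $\rho$'s are zero, then all the $\bar{\vt{X}}(a)$ are collinear with a common vector, yielding trivial alignment. Otherwise I work in the nontrivial regime, where some $\rho\neq0$.

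The core of the argument uses the linear independence hypothesis to kill the ``cycle slack'' described in the paragraph preceding the theorem. Fix a base action $a$ and three others $b_0,b_1,b_2$, so that $\{\Delta_a^{b_0},\Delta_a^{b_1},\Delta_a^{b_2}\}$ is linearly independent. The plan is to iterate the Adjacency Lemma relations around triangles (and the $3$-cycles through $a$ and pairs $b_i,b_j$) to obtain an equation of the form $\bar{\vt{X}}(a)=\kappa\,\bar{\vt{X}}(a)+\sum_i\rho_i\Delta_a^{b_i}$. Here is where I use that $|A|\geq 4$: with three independent difference vectors available, composing the cycle constraints forces a rigidity that pins down the ratios $\rho_{ab}/\sigma_{ab}$ consistently. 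Concretely, I would show that the per-edge scaling factors $\sigma$ multiply to $1$ around each triangle and that the $\rho$ coefficients must align so that $\bar{\vt{X}}(b)-\gamma(b)\bar{\vt{u}}(b)$ is the same vector $\bar{\vt{d}}$ for every $b$; linear independence is precisely what guarantees that $\kappa\neq1$ and hence that $\bar{\vt{X}}(a)$ lies in the span of the $\bar{\vt{u}}$'s rather than floating freely. Defining $\gamma(a):=\rho_{ba}/\sigma_{ba}$ (or the appropriate normalization) for a fixed reference $b$, I would verify it is consistent across all reference actions using the independence of the $\Delta$'s.

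The main obstacle I anticipate is the bookkeeping needed to propagate a \emph{single} common $\vt{d}$ and consistent $\gamma(\cdot)$ across all actions simultaneously, rather than merely pairwise. The Adjacency Lemma only constrains pairs, and the danger is that the scalars $(\rho_{ab},\sigma_{ab})$ obtained for different pairs are mutually incompatible with any global affine representation. Resolving this is exactly what the four-distinct-action linear independence assumption buys: it ensures that the overdetermined system of pairwise constraints has the structure of a genuine affine family. I would therefore organize the proof so that the independence hypothesis is invoked at the one decisive step---ruling out $\kappa=1$ and forcing the difference vectors $\bar{\vt{X}}(b)-\gamma(b)\bar{\vt{u}}(b)$ to coincide---and treat the remaining reconciliation as routine linear algebra. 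A secondary subtlety is handling the borderline collinearity cases flagged in the Adjacency Lemma (where $\bar{\vt{X}}(a)$ or $\bar{\vt{X}}(b)$ is collinear with $\Delta_a^b$), which I would dispatch separately using the lemma's guarantee that $\rho$ can then be taken nonzero.
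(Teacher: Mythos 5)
Your overall route coincides with the paper's: the ``if'' direction via Proposition \ref{Observation 0}, and the ``only if'' direction by applying the Adjacency Lemma to every pair (all pairs being adjacent), iterating the resulting relations around cycles, invoking the independence hypothesis, and then reconciling the per-pair scalars into a single global $(\gamma,\vt{d})$. The paper organizes exactly this as Lemma \ref{lem:adjacency cycles} (a dichotomy for each internally independent cycle), an intersection argument over several cycles (``cycle-richness'' and Theorem \ref{thm: aligned representation necessary}), and Lemma \ref{lem:combine sets} for the gluing.

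The genuine gap is in your central step, and it is a sign error in the logic rather than mere bookkeeping. Iterating the Adjacency Lemma around one cycle $C$ through $a$ gives $(1-\kappa)\bar{\vt{X}}(a)=\sum_i\rho_i'\Delta_{a_i}^{a_{i+1}}\in V_C$, where $\kappa$ is the product of the $\sigma$'s. The two cases are the opposite of what you describe: if $\bar{\vt{X}}(a)\notin V_C$ (or $=\vt{0}$), then necessarily $\kappa=1$ and internal independence of $C$ pins down the coefficients, yielding alignment on $C$ --- this is the good case, and it is consistent with your separate claim that the $\sigma$'s multiply to $1$. If instead $\kappa\neq 1$, you learn only that $\bar{\vt{X}}(a)\in V_C\setminus\{\vt{0}\}$, which is \emph{not} alignment; it is the residual case in which that single cycle, however independent its $\Delta$'s, imposes no further restriction. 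Your sentence ``linear independence is precisely what guarantees that $\kappa\neq1$ and hence that $\bar{\vt{X}}(a)$ lies in the span of the $\bar{\vt{u}}$'s'' therefore both contradicts your exactness claim and treats the obstruction as the conclusion. Closing the gap requires running the argument over \emph{several} cycles through the same action --- e.g.\ the three triangles $\{a,b_i,b_j\}$ --- and using the four-action independence hypothesis to get $\bigcap_C V_C=\{\vt{0}\}$, so the residual case cannot hold for all of them simultaneously; this intersection step is where $|A|\geq4$ actually enters. A second, smaller gap is the gluing: Lemma \ref{lem:combine sets} needs two shared actions at which $\bar{\vt{X}}$ is non-collinear, so the degenerate configurations (all $\bar{\vt{X}}(a)$ collinear, or $\bar{\vt{X}}(a_0),\bar{\vt{X}}(a_1)$ collinear with $\Delta_{a_0}^{a_1}$) require the separate case analysis the paper carries out at the end of the proof of Theorem \ref{thm: aligned representation necessary}; your plan flags this only as ``routine linear algebra.''
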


In addition to completeness of the adjacency graph, this theorem relies on two assumptions: the set of actions must be sufficiently large and the payoffs from the actions must be sufficiently independent. These assumptions make the Adjacency Lemma particularly powerful. For example, as discussed in Section \ref{sec: necessary conditions}, the Adjacency Lemma only has any bite if there are at least three states of the world, which must be the case if the two assumptions of the theorem hold. These assumptions also exclude decision problems with larger state spaces that can effectively be reduced to problems with three states, such as those in which the payoff from each action is $0$ in every state but the first three.



\begin{example}\label{ex: complete adjacency thm}
    Consider a decision problem with at least four states in which the subject is asked to guess the state and receives a reward for guessing correctly that may depend on which state is realized. Thus $A=\Theta$ and
    \[
    u(a;\theta)=r_\theta\cdot \mathbb{1}\{a=\theta\},
    \]
    where $r_\theta>0$ for all $\theta$. The adjacency graph for this problem is complete: given any two states $\theta$ and $\theta'$, the actions $\theta$ and $\theta'$ are the only $u$-optimal actions at the belief that assigns probability $r_{\theta'}/(r_\theta+r_{\theta'})$ to state $\theta$ and all of the remaining probability to state $\theta'$. Note that the linear independence condition of Theorem \ref{prop:complete adjacency regular} is satisfied.

    Suppose the researcher seeks to elicit the subject's belief about whether he correctly guessed the state, which is described by the question profile
    \[
    X(a;\theta)=\mathbb{1}\{a=\theta\}.
    \]
    Taking $\gamma(a)=1/r_a$ and $\kappa(a)=0$ for each $a$ and $\vt{d}=0$, we see that $X$ is aligned with $u$, and therefore can be incentivized using the BDM construction of Proposition \ref{Observation 0}.

    Now consider the same question profile $X$ in a different decision problem where the subject can also receive a smaller reward for a ``close'' guess. Let $\Theta=\{1,\dots,n\}$ and
    \[
    \tilde u(a;\theta)=r_\theta\mathbb{1}\{a=\theta\}+\frac{r_\theta}{2}\mathbb{1}\{|a-\theta|=1\}
    \]
    where $\min_\theta r_\theta>\max_\theta r_\theta/2>0$. The adjacency graph for this problem is again complete. In this case, however, $X$ is not aligned with $\tilde u$. Since the linear independence condition in Theorem \ref{prop:complete adjacency regular} holds, $X$ is not incentivizable.
\end{example}

\marcin{Above: explain how }

To explain the main ideas in the proof of Theorem \ref{prop:complete adjacency regular}, we first need some additional terminology. 
An adjacency cycle $C=\left(a_{0},\dots,a_{n}\right)$ is \emph{internally independent} if the sequence of vectors $\Delta_{a_0}^{a_1},\dots,\Delta_{a_0}^{a_{n-1}} $ is linearly independent. Let $V_C$ be the linear space spanned by this sequence of vectors.  The cycle $C$ is internally independent if and only if $\mathrm{dim}\, V_C=n-1$. Notice that the space $V_C$ and the linear independence of $\Delta_{a_0}^{a_1},\dots,\Delta_{a_0}^{a_{n-1}} $ depend only on $C$ itself, and not on which action in the cycle is labeled as $a_0$.



As described above, for each action $a_0$ in an adjacency cycle, applying the Adjacency Lemma iteratively along pairs $\{a_i,a_{i+1}\}$ of adjacent actions in the cycle gives rise to an expression for $\bar{\vt{X}}(a_0)$ as a linear combination of $\Delta_{a_{i+1}}^{a_i}$ terms for $i=0,\dots,n-2$. If this cycle is internally independent, then either $\bar{\vt{X}}(a_0)\in V_C\setminus\{\vt{0}\}$ or all of the coefficients in the linear combination must be equal to $0$. In the latter case, iteratively applying the Adjacency Lemma again gives a linear relationship between any $\bar{\vt{X}}(c)$ (for $c$ in the cycle) and $\bar{\vt{X}}(a_0)$ together with $\Delta_{a_{i+1}}^{a_i}$ terms. This relationship can be simplified using the fact that the coefficients in the previous linear combination are zero to show that $X$ is aligned with $u$ on $C$, leading to the following lemma.

\begin{lem}\label{lem:adjacency cycles}
Let $C$ be an
internally independent adjacency cycle. If $X$ is incentivizable, then either $X$ is aligned with $u$ on $C$, or 
$\bar{\vt{X}}(a)\in V_C\setminus\{\vt{0}\}$ for all $a\in C$.
\end{lem}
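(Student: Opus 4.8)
The plan is to fix the cycle $C=(a_0,\dots,a_n)$ with $a_n=a_0$, apply the Adjacency Lemma to each edge, and propagate the resulting relations around the cycle. For each $i$, adjacency of $a_i$ and $a_{i+1}$ together with incentivizability gives, via Lemma \ref{lem: adjacency}, scalars $\sigma_i\neq 0$ and $\rho_i$ with
\[
\bar{\vt X}(a_{i+1})=\sigma_i\bar{\vt X}(a_i)+\rho_i\Delta_{a_i}^{a_{i+1}};
\]
I fix one admissible choice of $(\rho_i,\sigma_i)$ per edge, as uniqueness is not needed. Unrolling this recursion from $a_0$ expresses each $\bar{\vt X}(a_m)$ as $S_m\bar{\vt X}(a_0)$ plus a linear combination of the vectors $\Delta_{a_j}^{a_{j+1}}$, where $S_m=\prod_{i<m}\sigma_i\neq 0$. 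Closing the loop at $m=n$ (so that $\bar{\vt X}(a_n)=\bar{\vt X}(a_0)$ and $S_n=\kappa:=\prod_i\sigma_i\neq 0$) yields the key identity
\[
(1-\kappa)\,\bar{\vt X}(a_0)=\sum_{j=0}^{n-1}c_j\,\Delta_{a_j}^{a_{j+1}},
\]
whose right-hand side lies in $V_C$ because each $\Delta_{a_j}^{a_{j+1}}=\Delta_{a_0}^{a_{j+1}}-\Delta_{a_0}^{a_j}\in V_C$.

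Next I would rewrite the right-hand side in the basis furnished by internal independence. Writing $D_k=\Delta_{a_0}^{a_k}$ with $D_0=D_n=\vt 0$, a short telescoping rearrangement turns $\sum_j c_j\Delta_{a_j}^{a_{j+1}}$ into $\sum_{k=1}^{n-1}(c_{k-1}-c_k)D_k$. Since $C$ is internally independent, the $n-1$ vectors $D_1,\dots,D_{n-1}$ are linearly independent, so this combination vanishes if and only if all the $c_j$ are equal. This is the pivotal consequence of the hypothesis: it converts a vector identity into a scalar condition on the $c_j$.

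The argument then splits according to the desired dichotomy. If every $\bar{\vt X}(a)$ with $a\in C$ lies in $V_C\setminus\{\vt 0\}$, the second alternative holds and nothing more is needed. Otherwise there is an anchor $a^*\in C$ with either $\bar{\vt X}(a^*)=\vt 0$ or $\bar{\vt X}(a^*)\notin V_C$; I restart the cycle at $a^*$, which is permissible since relabeling the starting vertex leaves $\kappa$ and $V_C$ unchanged, and read off the key identity anchored at $a^*$. In either case its right-hand side must be $\vt 0$: if $\bar{\vt X}(a^*)=\vt 0$ this is immediate, and if $\bar{\vt X}(a^*)\notin V_C$ then, since the right-hand side lies in $V_C$, the coefficient $1-\kappa$ must vanish, again forcing the right-hand side to $\vt 0$. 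By the previous paragraph, all the $c_j$ are then equal to a common value $c$.

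Finally I would show that equality of the $c_j$ implies alignment on $C$. When all $c_j=c$, the edge coefficients $\rho_j/S_{j+1}$ in the unrolled recursion become independent of $j$, so the $\Delta_{a_j}^{a_{j+1}}$ terms telescope and
\[
\bar{\vt X}(a_m)=S_m\bar{\vt X}(a^*)+c'S_m\bigl(\bar{\vt u}(a_m)-\bar{\vt u}(a^*)\bigr)
\]
for a constant $c'$ proportional to $c$. This is exactly the alignment form on $C$: nontrivial with $\gamma(a_m)=c'S_m\neq 0$ (and $\bar{\vt d}$ read off from the expression) when $c\neq 0$, and trivial, with all values collinear with $\bar{\vt X}(a^*)$, when $c=0$. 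I expect the main difficulty to lie in this last collapse---verifying that equality of the $c_j$ forces the per-edge coefficients to coincide so that the $\Delta$ terms sum to $\bar{\vt u}(a_m)-\bar{\vt u}(a^*)$. The one genuinely conceptual (rather than computational) step is the anchoring in the previous paragraph, which lets the two ways the $V_C\setminus\{\vt 0\}$ alternative can fail---a zero value or a value outside $V_C$---be handled uniformly by forcing the right-hand side of the key identity to vanish.
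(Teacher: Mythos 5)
Your proof is correct and follows essentially the same route as the paper's: apply the Adjacency Lemma edge by edge, unroll around the cycle to get $(1-\kappa)\bar{\vt{X}}(a^*)=\sum_j c_j\Delta_{a_j}^{a_{j+1}}$, anchor at a vertex where the second alternative fails to force the right-hand side to vanish, and then use internal independence to equalize the coefficients and read off (trivial or nontrivial) alignment. The only cosmetic difference is that you deduce ``all $c_j$ equal'' by telescoping into the basis $\Delta_{a^*}^{a_k}$, whereas the paper subtracts a multiple of the identity $\sum_j\Delta_{a_j}^{a_{j+1}}=\vt{0}$; these are the same computation.
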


While Lemma \ref{lem:adjacency cycles} identifies conditions that an incentivizable question profile must satisfy on a given internally independent cycle, it says nothing about more complicated adjacency graphs. The following lemma shows how alignment with $u$ on multiple cycles or other subsets of actions can, under mild conditions, be combined to obtain alignment with $u$ on their union.

\begin{lem} \label{lem:combine sets}
Suppose $X$ is aligned with $u$ on sets of actions $B$ and $D$. If there exist actions $a_0,a_1\in B\cap D$ such that $\bar{\vt{X}}(a_0)$ and  $\bar{\vt{X}}(a_1)$ are not collinear, then $X$ is aligned with $u$ on $B\cup D$.
\end{lem}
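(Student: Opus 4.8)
The plan is to work entirely with the projected vectors $\bar{\vt X}(a)$ and $\bar{\vt u}(a)$, exploiting that the projection maps $\vt 1$ to $\vt 0$, so that equivalent questions have proportional projections. Concretely, $X$ is \emph{nontrivially} aligned with $u$ on a set $B$ exactly when there is a single zero-sum vector $\vt e$ and nonzero scalars $\gamma(a)$, $a\in B$, with $\bar{\vt X}(a)=\gamma(a)\bigl(\bar{\vt u}(a)+\vt e\bigr)$, whereas \emph{trivial} alignment on $B$ amounts to all the vectors $\bar{\vt X}(a)$, $a\in B$, being pairwise collinear. The first step is to note that the hypothesis---$\bar{\vt X}(a_0)$ and $\bar{\vt X}(a_1)$ are not collinear, with $a_0,a_1\in B\cap D$---rules out trivial alignment on either $B$ or $D$. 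Hence both alignments are nontrivial, furnishing zero-sum vectors $\vt e_B,\vt e_D$ and nonzero scalars $\gamma_B,\gamma_D$ with $\bar{\vt X}(a)=\gamma_B(a)(\bar{\vt u}(a)+\vt e_B)$ on $B$ and $\bar{\vt X}(a)=\gamma_D(a)(\bar{\vt u}(a)+\vt e_D)$ on $D$.

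The heart of the argument, and the step I expect to be the main obstacle, is to show that the two shift vectors coincide, $\vt e_B=\vt e_D$. For each $a\in\{a_0,a_1\}\subseteq B\cap D$ I would eliminate $\bar{\vt u}(a)$ between the two representations: solving the $B$-relation for $\bar{\vt u}(a)$ and substituting into the $D$-relation yields $\bigl(1-\gamma_D(a)/\gamma_B(a)\bigr)\bar{\vt X}(a)=\gamma_D(a)(\vt e_D-\vt e_B)$. Writing $\vt w:=\vt e_D-\vt e_B$ and arguing by contradiction, suppose $\vt w\neq\vt 0$. Then for $a=a_0$ and $a=a_1$ the right-hand side $\gamma_D(a)\vt w$ is nonzero, so the coefficient $1-\gamma_D(a)/\gamma_B(a)$ is nonzero and $\bar{\vt X}(a)$ is a nonzero scalar multiple of the single vector $\vt w$. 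But then $\bar{\vt X}(a_0)$ and $\bar{\vt X}(a_1)$ are scalar multiples of each other, hence collinear, contradicting the hypothesis. Therefore $\vt w=\vt 0$, i.e.\ $\vt e_B=\vt e_D=:\vt e$.

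Once the shift vectors agree, the gluing is immediate: define $\gamma(a)=\gamma_B(a)$ for $a\in B$ and $\gamma(a)=\gamma_D(a)$ for $a\in D\setminus B$. Each $\gamma(a)$ is nonzero, and by construction $\bar{\vt X}(a)=\gamma(a)(\bar{\vt u}(a)+\vt e)$ holds for every $a\in B\cup D$---using the $B$-representation on $B$ and the $D$-representation on $D\setminus B$, both sharing the common shift $\vt e$. Taking $\vt d=\vt e$, this is exactly nontrivial alignment of $X$ with $u$ on $B\cup D$. Note that the non-collinearity hypothesis is precisely what drives the middle step; the degenerate possibilities (a vanishing coefficient $1-\gamma_D(a)/\gamma_B(a)$, or some $\bar{\vt u}(a)+\vt e=\vt 0$) are harmless, since the contradiction is extracted purely from the directions of $\bar{\vt X}(a_0)$ and $\bar{\vt X}(a_1)$.
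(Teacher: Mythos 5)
Your proof is correct and follows essentially the same route as the paper's: both arguments manipulate the two alignment representations at the shared actions $a_0,a_1$ and use their non-collinearity to force the two shift vectors to coincide, after which the representations glue. The only differences are cosmetic---you eliminate $\bar{\vt{u}}(a)$ at each shared action and argue by contradiction, whereas the paper eliminates the shift vector within each of $B$ and $D$ and then equates coefficients using linear independence of $\bar{\vt{X}}(a_0)$ and $\bar{\vt{X}}(a_1)$; your version also makes explicit that the hypothesis rules out trivial alignment and remains valid if one of $\bar{\vt{X}}(a_0),\bar{\vt{X}}(a_1)$ is the zero vector, points the paper passes over silently.
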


Alignment with $u$ on $B$ gives rise, for each action in $B$, to a system of equations that the parameters relating $\bar{\vt{X}}$ and $\bar{\vt{u}}$ must satisfy; naturally, the same is true for $D$. In the proof of this lemma, we show that if $B$ and $D$ share two actions for which the values of the questions are linearly independent, then the corresponding systems of equations must share the same solution for $B$ as for $D$. 


The main idea behind Theorem \ref{prop:complete adjacency regular} is that, given any cycle $C$ through an action $a$, Lemma \ref{lem:adjacency cycles} implies that any incentivizable $X$ is either aligned with $u$ on $C$ or lies in $V_C\setminus\{\vt{0}\}$. The linear independence condition can be used to eliminate the latter possibility by considering multiple cycles through $a$; more precisely, there must be \emph{some} cycle $C$ on which $X$ is aligned with $u$, as the intersection of the sets $V_{C'}$ across cycles $C'$ containing $a$ is $\{\vt{0}\}$. Varying $a$ and applying Lemma \ref{lem:combine sets} leads to alignment with $u$ on the entire action set.

In Appendix \ref{app: complete}, we identify a more general class of problems in which alignment with $u$ is both necessary and sufficient for incentivizability. This more general result applies when there is a rich enough structure of internally independent cycles in the adjacency graph, which can be the case even when the graph is not complete.

\section{Product problems}

In many experiments, subjects perform a sequence of tasks. These tasks may be identical or they may differ. The subject's payoff is a sum or weighted average of the payoffs in the various tasks, as in the common experimental design in which one task is randomly selected for payment (see \citet{charness2016experimental} and \citet{azrieli2018incentives} and the references therein). In such cases, the researcher may be interested in eliciting beliefs related to the entire sequence of actions chosen by the subject. For example, a subject may solve a test with multiple problems---with his payoff being equal to his score---and the researcher may want to ask the subject what he believes about his overall performance on the test. Alternatively, to gauge the impact of learning across repetitions of the same task, the researcher may want to elicit subjects' beliefs about their change in performance between the beginning and the end of the experiment. 

\marcin{The reference to \citet*{azrieli2018incentives} is not perfect because the paper only provides some conditions under which RPS "random problem selection" (that's how they call it; what is our preferred terminology?) is incentive compatible. The issues are different from our paper. The best would be to find some overview that discusses and advocates using such a design.} \colin{Is the addition of the Charness et al.\ paper sufficient?}

To formalize this idea, we define a \emph{product problem} $\left(\Theta,A,u\right)$ to be a decision problem in which 
$\Theta=\times_{i}\Theta_{i}$, $A=\times_{i}A_{i}$, 
and $u\left(a;\theta\right)=\sum_{i}u_{i}\left(a_{i},\theta_{i}\right)$ for some sets $(\Theta_1,\dots,\Theta_I)$ and $(A_1,\dots,A_I)$, and some functions $u_i:A_i\times\Theta_i\longrightarrow\R$.\footnote{\label{ft:lin independence}Theorem \ref{prop:product problems} below generalizes to decision problems with payoffs of the form $u(a;\theta)=\sum_iu_i(a_i,\theta)$, i.e., where the payoff within each task can depend on the entire state of the world $\theta$, under the assumption that the collection of subspaces $E_i=\mathrm{span}\{\Delta_{a_i}^{b_i}:a_i,b_i\in A_i\}$ is linearly independent.} 
As noted above, the additive separability of $u$ captures commonly used incentives in which one choice is randomly selected for payment. We refer to each $(\Theta_i, A_i, u_i)$ as a \emph{task}. We write $a_{-i}$ for a profile $(a_1,\dots,a_{i-1},a_{i+1},\dots,a_I)$ and $a_ia_{-i}$ for the profile whose $i$th coordinate is $a_i$ and remaining coordinates are given by $a_{-i}$.


\begin{example}\label{ex: Multiple choice test}
The decision problem is a test consisting of $I\geq 3$ multiple choice questions. The state space and action space are given by $\Theta=A=\Omega^{I}$, where $\Omega$ is a finite set containing at least two elements describing the possible answers to any given question. Coordinate $i$ corresponds to the $i$th question: $\theta_i$ is the correct answer to question $i$ and $a_i$ is the subject's answer to question $i$. The payoff in the decision problem is the score on the test:
$u(a;\theta) = \sum_{i=1}^I \mathbb{1}\{\theta_i=a_i\}.$
The subject has a belief $p\in\Delta(\Theta)$. The $u$-optimal choice of action in each task $i$ is the most likely state according to the marginal distribution of $p$ over $\Theta_i$.
\end{example}

Note that we make no assumptions about the correlation among states across tasks: the subject can hold any belief about the joint distribution of $(\theta_1,\dots,\theta_I)$. In particular, the subject need not view the states as independent, nor must there be a fixed state across tasks.

Product problems share a structure that distinguishes them from the problems we have analyzed so far. Two actions $a,b\in A$ are adjacent only if they differ in exactly one task, that is, if there is some $i$ such that $a_i\neq b_i$ and $a_{-i}= b_{-i}$.  Conversely, if $a_i$ and $b_i$ are adjacent in task $i$, then the product actions $a_ia_{-i}$ and $b_ia_{-i}$ are adjacent for all $a_{-i}$. The adjacency graph of the product problem is generally neither complete (even if the adjacency graph in each task is complete) nor a tree. 

\begin{figure}[t]
    \centering

\begin{tikzpicture}[scale=0.75, transform shape]
    \coordinate (A) at (0,0,3);\coordinate (AB) at (0,0,0);
    \coordinate (B) at (3,0,3);\coordinate (BB) at (3,0,0);
    \coordinate (C) at (3,3,3);\coordinate (CB) at (3,3,0);
    \coordinate (D) at (0,3,3);\coordinate (DB) at (0,3,0);
    
    \draw[] (C) -- (D) -- (DB) -- (CB) -- cycle;
    \draw[] (B) -- (A) -- (D) -- (C) -- cycle;
    \draw [ultra thick, blue] (C) -- (D);
    \draw [ultra thick, blue] (C) -- (CB);
    \draw [ultra thick, blue] (C) -- (B);
    \draw[] (CB) -- (BB);
    \draw[] (B) -- (BB);
    \draw[dashed] (DB) -- (AB);
    \draw [dashed, ultra thick,  orange] (C) -- (D);
    \draw[dashed] (A) -- (AB);
    \draw[dashed, orange] (BB) -- (AB);
    \draw[orange] (CB) -- (DB);
    \draw[orange] (A) -- (B);

    \node at (A) [circle,fill=black,inner sep=1pt,label=below left:{$110$}] {};
    \node at (B) [circle,fill=black,inner sep=1pt,label=below right:{$010$}] {};
    \node at (C) [circle,fill=black,inner sep=1pt,label=above left:{$000$}] {};
    \node at (D) [circle,fill=black,inner sep=1pt,label=above left:{$100$}] {};
    \node at (AB) [circle,fill=black,inner sep=1pt,label=below right:{$111$}] {};
    \node at (BB) [circle,fill=black,inner sep=1pt,label=above right:{$011$}] {};
    \node at (CB) [circle,fill=black,inner sep=1pt,label=above right:{$001$}] {};
    \node at (DB) [circle,fill=black,inner sep=1pt,label=above left:{$101$}] {};

\end{tikzpicture}
\caption{Adjacency graph for Example \ref {ex: Multiple choice test} with $\Omega=\{0,1\}$ and $I=3$.}\label{fig:MC adjacency}
\end{figure}
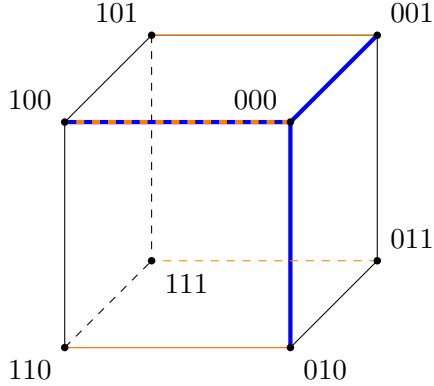

Figure \ref{fig:MC adjacency} depicts the adjacency graph in Example \ref{ex: Multiple choice test} when $\Omega=\{0,1\}$ and $I=3$. 
Note that while the collection of payoff difference vectors associated with edges exiting a single node---such as the blue edges $\Delta_{000}^{100}$, $\Delta_{000}^{010}$, and $\Delta_{000}^{001}$---are linearly independent, no cycle is internally independent because parallel edges correspond to identical payoff difference vectors. Thus, for instance, the orange edges correspond to $\Delta_{000}^{100}=\Delta_{010}^{110}=\Delta_{001}^{101}=\Delta_{011}^{111}$.


We say that question profile $X$ depends on task $i$ \textit{trivially} if, for each $a_{-i}$, the vectors $\bar{\vt{X}}(a_ia_{-i})$ are collinear for all $a_i$. Trivial dependence on task $i$ means that changing $a_i$ entails replacing $\vt{X}(a)$ with an equivalent question. The following result provides necessary and sufficient conditions for incentivizability in product problems.

\begin{thm}\label{prop:product problems}
Let $\left(\Theta_{i},A_{i},u_{i}\right)_{i=1}^I$ be a product problem with $I\geq 3$. Suppose that for each $i$, either $A_i$ contains only two actions, or the adjacency graph for problem $\left(\Theta_{i},A_{i},u_{i}\right)$ is complete and the vectors $\left\{ \Delta^{b_i}_{a_i},\Delta^{c_i}_{a_i}\right\}$ are linearly independent for all distinct $a_i,b_i,c_i\in A_i$. Suppose in addition that there are at least three tasks on which $X$ does not depend trivially. Then $X$ is incentivizable if and only if there exist $v(a),\kappa\left(a\right)\in\R$ with $v(a)\neq0$ for each $a\in A$, $\tau_i\in\R$ for each $i$, and $\vt{d}\in\R^{\Theta}$
such that 
\begin{align}\label{eq: weighted-aligned}
X\left(a;\theta\right)
=\kappa(a)
+v(a)\left(d(\theta)+\sum_{i}\tau_i u_i(a_i,\theta_i)\right)
\end{align}
for all $a$ and $\theta$. 
\end{thm}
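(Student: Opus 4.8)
The plan is to prove both directions by leveraging the two key structural tools developed earlier: the Adjacency Lemma together with Lemma \ref{lem:adjacency cycles} for necessity, and the gluing argument of Lemma \ref{lem:combine sets} (combined with the BDM construction of Proposition \ref{Observation 0}) for sufficiency. The right-hand side of \eqref{eq: weighted-aligned} says precisely that $X$ is aligned with the \emph{weighted} utility $\tilde u(a,\theta)=\sum_i\tau_i u_i(a_i,\theta_i)$; so the target is to show that incentivizability forces the reward vectors $\bar{\vt{X}}(a)$ to be (action-dependent rescalings of) a common vector of the form $\vt{d}+\sum_i\tau_i\vt{u}_i(a_i)$ with weights $\tau_i$ that do not depend on $a$. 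The main conceptual content is that the product structure allows the weights to differ across tasks (unlike the complete-graph case, where Theorem \ref{prop:complete adjacency regular} forces $\tau_i$ to be equal), but these weights must nonetheless be \emph{globally consistent} across all actions.

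For the sufficiency direction I would argue as follows. Given $X$ of the form \eqref{eq: weighted-aligned}, define the auxiliary decision problem with utility $\tilde u(a,\theta)=\sum_i\tau_i u_i(a_i,\theta_i)$; then $X$ is aligned with $\tilde u$ by construction. The obstacle is that alignment with $\tilde u$ does not immediately give incentivizability in the \emph{original} problem, since the action incentives come from $u$, not $\tilde u$. The fix is to run the BDM construction of Proposition \ref{Observation 0} to elicit the expectation of $\vt{X}(a)$ truthfully (this handles the reporting incentive regardless of the $\tau_i$), and then choose the probability $\alpha$ of paying for the decision problem large enough that the perturbation introduced by the elicitation stage cannot flip the optimal action in the original problem. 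Since the elicitation payoff is bounded and the decision-problem payoffs $u$ determine a strictly optimal action on an open set of beliefs around any non-indifference point, a sufficiently large weight on $u$ preserves the action incentives; one then verifies the boundary (indifference) cases separately using the affine-value argument, exactly as in the proof of the Adjacency Lemma.

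For necessity, suppose $V$ incentivizes $X$. Fix a task $i$ and a profile $a_{-i}$. Since $a_ia_{-i}$ and $b_ia_{-i}$ are adjacent whenever $a_i,b_i$ are adjacent in task $i$, and each task is either binary or complete with the stated independence, I can apply the Adjacency Lemma along the ``slice'' of actions varying only in coordinate $i$. When the slice is a complete internally-independent cycle, Lemma \ref{lem:adjacency cycles} (via Theorem \ref{prop:complete adjacency regular} applied to the slice) forces alignment with $u_i$ on that slice, yielding $\bar{\vt{X}}(a_ia_{-i})=\gamma(a_ia_{-i})\,\bar{\vt{u}}_i(a_i)+\vt{d}^{(i,a_{-i})}$ on the slice, where the weight $\gamma$ and offset $\vt{d}$ may a priori depend on the slice. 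The crux is to show these slice-wise offsets and weights fit together across different slices and different tasks into the single global form \eqref{eq: weighted-aligned}. Here I would use the three-task hypothesis and Lemma \ref{lem:combine sets}: any two slices (possibly in different tasks) that share two actions with non-collinear $\bar{\vt{X}}$ values must agree on their alignment parameters, and the assumption that $X$ depends nontrivially on at least three tasks guarantees enough such non-collinear overlaps to propagate a consistent choice of $\vt{d}$ and of the per-task weights $\tau_i$ across the whole product.

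The main obstacle I anticipate is precisely this last bookkeeping step: extracting, from the family of slice-wise alignments, a single vector $\vt{d}$ and a single scalar $\tau_i$ per task that simultaneously reproduce every $\bar{\vt{X}}(a)$. The difficulty is twofold. First, one must rule out the degenerate alternative offered by Lemma \ref{lem:adjacency cycles}, namely that $\bar{\vt{X}}$ lands in $V_C\setminus\{\vt{0}\}$ on some cycle rather than being aligned; the three-task nontriviality assumption is what should block this, since a question living in the span of payoff differences on too many tasks would be forced to depend trivially on some task. Second, and most delicate, the separable structure means the rescaling factor $v(a)$ must be consistent across tasks at each action while the weights $\tau_i$ vary across tasks; showing that the action-dependent rescalings from independent slices can be reconciled into the single multiplicative factor $v(a)$ in \eqref{eq: weighted-aligned} is where linear independence of the $\Delta_{a_i}^{b_i}$ within each task and the overlap argument of Lemma \ref{lem:combine sets} must be carefully combined. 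I would isolate this as a standalone combinatorial-linear-algebra lemma about how aligned slices glue over a product grid, and expect that to be the technical heart of the proof.
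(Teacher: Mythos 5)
Your plan correctly identifies the target (the right-hand side of \eqref{eq: weighted-aligned} is alignment with a weighted utility $\sum_i\tau_iu_i$) and correctly flags the reconciliation of action-dependent rescalings into a single factor $v(a)$ as the technical heart. But the tools you propose for that reconciliation do not work under the theorem's hypotheses, and the idea that actually carries the paper's proof is absent. First, you cannot apply Theorem \ref{prop:complete adjacency regular} or Lemma \ref{lem:adjacency cycles} to a ``slice'' $\{a_ia_{-i}:a_i\in A_i\}$: Lemma \ref{lem:adjacency cycles} needs the cycle to be internally independent, i.e.\ all $|A_i|-1$ vectors $\Delta_{a_i}^{b_i}$ linearly independent, whereas the theorem assumes only \emph{pairwise} independence (and imposes nothing at all for binary tasks, where a slice has no cycle). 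Second, Lemma \ref{lem:combine sets} cannot glue slices across a product grid: two distinct slices share at most one action, while that lemma requires two shared actions with non-collinear $\bar{\vt{X}}$. The paper instead decomposes $\bar{\vt{X}}(a)=\sum_i\vt{w}_i(a)$ into components lying in the linearly independent subspaces $E_i$ spanned by each task's payoff differences, and then proves that every adjacency cycle is \emph{exact} --- the product of the rescaling coefficients $x(a,b)$ from the Adjacency Lemma equals $1$ around every cycle. Exactness is what makes $v(a)$ well defined as a path-independent potential; it is established by reducing arbitrary cycles to small $3$- and $4$-cycles and cancelling faces of a cube (Lemmas \ref{lem:sum of exact cycles} and \ref{lem:small cycles}), with a separate delicate treatment of ``free transitions'' where the coefficients are not pinned down. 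None of this is replaceable by the slice-plus-gluing scheme you describe, and the three-task nontriviality assumption enters exactly there (to ground the small cycles), not merely to rule out the degenerate branch of Lemma \ref{lem:adjacency cycles}.

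Your sufficiency argument also has a gap. Choosing the weight $\alpha$ on the decision problem large does not preserve incentives at indifference beliefs: if two actions are both optimal under $u$ but the elicitation payoff strictly favors one, then for any $\alpha<1$ the combined argmax excludes the other, violating the definition of incentivizability, and no amount of upweighting $u$ fixes this. The correct construction exploits the specific form \eqref{eq: weighted-aligned}: after normalizing, take $V(r,a,\theta)=rd(\theta)+\sum_i(1+r\tau_i)u_i(a_i,\theta_i)-r^2/2$ with $|\tau_i|<1$, so that for every report $r$ the action payoff is a task-separable sum with strictly positive weights $1+r\tau_i$, which reproduces the original optimal actions exactly, ties included.
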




Along the same lines as Theorem \ref{prop:complete adjacency regular}, we assume that each task has a complete adjacency graph (which holds vacuously when the action set is binary). However, relative to Theorem \ref{prop:complete adjacency regular}, the other requirements for each task are significantly weaker: there are no restrictions on the number of actions, and we require linear independence only of pairs of payoff difference vectors instead of triples. 

The characterization of incentivizable question profiles in \eqref{eq: weighted-aligned} is more permissive than that in Theorem \ref{prop:complete adjacency regular}.  If $\tau_i$ is constant across $i$, the expression in \eqref{eq: weighted-aligned} implies that $X$ is aligned with $u$. However, in contrast to the case of a single problem with complete adjacency, there are many questions not aligned with $u$ that are also incentivizable (namely, those for which $\tau_i$ varies across $i$). The additional freedom in the product problem results from a smaller number of cycles and a larger number of linear dependencies in the payoffs. 

We refer to any $X$ that satisfies the conditions of Theorem \ref{prop:product problems} as \emph{weighted aligned}. The following example illustrates the added flexibility of weighted alignment relative to alignment, as well as its restrictiveness.

\addtocounter{example}{-1}
\begin{example}[continued]\label{ex:MCT tw parts}
It is straightforward to verify that in the product problem of Example \ref{ex: Multiple choice test}, if $|\Omega|>2$, the adjacency graph for each problem $i$ is complete and all pairs $\left\{ \Delta^{b_i}_{a_i},\Delta^{c_i}_{a_i}\right\}$ are linearly independent. Therefore, Theorem \ref{prop:product problems} applies to any question profile that depends nontrivially on at least three tasks. 

Suppose the test consists of two parts: the first part comprises questions $1$ through $I_1\geq 2$, while the second part comprises questions $I_1+1$ through $I=I_1+I_2$. Consider the question that asks the subject about the expected improvement in his average score from the first part of the test to the second: 
\(
X(a;\theta)=\frac{1}{I_2}\sum_{i> I_1} \mathbb{1}(a_i=\theta_i)
-\frac{1}{I_1}\sum_{i\leq I_1} \mathbb{1}(a_i=\theta_i).
\)

This question profile is weighted aligned, and is therefore incentivizable by Theorem \ref{prop:product problems}. At first glance, this result may be surprising as $X$ seems to create opposing incentives in the two parts of the test. Nevertheless, $X$ can be incentivized using a simple modification of the BDM elicitation mechanism from Proposition \ref{Observation 0} that involves adding the payoff from the decision problem:
\begin{align}\label{ex: BDMproduct}
V(r,a,\theta)&=\int_0^r X(a;\theta)dx + \int_r^1 xdx -\frac{1}{2}+\sum_i u_i(a_i,\theta_i) \\
&= \sum_{i} \left[1-\mathbb{1}\{i\leq I_1\}\frac{r}{I_1}+\mathbb{1}\{i> I_1\}\frac{r}{I_2}\right]u_i(a_i,\theta_i)
-\frac{r^2}{2}.\notag
\end{align}
The two integral terms provide incentives for truthful reporting of $\E _p X(a;\cdot)$, as in the standard BDM mechanism. For the optimal action $a$, we can think of simultaneously choosing $r$ and $a$ to maximize $V$. For each $i$, because the corresponding coefficient on $u_i$ in the square brackets is positive for every $r\in [0,1]$, the optimal action $a_i$ is the one that maximizes the expectation of $u_i(a_i,\theta_i)$. Therefore, the overall payoff under $V$ provides the correct incentives for the action $a$.

\enlargethispage{\baselineskip}

For an example of a question that is not weighted aligned, and hence not incentivizable, consider elicitation of the probability the subject assigns to receiving a score of at least $z$, where $z\in \{1,\dots,I\}$.\footnote{This question is similar to that of \citet{mobius2022managingself}, who elicit the subject's belief that his score on an incentivized IQ test is above the median among the participants.} This question corresponds to
\[
Y(a;\theta) = 
\begin{cases}
1 & \text{if } \sum_{i=1}^I \mathbb{1}\{\theta_i=a_i\} \geq z\\
0 & \text{otherwise}.
\end{cases}
\]

We claim that $Y$ is not incentivizable. Since there is no task on which $Y$ depends trivially, Theorem \ref{prop:product problems} applies. Thus it suffices to show that $Y$ is not of the form specified in \eqref{eq: weighted-aligned}. 
To make this precise, suppose, in addition, that $z<I$; it is straightforward to adapt the argument for the case of $z=I$. 

The idea is that, for a given action, a change in the state that affects the utility but not the value of $Y$ implies restrictions on the parameters in \eqref{eq: weighted-aligned} that are violated for actions where the same change in the state changes the value of $Y$. On the contrary, suppose that $Y$ has representation \eqref{eq: weighted-aligned}. Given any answer profile $a$ and question $i$, let $\theta=a$ and let $\theta'$ be a state that differs from $\theta$ in the value of question $i$, i.e., such that $\theta_i'\neq \theta_i$ and $\theta_j'=\theta_j$ for all $j\neq i$. Thus $a$ gets a perfect score of $I$ if the state is $\theta$ and a score of $I-1\geq z$ if the state is $\theta'$. In particular, $Y(a;\theta)=Y(a;\theta')=1$. Since $v(a)\neq 0$, equating the right-hand sides of \eqref{eq: weighted-aligned} for $Y(a;\theta)$ and $Y(a;\theta')$ yields $d(\theta)+\sum_j\tau_j = d(\theta')+\sum_{j\neq i}\tau_j$, which further implies $d(\theta)+\tau_i=d(\theta')$. Repeating the argument for $a'=\theta'$ with the roles of $\theta$ and $\theta'$ reversed, we obtain $d(\theta')+\tau_i=d(\theta)$. Therefore, $\tau_i=0$ and $d(\theta)=d(\theta')$. Now consider an action $a''$ that agrees with $\theta$ on exactly $z$ coordinates, including coordinate $i$. Then $a''$ agrees with $\theta'$ on $z-1$ coordinates, and we have $Y(a'';\theta)=1\neq 0 = Y(a'';\theta')$. From \eqref{eq: weighted-aligned}, since $\tau_i=0$, it must be that $d(\theta)\neq d(\theta')$, a contradiction.

Intuitively, one may think that using an elicitation method (such as BDM) whose value is increasing in $\E_p[Y(a;\cdot)]$ would suffice for incentivizability, as then the subject can increase his payment at the belief elicitation stage by ``doing well'' on the test. The above argument shows that this intuition is incorrect. The problem lies in the definition of doing well: maximizing the expected score in the test is not generally the same thing as maximizing the likelihood of that score being above a fixed threshold.
\end{example}

\subsection{Proof idea}

The proof of Theorem \ref{prop:product problems} can be found in Appendix \ref{sec: Proof of Product Problems}. Here, we describe the key ideas. For simplicity, we focus on Example \ref{ex: Multiple choice test} with actions $A_i=\{0_i,1_i\}$.

The proof that equation \eqref{eq: weighted-aligned} is sufficient for incentivizability is relatively straightforward: the argument directly extends the construction from equation \eqref{ex: BDMproduct}.

For the remainder of this section, we assume that $X$ is incentivizable. By the Adjacency Lemma, for any pair of adjacent actions $a$ and $b$, there are coefficients $x(a,b)\neq 0$ and $y(a,b)$ such that $\bar{\vt{X}}(a)=x(a,b)\bar{\vt{X}}(b)+y(a,b)\Delta_a^b$. For the purpose of this discussion, we assume that the coefficients $x(a,b)$ and $y(a,b)$ are uniquely defined. 
Note also that, for any such pair, there is a single task $i$ on which the actions $a_i$ and $b_i$ differ, and $\Delta_{a}^{b} =\Delta_{a_i}^{b_i}$, where $\Delta_{a_i}^{b_i}$ refers to the payoff difference $\bar{\vt{u}}_{i}(b_i)-\bar{\vt{u}}_i(a_i)$. We say that $i$ is the \emph{relevant} task for the pair $(a,b)$.

The key step in the proof is to show that each cycle $(a^0,\dots,a^n=a^0)$ of adjacent actions is \emph{exact}, that is, that $x(a^0,a^1)x(a^1,a^2)\cdots x(a^{n-1},a^n)=1.$ We first explain the connection between exactness of all cycles and \eqref{eq: weighted-aligned}, and then explain why all cycles are exact.

The exactness of all cycles implies that we can define $v(a)=x(\vt{0},a^1)\cdots x(a^{n-1},a^n)$ for any path $a^0=\vt{0},\dots,a^n=a$ of adjacent actions, and the definition does not depend on the chosen path. In particular, for any two adjacent actions $a$ and $b$, $v(a)=x(b,a)v(b)$. Letting $\bar{\vt{X}}^*(a)=v(a)\bar{\vt{X}}(a)$, a repeated application of the Adjacency Lemma implies that
\begin{align}\label{eq:repr}
\bar{\vt{X}}^*(a)=\bar{\vt{X}}^*(\vt{0})+
\sum_{m<n} y^*(a^m,a^{m+1})\Delta_{a^m}^{a^{m+1}}   
\end{align}
for some (arbitrary) path $a^0=\vt{0},\dots,a^n=a$. 

Consider paths of length $n= |\{i:a_i=1_i\}|$ from $\vt{0}$ to $a$. 
Along such paths, each task in the product problem is relevant at most once. Let $i_m$ be the relevant task for pair $(a^m,a^{m+1})$. Then, $\Delta_{a^m}^{a^{m+1}} =\Delta_{0_{i_m}}^{1_{i_m}}$. Because the vectors $\Delta_{0_i}^{1_i}$ are linearly independent across tasks, by equating the right-hand side of \eqref{eq:repr} across all length-$n$ paths, we conclude that the coefficients $y^*$ depend only on the relevant task for the transition; thus we can define $y_{i_m}=y^*(a^m,a^{m+1})$ and obtain
\begin{align*}
\bar{\vt{X}}^*(a)
=\bar{\vt{X}}^*(\vt{0})+\sum_{i\text{ s.t.\ }a_i=1} y_i\Delta_{0_i}^{1_i}
=\bar{\vt{X}}^*(\vt{0})-\sum_{i} y_i\bar{\vt{u}}_i(0_i)+\sum_{i} y_i\bar{\vt{u}}_i(a_i).
\end{align*}
Taking $d(\theta)=\bar{\vt{X}}^*(\vt{0})-\sum_{i} y_i\bar{\vt{u}}_i(0_i)$ gives an expression of the form in \eqref{eq: weighted-aligned}.

We now return to the question of why all cycles are exact. As the first step, consider the four-action cycle corresponding to the front face in Figure \ref{fig:MC adjacency}. 
A repeated application of the Adjacency Lemma along the cycle leads to
\[
\bar{\vt{X}}(000)
=x(000,100)x(100,110)x(110,010)x(010,000)\bar{\vt{X}}(000)
+s_1\Delta_{0_1}^{1_1}+s_2\Delta_{0_2}^{1_2}
\]
for some $s_1,s_2\in\R$. If $\bar{\vt{X}}(000)$ is not in the subspace spanned by the vectors $\Delta_{0_1}^{1_1}$ and $\Delta_{0_2}^{1_2}$, then the product of the $x$ coefficients on the right-hand side must be equal to $1$, meaning that the cycle is exact. Otherwise, 
by analyzing a number of cases, one can show that the cycles corresponding to the other five faces in Figure \ref{fig:MC adjacency} are all exact. These five cycles can be combined in such a way that all edges not belonging to the front face ``cancel out,'' thereby implying exactness of the original cycle.
In general, we show that all similar 4-cycles are exact.

In the second step, we show how to decompose arbitrary cycles into 4-cycles in such a way that the exactness of the original cycle follows from the exactness of cycles in the decomposition.

These arguments are not specific to binary actions. Extending beyond binary actions requires more care and conditions on linear independence of payoffs within tasks. In the proof of exactness, aside from the 4-cycles described above, we also use 3-cycles of adjacent actions with the same relevant task. We leave the details to the formal proof in the appendix.

\section{Discussion}\label{sec: discussion}

\subsection{Two or three states}\label{sec: Three states}

As noted in Section \ref{sec: necessary conditions}, the Adjacency Lemma has limited bite if there are only two or three states due to the dimensionality of its restrictions. With two states, the set of beliefs at which the subject is indifferent between two actions is a single point. With three states, the indifference set intersected with the set of beliefs along which there is a constant answer to a particular question is generically a single point. In both cases, the thesis of the Adjacency Lemma is trivially satisfied. 

If there are only two states, the elicitation problem becomes trivial as the subject's belief about the state can be elicited independent of the action, which is sufficient for the researcher to determine the expectation of any question $\vt{X}(a)$. 

With three states, eliciting the entire belief may still be a practical option as it requires asking for only two probabilities. If the researcher wants to ask for only one number, although the Adjacency Lemma is vacuous, looking at adjacencies can nonetheless be useful for understanding incentivizability. Suppose the overall payoff is a weighted sum of the payoff from the decision problem and that from a scoring rule applied at the belief elicitation stage. Suppose moreover that the scoring rule depends only on the reported belief and the realized value of $X(a;\theta)$, and not on $\theta$ directly.\footnote{This is a natural restriction that we do not impose in our model as doing so would have no effect on our results. When there are three states, we do not know whether this restriction has any bite; we expect that the convexity of the value function would place restrictions on incentivizable questions even if the value can depend on $\theta$ directly.} At any belief at which the subject is indifferent between two actions, the value of truthful reporting at the belief elicitation stage must be equal following these two actions. Depending on the structure of the decision problem, following these constant values along cycles of adjacent actions can imply restrictions on $X$; see Figure \ref{fig: three states}.

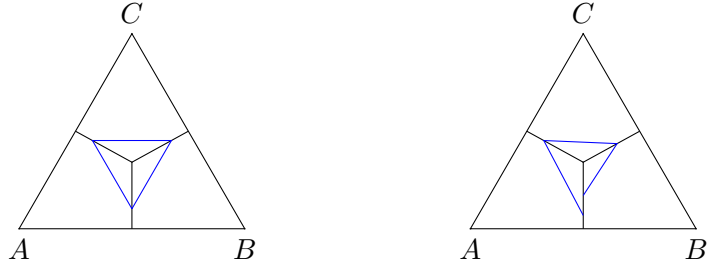
\begin{figure}[t]
\begin{center}
\begin{tikzpicture}[scale = 1.25]
\begin{scope}[xshift=4cm]
\node[mynode, scale=0.05] (center) at (1,0.59) {};
\node[mynode, label=below:$A$, scale=0.05] (A) at (0,0) {};
\node[mynode, label=below:$B$, scale=0.05] (B) at (2,0) {};
\node[mynode, label=above:$C$, scale=0.05] (C) at (1,1.73) {};
\node[] (AB) at ($(A)!0.5!(B)$) {};
\node[] (AC) at ($(A)!0.5!(C)$) {};
\node[] (BC) at ($(C)!0.5!(B)$) {};

\draw [->] (A) -- (B) -- (C) -- (A) -- cycle;
\draw [] (center) -- ($(A)!0.5!(B)$);
\draw [] (center) -- ($(A)!0.5!(C)$);
\draw [] (center) -- ($(C)!0.5!(B)$);

\draw [blue] ($(AB)!0.3!(center)$) -- ($(AC)!0.3!(center)$) -- ($(BC)!0.3!(center)$) -- ($(AB)!0.3!(center)$);

\end{scope}
\begin{scope}[xshift=8cm]
\node[mynode, scale=0.05] (center) at (1,0.59) {};
\node[mynode, label=below:$A$, scale=0.05] (A) at (0,0) {};
\node[mynode, label=below:$B$, scale=0.05] (B) at (2,0) {};
\node[mynode, label=above:$C$, scale=0.05] (C) at (1,1.73) {};
\node[] (AB) at ($(A)!0.5!(B)$) {};
\node[] (AC) at ($(A)!0.5!(C)$) {};
\node[] (BC) at ($(C)!0.5!(B)$) {};

\draw [->] (A) -- (B) -- (C) -- (A) -- cycle;
\draw [] (center) -- ($(A)!0.5!(B)$);
\draw [] (center) -- ($(A)!0.5!(C)$);
\draw [] (center) -- ($(C)!0.5!(B)$);

\draw [blue] ($(AB)!0.2!(center)$) -- ($(AC)!0.3!(center)$) -- ($(BC)!0.4!(center)$) -- ($(AB)!0.5!(center)$);
\end{scope}
\end{tikzpicture}    
\end{center}
\caption{Problems with three states. Each triangle depicts the simplex of beliefs. The black line segments illustrate the partition of the simplex according to which action is $u$-optimal. Blue line segments represent sets of beliefs on which $\vt{X}(a^*)$ is constant for the $u$-optimal action $a^*$. In the left triangle, because the blue lines form a triangle, adjacency considerations do not rule out incentivizability of $X$. In the right triangle, $X$ is not incentivizable if the payment for elicitation depends only on the value of $X$ and the reported belief.\label{fig: three states}}
\end{figure}

\subsection{Independent questions}


Our necessary conditions make use of independence assumptions on the payoffs in the decision problem. Similar results can be obtained if one replaces these assumptions with assumptions about independence of $\bar{\vt{X}}(a)$ across actions. For example, along the lines of Lemma \ref{lem:adjacency cycles}, if $X$ is incentivizable and the set of questions $\bar{\vt{X}}(a)$ is linearly independent for actions $a$ in some cycle $C$ of adjacencies, then one can show that $X$ must be aligned with $u$ on $C$. Lemma \ref{lem:combine sets} can then be used to obtain necessary conditions on the full set of actions.

\subsection{Non-affine questions}

We have restricted attention to eliciting beliefs about the expectation of some function $\vt{X}(a)$. \citet{lambert2011elicitation} studies elicitation of ``properties'' of beliefs, where a property corresponds to a partition of the simplex. He characterizes which properties are incentivizable, i.e., for which ones there exists a scoring rule incentivizing truthful reporting when the subject is asked only about the property associated with his belief. A question $\vt{X}(a)$ in our framework corresponds to a property that partitions the simplex into parallel hyperplanes (unless the question is trivial, in which case there is a single property for the entire simplex). This formulation captures many properties of interest and ensures incentivizability both in Lambert's sense and, if the question is independent of the action choice, in our sense. There are, however, properties---such as the median of some $X$---which may be of interest that are incentivizable in Lambert's context but to which our results do not apply. Nonetheless, we expect our general approach of focusing on adjacencies between actions to be useful for such non-affine questions.

\colin{Should we expand on this using the example of the median? With finitely many states, the median is a finite property. Looking along the boundary of adjacency between two actions $a$ and $b$, if the median for $\vt{X}(a)$ changes at a point where the median for $\vt{X}(b)$ does not, we should get a failure of incentivizability since the value must be affine for any fixed action and median.} \marcin{Let's add a result about generic non-incentivizability of the median.}




\subsection{Observability}

We assume the researcher knows the subject's utility function and observes actions and states. These assumptions are typically satisfied in lab experiments, but could be violated in field experiments. 
A lack of observability introduces additional issues that we view as orthogonal those we study. Our negative results describe the limits of nondistortionary elicitation; if the researcher does not observe all aspects of the experiment, elicitation becomes even more difficult. 

Interestingly, willingness-to-pay (or expected utility) is always incentivizable in our framework, and is also often the only question that can be elicited under incomplete observability \citep[see][]{ChassangPadróIMiquelSnowberg12}. In that sense, both issues may lead to similar prescriptions regarding what can be elicited.




\subsection{Endogenous beliefs}

Another important assumption of our framework is that the subject has a fixed belief that is not affected by his choice of action. Consider, for example, the moral hazard issue described in \citet{chambers2018dynamic} in which a student who is asked about the probability of passing a test can influence that probability through his behavior. If the state is defined as passing or failing, the student's belief is endogenous.

In many cases, a problem with endogenous beliefs can be converted into one with fixed beliefs by redefining variables. In the above example, the state could instead be the solution to the test. 
If the outcome of the test is a deterministic function of the state and the student's behavior, the likelihood of passing corresponds to a question profile $X$ in our model. More generally, one can redefine the state as a function from the set of actions into probability distributions over outcomes. If this reformulation satisfies the observability assumptions of our model, then our results apply.



\subsection{Multi-dimensional questions}\label{sec: multiple questions}
In our model, we assume that the researcher can only ask a single question. Here, we show how our methods can be extended to multiple questions. For simplicity, we focus on the case of two questions; the logic extends directly to more than two questions. 

We say that question profiles $X,Y:A\longrightarrow \mathbb{R}^\Theta$ are \emph{jointly incentivizable} if there exists an elicitation method  $V:\mathbb{R}^2\times A\times\Theta\longrightarrow\left[0,1\right]$  such that, for every $p\in \Delta(\Theta)$,
\[
\arg\max_{a,r,s}\E_{p}V\left(r,s,a,\theta\right)
=\left\{ \left(\E_{p}X\left(a;\theta\right),\E_{p}Y\left(a;\theta\right),a\right):a\in \arg\max_{b\in A}\E_{p}u\left(b;\cdot\right)\right\} .
\]

If $X$ and $Y$ are both aligned with $u$, it is straightforward to extend Proposition \ref{Observation 0} to show that they are jointly incentivizable.

For necessary conditions, our key result---Lemma \ref{lem: adjacency}---extends as follows.

\begin{lem}\label{lem: adjacency multiple}
    Suppose $X$ and $Y$ are jointly incentivizable. If actions $a$ and $b$ are adjacent, then there are $\rho_X,\rho_Y$ and $\sigma_x^y$ for $x,y=X,Y$, not all equal to $0$, such that 
\begin{align*}
\bar{\vt{X}}\left(b\right)&
=\rho_X\left(\bar{\vt{u}}(b)-\bar{\vt{u}}(a)\right)
+\sigma_X^X\bar{\vt{X}}\left(a\right)
+\sigma_X^Y\bar{\vt{Y}}\left(a\right)\\
\text{and} \qquad \bar{\vt{Y}}\left(b\right)&
=\rho_Y\left(\bar{\vt{u}}(b)-\bar{\vt{u}}(a)\right)
+\sigma_Y^X\bar{\vt{X}}\left(a\right)
+\sigma_Y^Y\bar{\vt{Y}}\left(a\right).
\end{align*}
If $\bar{\vt{X}}(a)$ or $\bar{\vt{X}}(b)$ is collinear with $\bar{\vt{u}}(b)-\bar{\vt{u}}(a)$, then we can take $\rho\neq0$.
\end{lem}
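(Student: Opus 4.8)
The plan is to mimic the proof of the single-question Adjacency Lemma (Lemma~\ref{lem: adjacency}), replacing scalar reports with a pair of reports and tracking both questions simultaneously. Let $V$ be an elicitation method that jointly incentivizes $X$ and $Y$, and fix adjacent actions $a$ and $b$. First I would set up the key geometric observation: for any fixed pair of reports $(r,s)$, the map $p\mapsto \E_p V(r,s,a,\cdot)$ is affine in $p$, and the value function $p\mapsto \max_{r,s}\E_p V(r,s,a,\cdot)$ is convex, being affine precisely over maximal convex sets on which a single $(r,s)$ is optimal following $a$. I would fix a base belief $p_0$ at which both $a$ and $b$ are optimal (such $p_0$ exists since $a,b$ are adjacent), let $r=\E_{p_0}X(a;\cdot)$ and $s=\E_{p_0}Y(a;\cdot)$, and consider the set of beliefs $p_1$ satisfying the orthogonality conditions
\begin{align*}
p_1-p_0 &\perp \vt{1}, \quad p_1-p_0 \perp \vt{u}(a)-\vt{u}(b),\\
p_1-p_0 &\perp \vt{X}(a), \quad p_1-p_0 \perp \vt{Y}(a).
\end{align*}
Along any segment $p_\alpha = (1-\alpha)p_0+\alpha p_1$ with such a $p_1$, both $a$ and $b$ remain optimal in the decision problem, and $(a,r,s)$ stays optimal under $V$, so the value function is affine on the segment.

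The next step transfers affineness to action $b$. Since $b$ is also optimal at every $p_\alpha$, joint incentivizability forces the optimal reports following $b$ to be the unique truthful pair, and the affineness of the value function over the segment forces that optimal pair to be \emph{constant} along the segment. Hence $\E_{p_\alpha}X(b;\cdot)$ and $\E_{p_\alpha}Y(b;\cdot)$ are both constant in $\alpha$, which gives $p_1-p_0\perp\vt{X}(b)$ and $p_1-p_0\perp\vt{Y}(b)$ for every admissible $p_1$. By the same standard linear-algebra argument as in Lemma~\ref{lem: adjacency} (the orthogonal complement of the four-dimensional span is annihilated by both $\vt{X}(b)$ and $\vt{Y}(b)$), I conclude
\[
\vt{X}(b),\,\vt{Y}(b)\in \mathrm{span}\bigl(\vt{1},\,\vt{u}(a)-\vt{u}(b),\,\vt{X}(a),\,\vt{Y}(a)\bigr).
\]
Projecting onto the sum-zero hyperplane (which kills $\vt{1}$ and replaces each vector by its bar) then yields the two displayed equations with appropriate coefficients $\rho_X,\rho_Y,\sigma_X^X,\sigma_X^Y,\sigma_Y^X,\sigma_Y^Y$.

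To finish I would verify the ``not all equal to $0$'' clause and the final collinearity clause. The coefficients cannot all vanish because $\bar{\vt{X}}(b)$ and $\bar{\vt{Y}}(b)$ are not both zero (an incentivizable question that is identically constant across states is excluded, or handled separately as trivial alignment), so at least one right-hand side is a nonzero combination. For the collinearity clause, if $\bar{\vt{X}}(a)$ or $\bar{\vt{X}}(b)$ is collinear with $\Delta_a^b$, I would argue, exactly as in the single-question proof, that one can choose the representation with $\rho\neq 0$ by swapping the roles of $a$ and $b$ and re-solving.

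The main obstacle I anticipate is the step forcing the optimal report pair following $b$ to be both unique and constant along the segment. With a single report this follows cleanly from strict convexity of the scoring component in one dimension; with two reports I must ensure that affineness of the joint value function over a positive-dimensional belief segment still pins down a single optimizer $(r',s')$ rather than a continuum, and that this optimizer does not drift as $\alpha$ varies. This requires care because the two-dimensional report space admits richer level-set geometry; the argument hinges on the fact that joint incentivizability makes the truthful pair the \emph{unique} maximizer at each belief, so any variation of the optimizer along the segment would contradict the affineness (a strictly convex value function cannot be affine on a segment unless the maximizer is constant). Making this uniqueness-plus-constancy argument airtight in two dimensions is where I would spend the most effort.
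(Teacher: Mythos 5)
Your proposal is correct and matches the paper's intended argument: the paper omits the proof precisely because it ``follows the same reasoning as that of Lemma~\ref{lem: adjacency},'' which is exactly what you do, adding the fourth orthogonality condition $p_1-p_0\perp\vt{Y}(a)$ and concluding that both $\vt{X}(b)$ and $\vt{Y}(b)$ lie in the four-dimensional span. The obstacle you flag at the end is not a real one: since the value function following $b$ is a pointwise maximum of functions affine in $\alpha$ and is itself affine on the segment, the (unique, truthful) maximizer at any interior point gives an affine minorant touching the envelope there and hence coinciding with it throughout, so the optimal pair $(r',s')$ is constant along the segment exactly as in the one-dimensional case.
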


The proof, which we omit, follows the same reasoning as that of Lemma \ref{lem: adjacency}. We leave to future research the details of how to use this lemma to identify precise conditions for joint incentivizability.\footnote{\citet{chen2026ask} study a closely related problem in which the researcher can ask the subject multiple elicitation questions.}

\appendix

\section{Proofs for Section \ref{sec: sufficient conditions}}

\begin{lem}\label{lem: simple questions}
For any $\vt{d}\in\R^\Theta$, $\vt{X}\left(a\right)=\vt{u}\left(a\right)+\vt{d}$
and $\vt{X}(a)=\vt{d}$ are incentivizable.
\end{lem}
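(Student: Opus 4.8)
The plan is to prove Lemma \ref{lem: simple questions} by an explicit BDM-type construction, exactly as sketched in the discussion following Proposition \ref{Observation 0}, and then verify the two requirements of incentivizability separately. I will treat both cases $\vt{X}(a)=\vt{u}(a)+\vt{d}$ and $\vt{X}(a)=\vt{d}$ in parallel, since the construction is identical once we fix an interval $[L,M]\subseteq\R$ containing every value $X(a;\theta)$ (which exists because $\Theta$ and $A$ are finite). The key object is the elicitation method
\[
V(r,a,\theta)=\int_{0}^{r}\frac{X(a;\theta)-L}{M-L}\,dx+\int_{r}^{1}\frac{x-L}{M-L}\,dx
+\text{(action term)},
\]
where after renormalizing so that $r$ ranges over $[L,M]$ one pays the fixed prize with probability $(X(a;\theta)-L)/(M-L)$ when $r>x$ and with probability $(x-L)/(M-L)$ otherwise, for $x$ drawn uniformly from $[L,M]$.

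First I would check the truthful-reporting requirement for a \emph{fixed} action $a$. Writing the expected payoff as a function of the report $r$ and differentiating, the standard BDM argument shows that $\E_p V(r,a,\theta)$ is maximized uniquely at $r=\E_p X(a;\cdot)$; the derivative with respect to $r$ is proportional to $\E_p X(a;\cdot)-r$, which is positive for $r$ below the true expectation and negative above it, giving strict optimality of truthful reporting. This step is routine and handles condition (ii) of the incentivizability definition regardless of which of the two forms $X$ takes.

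Next I would check that the action choice is not distorted, i.e.\ condition (i). Substituting the optimal report $r=\E_p X(a;\cdot)$ back into $V$, the expected payoff at the elicitation stage becomes a function only of $\E_p X(a;\cdot)$, and by the BDM algebra it equals (a positive affine transformation of) $-\tfrac{1}{2}\bigl(\E_p X(a;\cdot)\bigr)^2$ plus linear terms, i.e.\ a strictly convex, increasing-then-decreasing expression in the reported expectation. The crucial point is that, in the case $\vt{X}(a)=\vt{u}(a)+\vt{d}$, the part of this value that depends on $a$ enters only through $\E_p X(a;\cdot)=\E_p u(a;\cdot)+\E_p d$, and since $\E_p d$ is independent of $a$, maximizing the elicitation value over $a$ is equivalent to maximizing $\E_p u(a;\cdot)$. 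In the case $\vt{X}(a)=\vt{d}$, the elicitation value does not depend on $a$ at all. Either way I would add the decision-problem payoff $u(a;\theta)$ (scaled appropriately, as in the $\alpha$-mixing discussion in the model) so that the overall argmax over $a$ coincides exactly with $\arg\max_b \E_p u(b;\cdot)$, and then rescale into $[0,1]$ as permitted by the normalization.

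The main obstacle—though a minor one—is bookkeeping rather than conceptual: I must ensure that the combined payoff lands in $[0,1]$ (handled by choosing the mixing weight $\alpha$ small enough and applying a positive affine rescaling, which preserves both argmax conditions) and that the argmax over the joint $(r,a)$ is \emph{exactly} the asserted set rather than merely containing it. The latter requires checking that for suboptimal actions $a'$ the best attainable elicitation value is strictly dominated, which follows because adding the strictly action-separating term $\alpha u(a;\theta)$ breaks any ties that the elicitation value alone might leave; for the trivial form $\vt{X}(a)=\vt{d}$ this is the only thing forcing the correct action set. I expect the verification of strictness at points of indifference in the decision problem to be the one place demanding care, and I would resolve it by noting that the report component pins down $r$ uniquely for each optimal $a$, so the joint argmax is precisely $\{(\E_p X(a;\cdot),a):a\in\arg\max_b\E_p u(b;\cdot)\}$ as required.
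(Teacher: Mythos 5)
Your construction is the same as the paper's: a BDM payment $V(r,a,\theta)=X(a;\theta)(r-L)-\tfrac{1}{2}r^{2}$ (up to normalization), with truthful reporting verified by the first-order condition and non-distortion verified by examining the indirect value as a function of $\E_{p}X(a;\cdot)$. One point needs correcting, because it is the crux of condition (i): you describe the optimal-report value as ``a positive affine transformation of $-\tfrac{1}{2}\bigl(\E_{p}X(a;\cdot)\bigr)^{2}$ plus linear terms, i.e.\ a strictly convex, increasing-then-decreasing expression.'' This is internally inconsistent and, read literally, would break the argument: substituting $r=\E_{p}X(a;\cdot)$ gives $\tfrac{1}{2}y^{2}-Ly$ with $y=\E_{p}X(a;\cdot)$, which is convex and hence \emph{decreasing-then-increasing}, with minimum at $y=L$. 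If the value were genuinely increasing-then-decreasing, an action with lower $\E_{p}u(a;\cdot)$ could yield a strictly higher elicitation payoff, and the action choice would be distorted. What saves the argument --- and what the paper states explicitly --- is that $L$ is chosen strictly below every attainable value of $X(a;\theta)$, so $y>L$ always and the indirect value is strictly increasing on the relevant range; monotonicity, not mere dependence on $\E_{p}X(a;\cdot)$, is what makes maximizing the elicitation value equivalent to maximizing $\E_{p}u(a;\cdot)$. You set up the interval $[L,M]$ correctly, so the fix is only to invoke it at this step. The remaining differences are cosmetic: you add $\alpha u(a;\theta)$ in both cases, whereas the paper adds $u(a;\theta)$ only in the trivial case $\vt{X}(a)=\vt{d}$ (where the elicitation value is action-independent and something must break ties); adding it in the nontrivial case as well is harmless since it reinforces the same ranking.
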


\begin{proof}
Let $L,M\in\R$ be such that, for all $a$ and $\theta$, $L<u(a;\theta)+d(\theta)<M$ and $L<d(\theta)<M$.  If $\vt{X}\left(a\right)=\vt{u}\left(a\right)+\vt{d}$, let 
\[
V\left(r,a,\theta\right)=\intop_{L}^{r}X\left(a;\theta\right)dx+\intop_{r}^{M}xdx-\frac{M^2}{2}=\left(u\left(a;\theta\right)+d\left(\theta\right)\right)(r-L)-\frac{1}{2}r^{2},
\]
where $d(\theta)$ is the $\theta$-coordinate of $\vt{d}$. For this $V$, simple calculations show that the optimal choice of $r$ given $a$ is $\E_{p}X\left(a;\theta\right)$. Since the optimal $r$ is greater than $L$, the $V$-optimal choice of action $a$ is the same as the $u$-optimal one.  A similar argument applies if $\vt{X}(a)=\vt{d}$, in which case we let $V(r,a,\theta)=u\left(a;\theta\right)+d\left(\theta\right)(r-L)-\frac{1}{2}r^{2}$.
\end{proof}
    
\begin{proof}[Proof of Proposition \ref{Observation 0}]
Suppose $V$ incentivizes a question profile $X$. Let $\vt{Y}\left(a\right)=\gamma (a)\vt{X}\left(a\right)+\kappa(a) \vt{1}$ for some $\gamma,\kappa:A\longrightarrow\R$. Letting $W\left(r,a,\theta\right)=V\left(\frac{1}{\gamma (a)}\left(r-\kappa (a)\right),a,\theta\right)$, it is straightforward to verify that $W$ incentivizes $\vt{Y}$. The result now follows from Lemma \ref{lem: simple questions}.
\end{proof}

\section{Proofs for Section \ref{sec: necessary conditions}}

\begin{proof}[Proof of Lemma \ref{lem: adjacency}]
We first prove the latter formulation of the result (i.e., that $\bar{\vt{X}}\left(b\right)=\rho\Delta_a^b+\sigma\bar{\vt{X}}\left(a\right)$), and then prove the equivalence between the two formulations.


Let $V$ be an elicitation method that incentivizes $X$. Consider two beliefs $p_{0}$ and $p_{1}$ such that (i) actions $a$ and $b$ are both $u$-optimal, 
i.e., $a,b\in\arg\max_{a'} \E_{p_{k}} [u(a';\cdot)]$ for $k=0,1$, and (ii) the question $\vt{X}(a)$ attains the same value at $p_0$ and $p_1$, i.e., $\E_{p_{0}}[X(a;\cdot)]=\E_{p_{1}}[X(a;\cdot)]$. 
Letting $r=\E_{p_{0}}[X(a;\cdot)]$, it follows that $\left(a,r\right)$ is optimal given $V$ at all $p_{\alpha}=\alpha p_{1}+\left(1-\alpha\right)p_{0}$ for $\alpha\in [0,1]$. 
The value of information $V^*(p_\alpha)=\E_{p_{\alpha}} [V\left(r,a,\cdot\right)]$ is therefore affine in $\alpha$. 
Since $b$ is also a $u$-optimal action at each $p_\alpha$, the optimal expected payoff $\max_{s}\E_{p_{\alpha}}[V\left(s,b,\cdot\right)]$ must be affine in $\alpha$ as well. Therefore, as explained in Section \ref{sec: necessary conditions}, there exists some $r'$ such that, for each
$\alpha$, $r'\in\arg\max_{s}\E_{p_{\alpha}} [V\left(s,b,\cdot\right)$]. In particular, $\E_{p_{k}}[X(b;\cdot)]=r'$ for each $k=0,1$.


The adjacency of actions $a$ and $b$ implies that we can find an interior $p_0$ for which the set of $u$-optimal actions is $\{a,b\}$. Consider the linear subspace
\begin{align*}
  D&=\{\vt{d}\in\R^\Theta:\vt{d}\cdot \vt{1}=\vt{d}\cdot (\vt{u}(a)-\vt{u}(b))=\vt{d}\cdot\vt{X}(a)=0\}\\&=(\mathrm{span}\{\vt{1}, \vt{u}(a)-\vt{u}(b),\vt{X}(a)\})^\perp.  
\end{align*}
For each $\vt{d}\in D$, there is a sufficiently small $\varepsilon>0$ for which $p_1=p_0+\varepsilon\vt{d}$ (i) is a well-defined belief, (ii) $a$ and $b$ are both $u$-optimal at $p_1$, and (iii) $\vt{X}(a)$ attains the same value at $p_0$ as at $p_1$; each of these properties follows from one of the orthogonality conditions defining $D$. In particular, $p_0$ and $p_1$ satisfy properties (i) and (ii) from the previous paragraph. As a result, the above argument applies and shows that $\varepsilon\vt{d}=p_1-p_0\perp \vt{X}(b)$. Therefore, $\vt{X}(b)\perp D$.

By a standard linear algebra argument, it follows that \[\vt{X}(b)\in D^\perp=\mathrm{span}\left(\mathbf{1},\vt{u}(a)-\vt{u}(b),\vt{X}(a)\right).\]
Noting that, for any vector $\vt{v}$, $\bar{\vt{v}}$ differs from $\vt{v}$ by a scalar multiple of $\vt{1}$, and that $\bar{\vt{v}}\perp \vt{1}$, we obtain
$\bar{\vt{X}}(b)=\rho\left(\bar{\vt{u}}(a)-\bar{\vt{u}}(b)\right)+\sigma\bar{\vt{X}}(a)$
for some $\rho,\sigma\in\R$.

If $\bar{\vt{X}}(b)$ is not collinear with $\bar{\vt{u}}(b)-\bar{\vt{u}}(a)$, then we must have $\sigma\neq 0$, as needed. Otherwise, switching the roles of $a$ and $b$ in the preceding argument yields that $\bar{\vt{X}}(a)$ is also collinear with
$\bar{\vt{u}}(b)-\bar{\vt{u}}(a)$, and therefore one can take $\rho$ and $\sigma$ to be nonzero.

We now show that the above conclusion is equivalent to alignment on pairs of adjacent actions. Note first that $X$ is nontrivially aligned with $u$ on a set of actions $B$ if and only if $\bar{\vt{X}}(a)\equiv_B \gamma(a) \bar{\vt{u}}(a) + \vt{d}$ for some $\gamma(a)\in\R$ and $\vt{d}\in\R^\Theta$, and is trivially aligned with $u$ on $B$ if and only if $\bar{\vt{X}}(a)\equiv_B \gamma(a)\vt{d}$ for some $\gamma(a)\in\R$ and $\vt{d}\in\R^\Theta$. In either case, eliminating $\vt{d}$ leads to an equation of the form $\bar{\vt{X}}\left(b\right)=\rho\Delta_a^b+\sigma\bar{\vt{X}}\left(a\right)$ with $\sigma\neq 0$.

For the converse, if $\rho=0$, then $X$ is trivially aligned with $u$ on $\{a,b\}$. If $\rho\neq0$, then the alignment is nontrivial, with
$
\vt{d}=\frac{1}{\rho}\bar{\vt{X}}(b)-\bar{\vt{u}}(b)
=\frac{\sigma}{\rho}\bar{\vt{X}}(a)-\bar{\vt{u}}(a),
$
$\gamma(a)=\rho/\sigma$, and $\gamma(b)=\rho$.
\end{proof}
 
\section{Proofs for Section \ref{sec: trees}}


\begin{lem}
\label{lem:Splitting action} Let $a$ be a splitting action and $B_0$ and $B_1$ be sets not containing $a$ such that $\left\{B_0\cup\{a\},B_1\cup\{a\}\right\}$ is a splitting collection. 
Then, for any pair of beliefs $p_0$ and $p_1$ such that, for each $i$, some action $b_{i}\in B_{i}$
is $u$-optimal at $p_{i}$, there is a convex combination $p=\alpha p_{1}+\left(1-\alpha\right)p_{0}$
such that $a$ is $u$-optimal at $p$.
\end{lem}

\begin{proof}
Suppose not. Then all actions that are $u$-optimal at convex combinations of $p_{0}$ and $p_{1}$ must be either from $B_{0}$ or $B_{1}$. Hence, for some such convex combination $p'$, the set of $u$-optimal actions contains at least one element from each of $B_0$ and $B_1$. Then in any neighbourhood of $p'$, there must be a belief at which there is exactly one member of each $B_i$ that is $u$-optimal, 
contradicting the fact that no action in $B_0$ is adjacent to any action in $B_1$.
\end{proof}

\begin{proof}[Proof of Proposition \ref{prop: piecewise aligned sufficient}]
We first derive explicit formulas for an elicitation method that incentivizes a question $X$ aligned with $u$. 
When 
$X$ is nontrivially aligned with $u$, using the BDM construction from the proof of Proposition \ref{Observation 0} gives
\begin{align}
\label{eq:BDM expression}
V^{BDM}\left(r,a,\theta\right) & =\left(u\left(a;\theta\right)+d\left(\theta\right)\right)\left(\frac{1}{\gamma (a)}\left(r-\kappa(a)\right)-L\right)-\frac{1}{2}\left(\frac{1}{\gamma(a)}\left(r-\kappa(a)\right)\right)^{2}\\
 \nonumber & =\frac{1}{\gamma(a)^{2}}\left[\left(X\left(a;\theta\right)-\kappa(a)\right)\left(r-\kappa(a)-\gamma(a)L\right)-\frac{1}{2}\left(r-\kappa (a)\right)^{2}\right]\\
\nonumber  & =\frac{1}{\gamma(a)^{2}}\left(X\left(a;\theta\right)-\frac{1}{2}r-\frac{1}{2}\kappa(a)\right)\left(r-\kappa(a)\right) - \frac{1}{\gamma(a)} (X(a;\theta)-\kappa(a))L.
\end{align}
Similarly, if 
$X$ is trivially aligned with $u$, adding $u(a;\theta)$ to either of the last two lines gives $V^{BDM}$.

When 
$X$ is nontrivially aligned with $u$, the expected payoff of a subject who chooses action $a$ and then chooses $r$ optimally is equal to 
\begin{align}
\label{eq:expected BDM payoff}
\max_{r}\biggl[\E_{p}\left[u\left(a;\cdot\right)+d\left(\cdot\right)\right]&\left(\frac{1}{\gamma(a)}\left(r-\kappa(a)\right)-L\right)-\frac{1}{2}\left(\frac{1}{\gamma(a)}\left(r-\kappa(a)\right)\right)^{2}\biggr] \\
\nonumber =&\max_{x}\left[\E_{p}\left[u\left(a;\cdot\right)+d\left(\cdot\right)\right] (x-L)-\frac{1}{2}x^{2}\right]\\
\nonumber =&\frac{1}{2} \left(\E_{p}\left[u\left(a;\cdot\right)+d\left(\cdot\right)\right]\right)^2 - L\E_p \left[u\left(a;\cdot\right)+d\left(\cdot\right)\right].
\end{align}
Similarly, the expected payoff is equal to $\E_p[u(a;\cdot)] + \left(\E_p [d(\cdot)]\right)^2/2-L\E_p[d(\cdot)]$ if $X$ is trivially aligned with $u$.

To keep the notation simple, we present the argument only for the case in which the splitting collection contains two elements, $A_0$ and $A_1$, with splitting action $a_0$. Extending the argument to the general case is straightforward.

We construct an elicitation method on each $A_i$, then show that they agree on $a_0$ and therefore give rise to well-defined elicitation method on the full action set, $A$.

First suppose $X$ is nontrivially aligned with $u$ on each $A_i$. For each $i=0,1$, let $\left(\gamma_{i},\kappa_{i},\vt{d}_{i}\right)$ be such that $\vt{X}(a)\equiv_{A_i} \gamma_i(a) (\vt{u}(a)+\vt{d})+\kappa(a)\vt{1}$. 
Define the elicitation methods
\begin{multline}\label{eq:V-split}
V_{i}\left(r,a,\theta;w_{i},\omega_{i}\right) 
 =\left(\frac{\gamma_i(a_{0})}{\gamma_i(a)}\right)^{2}\left(X\left(a;\theta\right)-\frac{1}{2}r-\frac{1}{2}\kappa_i(a)\right)\left(r-\kappa_i (a)\right)\\
 +w_{i}\left(\theta\right)-\omega_{i} - \frac{\gamma_i(a_0)^2}{\gamma_i(a)} (X(a;\theta)-\kappa_i(a))L, 
\end{multline}
where $w_{i}\in\R^{\Theta}$ and $\omega_{i}\in\R$. Note that this expression differs from the expression for $V^{BDM}$ in \eqref{eq:BDM expression} only by multiplication by a positive constant and addition of a function that depends only on the state. As neither of these changes affects the optimal choices of $a$ and $r$, it follows from the argument in the proof of Proposition \ref{Observation 0} that $V_i$ incentivizes $X$ on $A_i$ (i.e., in the decision problem with actions restricted to $A_i$).

Notice that
\begin{align*}
 V_{1}\big(&r,a_{0},\theta;w_{1},\omega_{1}\big)-V_{0}\left(r,a_{0},\theta;w_{0},\omega_{0}\right)\\
&= \left(X\left(a_{0};\theta\right)-\frac{1}{2}r-\frac{1}{2}\kappa_1(a_{0})\right)\left(r-\kappa_1(a_{0})\right)-\left(X\left(a_{0};\theta\right)-\frac{1}{2}r-\frac{1}{2}\kappa_0(a_{0})\right)\left(r-\kappa_0 (a_{0})\right)\\
 &\hspace{16pt}+\left(w_{1}\left(\theta\right)-w_{0}\left(\theta\right)\right)-\left(\omega_{1}-\omega_{0}\right)-\gamma_1(a_0)(X(a_0;\theta)-\kappa_1(a_0))L + \gamma_0(a_0)(X(a_0;\theta)-\kappa_0(a_0))L\\
&=  X\left(a_{0};\theta\right)\left(\kappa_0(a_{0})-\kappa_1(a_{0})+L\gamma_0(a_0)-L\gamma_1(a_0)\right)+\frac{1}{2}\left(\left(\kappa_1(a_{0})\right)^{2}-\left(\kappa_0(a_{0})\right)^{2}\right)\\
&\hspace{80pt}+\left(w_{1}\left(\theta\right)-w_{0}\left(\theta\right)\right)-\left(\omega_{1}-\omega_{0}\right) + (\gamma_1(a_0)\kappa_1(a_0)-\gamma_0(a_0)\kappa_0(a_0))L.
\end{align*}

Given any $w_{0}$ and $\omega_{0}$, let
\[
w_1(\theta) = w_0(\theta) - X\left(a_{0};\theta\right)\left(\kappa_0(a_{0})-\kappa_1(a_{0})+L\gamma_0(a_0)-L\gamma_1(a_0)\right)
\]
for each $\theta$, and
\[
\omega_1 = \frac{1}{2}\left(\kappa_1(a_{0})^{2}-\kappa_0(a_{0})^{2}\right)+\omega_0 + (\gamma_1(a_0)\kappa_1(a_0)-\gamma_0(a_0)\kappa_0(a_0))L.
\]
Then $V_{1}\big(r,a_{0},\theta;w_{1},\omega_{1}\big)=V_{0}\left(r,a_{0},\theta;w_{0},\omega_{0}\right)$ for all $r$ and $\theta$.\footnote{In the general case with splitting collection $\{A_1,\dots,A_k\}$, one can recursively define each $w_{i+1}$ and $\omega_{i+1}$ given $w_i$ and $\omega_i$ in an analogous fashion.}


Let $V\left(r,a,\theta\right)=V_{i}\left(r,a,\theta;w_i,\omega_i\right)$ for each $a\in A_{i}$. Because $V_0$ and $V_1$ agree for action $a_{0}$, $V$ is well defined.

In case $X$ is trivially aligned with $u$ on some $A_i$, we add $\gamma_i(a_0)^2 u(a;\theta)$ to the expression for $V_i(r,a,\theta;w_i,\omega_i)$ in \eqref{eq:V-split} and adjust the definition of $w_1(\theta)$ accordingly to include any such additional $u(a_0;\theta)$ terms.

To verify that $V$ incentivizes $X$, it suffices to show that at any belief $p$ at which no action in $A_i$ is $u$-optimal, $\arg\max_a \max_r \E_p [V(r,a,\cdot)]\subseteq A_j$, where $j\neq i$. Without loss of generality, take $i=1$ and $j=0$, and let $p_0$ denote such a belief. 

Note first that, by \eqref{eq:expected BDM payoff}, the expected value from choosing an action $a\in A_i$ followed by an optimal choice of $r$ is equal to
\begin{multline}
\max_{r}\E_{p}[V_{i}(r,a,\cdot)]=\gamma_i(a_{0})^{2}\frac{1}{2}\left(\E_{p}[u\left(a;\cdot\right)]+\E_{p}[d\left(\cdot\right)]\right)^{2}\\
+\E_{p}[w_{i}(\cdot)]-\omega_{i}- L\gamma_i(a_0)^2\E_p \left[u\left(a;\cdot\right)+d\left(\cdot\right)\right].\label{eq:V_i exp payoff}
\end{multline}
It follows that the expected value from an action $b\in A_i$ is at least as large as that from an action $a\in A_i$ if and only if
\[
\frac{1}{2}\left( \E_p[u(b;\cdot)]+\E_p[d(\cdot)] \right)^2 - L \E_p[u(b;\cdot)] \geq \frac{1}{2}\left( \E_p[u(a;\cdot)]+\E_p[d(\cdot)] \right)^2 - L \E_p[u(a;\cdot)],
\]
which holds if and only if $\E_p[u(b;\cdot)]\geq \E_p[u(a;\cdot)]$ since the function $(x+y)^2/2 -Lx$ is increasing in $x$ for $x+y>L$ and $L$ satisfies $\E_p[u(a';\cdot)]+\E_p[d(\cdot)]>L$ for all actions $a'$.

\sloppy
Suppose for contradiction that there is some $b\in A_{1}\backslash\left\{ a_{0}\right\} $
such that $b\in\arg\max_{a}\max_{r}\E_{p_{0}}V(r,a,\cdot)$. It follows that 
\[
\max_{r}\E_{p_{0}}V_{1}(r,b,\cdot)=\max_{r}\E_{p_{0}}V(r,b,\cdot)\geq 
\max_{r}\E_{p_{0}}V(r,a_0,\cdot)=\max_{r}\E_{p_{0}}V_{1}(r,a_{0},\cdot).
\]
By the observation in the preceding paragraph,
$\E_{p_{0}}u\left(b;\cdot\right) \geq\E_{p_0}u\left(a_{0};\cdot\right).$

\fussy



Let $p_{1}$ be a belief at which $b$ is strictly $u$-optimal, i.e.,
$\left\{ b\right\} =\arg\max_{a}\E_{p_{1}}[u\left(a;\cdot\right)]$. By
Lemma \ref{lem:Splitting action}, action $a_{0}$ must be $u$-optimal at some convex combination $p=\alpha p_{1}+\left(1-\alpha\right)p_{0}$.
At the same time, the above inequalities imply that 
\begin{align*}
\E_{p}[u\left(a_{0};\cdot\right)] & =\alpha\E_{p_{1}}[u\left(a_{0};\cdot\right)]+\left(1-\alpha\right)\E_{p_{0}}[u\left(a_{0};\cdot\right)]\\
 & <\alpha\E_{p_{1}}[u\left(b;\cdot\right)]+\left(1-\alpha\right)\E_{p_{0}}[u\left(b;\cdot\right)]
 =\E_{p}[u\left(b;\cdot\right)],
\end{align*}
contradicting the $u$-optimality of $a_0$ at $p$. 
\end{proof}

\section{Proofs for Section \ref{sec: complete} and extension of Theorem \ref{prop:complete adjacency regular}}\label{app: complete}

\begin{proof}[Proof of Lemma \ref{lem:adjacency cycles}]
Without loss of generality, let $a_0\in C$ be such that either $\bar{\vt{X}}(a_0)= 0$ or $\bar{\vt{X}}(a_0)\notin V_C$. (If no such $a_0$ exists, the lemma holds trivially.)

By Lemma \ref{lem: adjacency}, for each $i=1,\dots,n$, we have 
$\bar{\vt{X}}\left(a_{i}\right)
=\rho_{i}\Delta_{a_{i-1}}^{a_{i}}+\sigma_{i}\bar{\vt{X}}\left(a_{i-1}\right)$
for some $\sigma_i\neq0$ and some $\rho_i$. Iterating these equations gives 
\[
\bar{\vt{X}}\left(a_{0}\right)=\bar{\vt{X}}\left(a_{n}\right) =\sum_{i=1}^{n}\Gamma_{i}\rho_{i}\Delta_{a_{i-1}}^{a_{i}}+\Gamma_{0}\bar{\vt{X}}\left(a_{0}\right),
\]
where $\Gamma_{i}=\sigma_{n}\cdots\sigma_{i+1}$ and $\Gamma_{n}=1$.
Because $\Delta_{a_{0}}^{a_{1}}+\cdots+\Delta_{a_{n-1}}^{a_{n}}=0$,
we get
\begin{align*}
\sum_{i=1}^{n-1}\left(\Gamma_{i}\rho_{i}-\rho_{n}\right)\Delta_{a_{i-1}}^{a_{i}}+\left(\Gamma_{0}-1\right)\bar{\vt{X}}\left(a_{0}\right)=0.
\end{align*}
Since either $\bar{\vt{X}}(a_0)= 0$ or $\bar{\vt{X}}(a_0)\notin V_C$, it follows that $\sum_{i=1}^{n-1}\left(\Gamma_{i}\rho_{i}-\rho_{n}\right)\Delta_{a_{i-1}}^{a_{i}}=0.$ Internal independence implies that $\Gamma_{i}\rho_{i}=\rho_{n}$ for each $i=1,\dots,n-1$. 

If $\rho_n=0$, then $\rho_i=0$ for each $i$ since $\Gamma_i\neq 0$. In this case, all $\bar{\vt{X}}(a_i)$ are collinear, which implies that $X$ is trivially aligned with $u$ on $C$. 

For the case of $\rho_n\neq 0$, first note that, by the same iteration as above, for each $k$,
\begin{align*}
\bar{\vt{X}}(a_k) &=\sum_{i=1}^{k-1}\rho_i \frac{\Gamma_i}{\Gamma_k}\Delta_{a_{i-1}}^{a_i} + \frac{\Gamma_0}{\Gamma_k}\bar{\vt{X}}(a_0)
= \frac{\rho_n}{\Gamma_k}\sum_{i=1}^{k-1}\Delta_{a_{i-1}}^{a_i} + \frac{\Gamma_0}{\Gamma_k}\bar{\vt{X}}(a_0)\\
&= \frac{\rho_n}{\Gamma_k}(\bar{\vt{u}}(a_k)-\bar{\vt{u}}(a_0)) + \frac{\Gamma_0}{\Gamma_k}\bar{\vt{X}}(a_0).
\end{align*}
Letting $\gamma(a_k)=\rho_k=\rho_{n}/\Gamma_k\neq0$ and $\vt{d}=-\bar{\vt{u}}\left(a_{0}\right)+\rho_{n}^{-1}\Gamma_{0}\bar{\vt{X}}\left(a_{0}\right)$, we have $\bar{\vt{X}}(a_k)=\gamma(a_k)\left(\bar{\vt{u}}(a_k) +\vt{d})\right)$ for all $k$, and thus $X$ is (nontrivially) aligned with $u$ on $C$.
%
\end{proof}

\begin{proof}[Proof of Lemma \ref{lem:combine sets}]
Alignment with $u$ on $B$ implies that there exist $\gamma_{0}^{B},\gamma_{1}^{B}\in\R$ and $\vt{d}^{B}\in\R^{\Theta}$
such that 
\[
\gamma_{0}^{B}\bar{\vt{X}}(a_{0})-\bar{\vt{u}}(a_{0})
=\gamma_{1}^{B}\bar{\vt{X}}(a_{1})-\bar{\vt{u}}(a_{1})=\vt{d}^{B}.
\]
Together with the analogous equation for $D$, we obtain
\[
\gamma_{0}^{B}\bar{\vt{X}}(a_{0})-\gamma_{1}^{B}\bar{\vt{X}}(a_{1}) =\bar{\vt{u}}(a_{0})-\bar{\vt{u}}(a_{1})
\text{ and }\gamma_{0}^{D}\bar{\vt{X}}(a_{0})-\gamma_{1}^{D}\bar{\vt{X}}(a_{1})  =\bar{\vt{u}}(a_{0})-\bar{\vt{u}}(a_{1}).
\]
Subtracting one equation from the other leads to
\[
\left(\gamma_{0}^{B}-\gamma_{0}^{D}\right)\bar{\vt{X}}(a_{0})-\left(\gamma_{1}^{B}-\gamma_{1}^{D}\right)\bar{\vt{X}}(a_{1})=0.
\]
Since $\bar{\vt{X}}(a_0)$ and  $\bar{\vt{X}}(a_1)$ are linearly independent, it must be that $\gamma_{k}^{B}=\gamma_{k}^{D}$
for each $k=0,1$. This in turn implies that $\vt{d}^{B}=\vt{d}^{D}=\vt{d}$.

All that remains is to show that for any other action $a\in B\cup D$, the corresponding parameters $\gamma^B_a$ and $\gamma^D_a$ are equal. Since $\bar{\vt{X}}(a)$ cannot be collinear with both $\bar{\vt{X}}(a_0)$ and $\bar{\vt{X}}(a_1)$, we can repeat the argument replacing one of $a_0$ or $a_1$ with $a$ to obtain $\gamma^B_a=\gamma^D_a$.
\end{proof}

\subsection{General necessary conditions}

A set of actions $B\subseteq A$ is \emph{cycle-rich} if it contains at least four elements and, for any proper subset $B^\prime\subset B$ with at least three elements, there exists $a\in B\setminus B^\prime$ such that 
\[
\bigcap \left\{V_C:C\text{ is internally independent},
a\in C,\text{ and }
|C\cap B^\prime|\geq2
\right\}=\{\vt{0}\}.
\]
The intersection above goes over all internally independent cycles that contain $a$ and at least two elements of $B^\prime$.

\begin{example}\label{ex: cycle rich}
    Consider a variant of Example \ref{ex: complete adjacency thm} in which there is an additional safe action. Thus $A=\Theta\cup\{a_s\}$ with
    $u(a;\theta)= r_\theta \cdot \mathbb{1}\{a=\theta\} + s\cdot \mathbb{1}\{a=a_s\}.$
    Suppose in addition that $\Theta=\{\theta_0,\theta_1,\theta_2,\theta_3\}$, with $r_\theta=1/2$ for $\theta=\theta_0,\theta_1$ and $r_\theta=1$ for $\theta=\theta_2,\theta_3$. Let $s=3/10$. Then the adjacency graph 
    is incomplete since actions $\theta_0$ and $\theta_1$ are not adjacent, but $A$ is cycle-rich.
\end{example}

\begin{thm}\label{thm: aligned representation necessary}
Suppose $B\subseteq A$ is cycle-rich. If $X$ is incentivizable, then it is aligned with $u$ on $B$.
\end{thm}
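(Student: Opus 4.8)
The plan is to prove alignment on $B$ by induction, growing a subset on which $X$ is aligned: Lemma~\ref{lem:adjacency cycles} supplies, through each new action, an internally independent cycle on which $X$ is aligned, and Lemma~\ref{lem:combine sets} glues that cycle onto the set built so far. To start, I would dispose of the degenerate case in which the vectors $\{\bar{\vt{X}}(a):a\in B\}$ all lie on a single line through the origin: then $X$ is \emph{trivially} aligned with $u$ on $B$ and there is nothing to prove. So assume there are $a_0,a_1\in B$ with $\bar{\vt{X}}(a_0),\bar{\vt{X}}(a_1)$ linearly independent; note that any aligned set containing such a pair is necessarily \emph{nontrivially} aligned, so its defining vector $\vt{d}$ is pinned down (as in the proof of Lemma~\ref{lem:combine sets}).

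Let $\mathcal{S}$ be the family of subsets $S\subseteq B$ with $|S|\ge 3$ on which $X$ is aligned and which contain two actions with linearly independent $\bar{\vt{X}}$ values. I would show that a maximal $S^\ast\in\mathcal{S}$ must equal $B$, which gives the theorem. The engine is the following use of cycle-richness: for $S^\ast\subsetneq B$ it produces an action $a\in B\setminus S^\ast$ with $\bigcap\{V_C:C\text{ internally independent},\,a\in C,\,|C\cap S^\ast|\ge 2\}=\{\vt{0}\}$. Since the intersection over the empty family is all of $\R^{\Theta}$, this family of cycles is nonempty. Moreover, if every cycle $C$ in it fell under the second alternative of Lemma~\ref{lem:adjacency cycles}, then $\bar{\vt{X}}(a)\in V_C\setminus\{\vt{0}\}$ for each, whence $\vt{0}\ne\bar{\vt{X}}(a)\in\bigcap V_C=\{\vt{0}\}$, a contradiction. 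Hence some such cycle $C$ falls under the first alternative: $X$ is aligned with $u$ on $C$, with $a\in C$ and $|C\cap S^\ast|\ge 2$.

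I would then invoke Lemma~\ref{lem:combine sets} on $S^\ast$ and $C$ to obtain alignment on $S^\ast\cup C$, a strictly larger member of $\mathcal{S}$ (it still contains the linearly independent pair from $S^\ast$), contradicting maximality and forcing $S^\ast=B$. Nonemptiness of $\mathcal{S}$ follows from the same cycle-finding step applied to any proper $S$ with $|S|\ge3$: the resulting cycle $C$ contains $a\notin S$ together with at least two actions of $S$, so it has at least three distinct actions and $\dim V_C\ge 2$; provided $C$ is nontrivially aligned --- the delicate point discussed below --- its $\bar{\vt{X}}$ values cannot be confined to a line and $C\in\mathcal{S}$.

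The step I expect to be the main obstacle is verifying the hypotheses that make Lemma~\ref{lem:combine sets} applicable, namely that the aligned cycle $C$ is \emph{nontrivially} aligned and that the overlap $S^\ast\cap C$ contains two actions whose $\bar{\vt{X}}$ values are linearly independent. Cycle-richness only guarantees $|C\cap S^\ast|\ge2$, and a priori these shared values could be collinear, or $C$ could be trivially aligned; both failures would leave $\vt{d}_C$ either undefined or unmatched to the $\vt{d}$ determined by $S^\ast$. Overcoming this is where internal independence and the full strength of cycle-richness combine: internal independence ($\dim V_C\ge2$) keeps the $\bar{\vt{u}}(\cdot)$ --- and hence, under alignment, the $\bar{\vt{X}}(\cdot)$ --- values on $C$ off any single line, while the freedom to feed cleverly chosen subsets $B'$ (for instance ones meeting the linearly independent pair $a_0,a_1$ already present in $S^\ast$) into the cycle-richness condition can be used to route an aligned cycle through a non-collinear pair of $S^\ast$. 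Carrying this routing out carefully --- equivalently, forcing $\vt{d}_C=\vt{d}$ so that $\bar{\vt{X}}(a)=\gamma(a)(\bar{\vt{u}}(a)+\vt{d})$ extends the alignment to $a$ --- is the crux of the argument.
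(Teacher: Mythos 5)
Your overall skeleton matches the paper's: take a maximal subset on which $X$ is aligned, use cycle-richness plus Lemma~\ref{lem:adjacency cycles} to produce an aligned internally independent cycle $C$ through a new action $a$ with $|C\cap S^\ast|\geq 2$, and glue with Lemma~\ref{lem:combine sets}. But the step you yourself flag as ``the crux'' is a genuine gap, and your proposed remedy does not close it. Cycle-richness gives you no control over \emph{which} two elements of $B'$ the cycle passes through: the definition quantifies over all internally independent cycles containing $a$ that meet $B'$ in at least two points, and only tells you that the intersection of their spaces $V_C$ is $\{\vt{0}\}$. So you cannot ``route'' the aligned cycle through a prescribed non-collinear pair of $S^\ast$, and the case where the overlap $C\cap S^\ast$ consists of actions with collinear $\bar{\vt{X}}$ values (or where $C$ is only trivially aligned) is left unresolved. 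The same issue undermines your nonemptiness argument for $\mathcal{S}$, since the first cycle produced might be trivially aligned and hence excluded from your family. A secondary slip: in your opening reduction, ``all $\bar{\vt{X}}(a)$ lie on a line through the origin'' does not imply trivial alignment when some $\bar{\vt{X}}(a)$ vanish and others do not (the zero vector is not collinear, in the paper's sense, with a nonzero one), so that case is not actually disposed of.

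The paper closes the gap you identified not by choosing a better cycle but by a case analysis on the collinear overlap pair $a_0,a_1\in C\cap B'$, supported by an auxiliary lemma (Lemma~\ref{lem:trivial or colinear Delta}): if $X$ is nontrivially aligned on a set and two of its $\bar{\vt{X}}$ values are collinear, they must also be collinear with $\Delta_{a_0}^{a_1}$. If $\bar{\vt{X}}(a_0),\bar{\vt{X}}(a_1)$ are collinear but \emph{not} collinear with $\Delta_{a_0}^{a_1}$, that lemma forces both alignments (on $C$ and on $B'$) to be trivial, whence all $\bar{\vt{X}}$ values on $C\cup B'$ are collinear and the union is trivially aligned---contradicting maximality. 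If instead they \emph{are} collinear with $\Delta_{a_0}^{a_1}$, the Adjacency Lemma places $\bar{\vt{X}}(a)$ in $\mathrm{span}(\Delta_a^{a_0},\Delta_{a_0}^{a_1})\subseteq V_C$, so the defining property of the chosen cycle (either $\bar{\vt{X}}(a)=\vt{0}$ or $\bar{\vt{X}}(a)\notin V_C$) forces $\bar{\vt{X}}(a)=\vt{0}$; another application of the Adjacency Lemma then makes $\bar{\vt{X}}(a_0)$ collinear with $\Delta_a^{a_0}$, contradicting its collinearity with $\Delta_{a_0}^{a_1}$ by internal independence. To repair your proof you would need to supply this (or an equivalent) argument; the maximality-plus-gluing framework alone does not suffice.
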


\begin{proof}
We begin with the following observation. 

\begin{lem} \label{lem:trivial or colinear Delta}
    If $X$ is nontrivially aligned with $u$ on $B$, then 
    for any $a,b\in B$, if $\bar{\vt{X}}(a)$ and $\bar{\vt{X}}(b)$ are collinear, they are also collinear with $\Delta_a^b$. 
\end{lem}

\begin{proof}
Let $a,b\in B$ be such that $\bar{\vt{X}}(a)$ and $\bar{\vt{X}}(b)$ are collinear and let $\gamma(\cdot)\neq 0$ and $\vt{d}$ be such that $\bar{\vt{X}}(a)=\gamma(a) \left( \bar{\vt{u}}(a) +\vt{d}\right)$ and $\bar{\vt{X}}(b)=\gamma(b) \left( \bar{\vt{u}}(b) +\vt{d}\right)$. By the collinearity assumption, there exists $\alpha\neq0$ such that $\bar{\vt{u}}(a)+\vt{d}=\alpha\left(\bar{\vt{u}}(b)+\vt{d}\right).$ Because $\bar{\vt{u}}(a)\neq\bar{\vt{u}}(b)$, it must be that $\alpha\neq1$. It follows that $\vt{d}=\frac{1}{1-\alpha}\left(\alpha\bar{\vt{u}}(b)-\bar{\vt{u}}(a)\right)$ and 
\[
\frac{1}{\gamma(a)}\bar{\vt{X}}(a)
=\bar{\vt{u}}(a)+\frac{1}{1-\alpha}\left(\alpha\bar{\vt{u}}(b)-\bar{\vt{u}}(a)\right)
=\frac{\alpha}{1-\alpha}\left(\bar{\vt{u}}(b)-\bar{\vt{u}}(a)\right),
\]
as needed.
\end{proof}


    Cycle-richness implies that there is some $a\in B$ and a collection of internally independent cycles $\tilde{C}$ with $a\in \tilde{C}\subseteq B$ for which the intersection of the spaces $V_{\tilde{C}}$ is $\{\vt{0}\}$. Thus either $\bar{\vt{X}}(a)=\vt{0}$ or $\bar{\vt{X}}(a)\notin V_C$ for some such cycle $C$. By Lemma \ref{lem:adjacency cycles}, $X$ is aligned with $u$ on $C$.

    Let $B^\prime\subseteq B$ be a subset of $B$ of maximal cardinality on which $X$ is aligned with $u$. By the above argument, $B^\prime$ has at least three elements. Suppose for contradiction that $B^\prime\neq B$. By the same argument as in the preceding paragraph, cycle-richness implies that there exists $a\in B\setminus B^\prime$ and a cycle $C$ containing $a$ such that $|C\cap B^\prime|\geq2$ and either $\bar{\vt{X}}(a)=\vt{0}$ or $\bar{\vt{X}}(a)\notin V_C$. By Lemma \ref{lem:adjacency cycles}, $X$ is aligned with $u$ on $C$. 
    
    If there exists a pair of distinct actions $a_0,a_1\in C\cap B^\prime$ such that $\bar{\vt{X}}(a_0)$ and $\bar{\vt{X}}(a_1)$ are not collinear, Lemma \ref{lem:combine sets} implies that $X$ is aligned with $u$ on $C\cup B^\prime$, contradicting the maximality of $B^\prime$. 
    
    From now on, suppose that $a_0,a_1\in C\cap B^\prime$ are distinct actions such that $\bar{\vt{X}}(a_0)$ and $\bar{\vt{X}}(a_1)$ are collinear.
    
    If $\bar{\vt{X}}(a_0)$ and $\bar{\vt{X}}(a_1)$ are not collinear with $\Delta_{a_0}^{a_1}$, then Lemma  \ref{lem:trivial or colinear Delta} implies that the alignment with $u$ on $C$ and that on $B^\prime$ must both be trivial, which further implies that, for all $b\in C\cup B^\prime$, $\bar{\vt{X}}(b)$ is collinear with $\bar{\vt{X}}(a_0)$. Thus, $X$ is trivially aligned with $u$ on $C\cup B^\prime$, contradicting the maximality of $B^\prime$.

    If $\bar{\vt{X}}(a_0)$ and $\bar{\vt{X}}(a_1)$ are collinear with $\Delta_{a_0}^{a_1}$, then, by Lemma \ref{lem: adjacency}, $\bar{\vt{X}}(a)\in \text{span}(\Delta_a^{a_0}, \Delta_{a_0}^{a_1})=V_C$. The choice of cycle $C$ implies that $\bar{\vt{X}}(a)=0$. Another application of Lemma \ref{lem: adjacency} shows that $\bar{\vt{X}}(a_0)$ is collinear with $\Delta_a^{a_0}$, which contradicts collinearity with $\Delta_{a_0}^{a_1}$ due to the independence assumptions.  
\end{proof}

\begin{proof}[Proof of Theorem \ref{prop:complete adjacency regular}]
    It suffices to show that the set of all actions is cycle-rich; the result then follows from Theorem \ref{thm: aligned representation necessary}.

    Take any proper subset $B\subset A$ with at least three actions $b_0,b_1,b_2\in B$ and let $a\in A\setminus B$. Consider cycles $C_i$ with vertices $B\setminus \{b_i\} \cup \{a\}$. Then, $V_{C_i}= \text{span} \{ \Delta_a^{b_j}:j\neq i \}$. The independence assumption implies that $\bigcap_i V_{C_i}=\{\vt{0}\}$.
\end{proof}


It is straightforward to extend Theorem \ref{thm: aligned representation necessary} to problems in which there is a splitting collection $\{A_0,\dots,A_k\}$ such that, for each $l$, either $A_l$ is cycle-rich or it contains exactly two elements. In that case, only questions piecewise aligned with $u$ are incentivizable.



\section{Proof of Theorem \ref{prop:product problems}}\label{sec: Proof of Product Problems}

This appendix is divided into the following subsections. Section \ref{sec:Decomposition} shows that each incentivizable question in a product problem can be decomposed into linearly independent vectors that correspond to different tasks. We use this decomposition together with Lemma \ref{lem: adjacency} to derive restrictions on questions for adjacent actions. Subsections  \ref{sec:Exact cycles} to \ref{sec:Exact cycles - Free transitions} show that all adjacency cycles are exact. Section \ref{sec:Exact cycles - Tools} develops useful tools, and Sections \ref{sec:Exact cycles - No free transitions} and \ref{sec:Exact cycles - Free transitions} deal with different classes of cycles. 
Section \ref{sec: Product proof conclusion} concludes the proof. 

\subsection{Decomposition}\label{sec:Decomposition}

Assume throughout that $X$ is incentivizable.

For each $i$ and each $t_{i}\in\Theta_{i}$, let
$\vt{e}_{i}\left(t_{i}\right)\in\R^{\Theta}$ be the vector such that, for
each $\theta\in\Theta$, $\vt{e}_{i}\left(\theta|t_{i}\right)=\mathbf{1}\left\{ \theta_{i}=t_{i}\right\} $.  
Let $E_i=\text{span}\left\{\vt{\Delta}^{a_{i}}_{b_{i}}:a_i,b_i\in A_i\right\}\subseteq \mathbb{R}^\Theta$ be the subspace of $\mathbb{R}^\Theta$ spanned bu the utility difference vectors (see also footnote \ref{ft:lin independence}). Because payoffs in task $i$ depend only on the state coordinate $i$, subspaces $E_i$ are linearly independent. Let $E_0$ be a complementary space to the sum of $E_1$ through $E_I$. 




For each $a$, $\bar{\vt{X}}(a)=\sum_{i=0}^I\vt{w}_i(a)$ admits a unique decomposition
to $\vt{w}_i(a)\in E_i$ for all $i$. Note that the vectors $\vt{w}_0(a),\vt{w}_1(a),\dots,\vt{w}_I(a)$ are linearly independent.

Take any distinct $a$ and $b$ such that $a_{-i}=b_{-i}$ for some $i$. Then $a$ and $b$ are adjacent, and, by Lemma \ref{lem: adjacency}, there are $x(a,b)\neq0$ and $y(a,b)$ such that $\bar{\vt{X}}(a)=x(a,b)\bar{\vt{X}}(b)+y(a,b)\Delta_{b}^{a}$. Using the above decomposition, we have
\begin{align*}
 0
=&\left[\vt{w}_{i}(a)-x(a,b)\vt{w}_{i}(b)-y(a,b)\Delta_{b_{i}}^{a_{i}}\right]+\sum_{j\neq i}\left[\vt{w}_{j}(a)-x(a,b)\vt{w}_{j}(b)\right]
+\left[\vt{w}_{0}(a)-x(a,b)\vt{w}_{0}(b)\right].
\end{align*}
The proof of Lemma \ref{lem: adjacency} shows that $x(a,b)$ is uniquely defined if and only if $\bar{\vt{X}}(a)\neq0$ and $\bar{\vt{X}}(a)$ is not collinear with $\Delta_{b_{i}}^{a_{i}}$. When this is not the case, we call $(a,b)$ a \emph{free transition}. The values of $x(a,b)$ for free transitions are carefully chosen below. Our choice always satisfies $x(a,b)x(b,a)=1$ (which always holds for non-free transitions). Let $x(a,a)=1$.

Because all vectors in square brackets lie in linearly independent spaces, they must all be equal to 0, and hence
\begin{align}
\label{eq:zero vectors1}
\vt{w}_{i}(a)-x(a,b)\vt{w}_{i}(b)&=y(a,b)\Delta_{b_{i}}^{a_{i}},\\
\label{eq:zero vectors2}
\vt{w}_{j}(a)-x(a,b)\vt{w}_{j}(b)&=0\text{ for each }j\neq i,\\
\label{eq:zero vectors3}
\text{and}\qquad \vt{w}_{0}(a)-x(a,b)\vt{w}_{0}(b)&=0.
\end{align}

\begin{lem}\label{lem:Product decomposition 1}
There exist a vector $\vt{w}^*_0$, vectors $\vt{w}^*_i(a_i)$ for each $i=1,\dots,I$ and $a_i$, and scalars $\gamma_i(a)\neq0$ for each $a$ and $i=0,\dots,I$ such that 
$\vt{w}_i(a)=\gamma_i(a)\vt{w}^*_i(a_i)$ for each $i$, and $\vt{w}_0(a)=\gamma_0(a)\vt{w}^*_0$.
\end{lem}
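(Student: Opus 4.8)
The plan is to establish the lemma by propagating the collinearity relations encoded in \eqref{eq:zero vectors2} and \eqref{eq:zero vectors3} along paths of adjacent actions; equation \eqref{eq:zero vectors1}, which governs the relevant task $i$, plays no role here. Throughout I would use two facts. First, $x(a,b)\neq0$ for every adjacent pair $(a,b)$. Second, under the hypotheses of Theorem \ref{prop:product problems} each task's adjacency graph is connected (being either a single edge or a complete graph), so that, for any fixed index $i$ and any fixed value $a_i$, the fiber $\{a_i\}\times\prod_{j\neq i}A_j$ can be traversed by a sequence of adjacencies each of which alters a single coordinate $j\neq i$; likewise $A$ itself is connected through single-task-change adjacencies.

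First I would treat the component $\vt{w}_0$. For any adjacent pair $(a,b)$, equation \eqref{eq:zero vectors3} gives $\vt{w}_0(a)=x(a,b)\vt{w}_0(b)$ with $x(a,b)\neq0$, so $\vt{w}_0(a)$ and $\vt{w}_0(b)$ are collinear and, in particular, $\vt{w}_0(a)=\vt{0}$ if and only if $\vt{w}_0(b)=\vt{0}$. Since $A$ is connected, either $\vt{w}_0(a)=\vt{0}$ for all $a$---in which case I set $\vt{w}_0^*=\vt{0}$ and $\gamma_0(a)=1$---or $\vt{w}_0(a)\neq\vt{0}$ for all $a$ and all these vectors are scalar multiples of any fixed one, which I would take to be $\vt{w}_0^*$, with $\gamma_0(a)$ the (nonzero) ratio. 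Either way $\gamma_0(a)\neq0$ and $\vt{w}_0(a)=\gamma_0(a)\vt{w}_0^*$.

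Next I would treat $\vt{w}_i$ for $i\geq1$, fixing the value $a_i$ and varying the remaining coordinates. For a pair $(a,b)$ in the fiber $\{a_i\}\times\prod_{j\neq i}A_j$ differing in a single task $j\neq i$, equation \eqref{eq:zero vectors2} gives $\vt{w}_i(a)=x(a,b)\vt{w}_i(b)$ with $x(a,b)\neq0$, exactly as above. Connectedness of the fiber then forces all the vectors $\{\vt{w}_i(a_i,a_{-i}):a_{-i}\}$ to be collinear, and again either uniformly zero or uniformly nonzero. I would define $\vt{w}_i^*(a_i)$ to be $\vt{0}$ (with $\gamma_i\equiv1$ on the fiber) in the former case and a fixed nonzero representative of the fiber in the latter, letting $\gamma_i(a)$ be the corresponding nonzero ratio; crucially, nothing ties these choices across different values of $a_i$, which matches the statement of the lemma.

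The only point requiring care is the requirement $\gamma_i(a)\neq0$, which is precisely why the all-zero/all-nonzero dichotomy above is needed: because every $x(a,b)$ is nonzero, a vanishing $\vt{w}_i$ (or $\vt{w}_0$) cannot coexist with a nonvanishing one along a connected set, so the degenerate case can be absorbed by setting the relevant starred vector to $\vt{0}$ and the corresponding scalars to $1$. Collecting the definitions of $\vt{w}_0^*$, the $\vt{w}_i^*(a_i)$, and the scalars $\gamma_i(a)$ yields $\vt{w}_i(a)=\gamma_i(a)\vt{w}_i^*(a_i)$ for $i\geq1$ and $\vt{w}_0(a)=\gamma_0(a)\vt{w}_0^*$, as claimed. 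No step presents a genuine obstacle; the argument is a routine propagation of collinearity through connected graphs, and the subtlety lies entirely in bookkeeping the possibility of zero components.
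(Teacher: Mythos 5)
Your argument is correct and is essentially the paper's own proof: both propagate the collinearity relations \eqref{eq:zero vectors2} and \eqref{eq:zero vectors3} along paths of single-coordinate adjacencies within the fiber $\{a_i\}\times\prod_{j\neq i}A_j$ (respectively within all of $A$ for $\vt{w}_0$), using $x(a,b)\neq 0$. The only cosmetic difference is that the paper defines $\gamma_i(a)$ directly as the product of the $x$'s along a chosen path from a reference action $a^*_{-i}a_i$ (which is automatically nonzero and handles the all-zero fiber without a case split), whereas you define it as a ratio with an explicit zero/nonzero dichotomy; the two definitions coincide in substance.
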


This result says that, for a fixed $a_i$, the vectors $\vt{w}_i(a_{-i}a_i)$ for $a_{-i}$ are either all collinear or all equal to $0$. 

\begin{proof}
For the first claim, fix an action $a^*$. For each $a_i$, let $\vt{w}_i^*(a_i)=\vt{w}_i(a^*_{-i}a_i)$. 
For each $a$, fix an arbitrary a path of adjacent actions $a^0=a^*_{-i}a_i,\dots,a^n=a$ such that for each $l<n$, $a^l_i=a_i$. Let $\gamma_i(a)=x(a^n,a^{n-1})\cdots x(a^1,a^0)$. A repeated application of (\ref{eq:zero vectors2}) shows that 
$\vt{w}_i(a)=\gamma_i(a)\vt{w}^*_i(a_i).$ Analogous argument shows the second claim.
\end{proof}

\begin{lem}\label{lem:Product decomposition 2}\label{lem:nontrivial problems}
For each $i$, one of the following is true:
\begin{enumerate}[label=(\roman*)]
\item $\vt{w}_i^*(a_i)=0$ for each $a_i$,
\item there exists $a_i^0\in A_i$ such that $\vt{w}_i^*(a^0_i)=0$ and $\vt{w}_i^*(a_i)\neq0$ for each $a_i\neq a^0_i$,
\item $\vt{w}_i^*(a_i)\neq0$ for each $a_i$.
\end{enumerate}
If question $X$ does not depend trivially on task $i$, then there is an action $a_i$ such that $\vt{w}_i^*(a_i)\neq 0$.
\end{lem}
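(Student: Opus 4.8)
The plan is to control the \emph{zero set} $Z_i=\{a_i\in A_i:\vt{w}_i^*(a_i)=\vt{0}\}$ and show that it can only be empty (case (iii)), a singleton (case (ii)), or all of $A_i$ (case (i)); the sole configuration to exclude is $2\le|Z_i|<|A_i|$. The final sentence I would then prove in contrapositive form: case (i) forces $X$ to depend trivially on task $i$.

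For the trichotomy, the key claim is that $|Z_i|\ge2$ implies $Z_i=A_i$. If $A_i$ has only two actions this is automatic, so suppose the adjacency graph of task $i$ is complete. Fix distinct $a^1_i,a^2_i\in Z_i$, an action $c_i\in A_i\setminus\{a^1_i,a^2_i\}$, and an arbitrary $a_{-i}$. Completeness makes the product actions $a^1_ia_{-i}$, $a^2_ia_{-i}$, $c_ia_{-i}$ pairwise adjacent. Applying \eqref{eq:zero vectors1} to the pair $(a^1_ia_{-i},c_ia_{-i})$, and using Lemma \ref{lem:Product decomposition 1} together with $\vt{w}_i^*(a^1_i)=\vt{0}$ to get $\vt{w}_i(a^1_ia_{-i})=\vt{0}$ and $\vt{w}_i(c_ia_{-i})=\gamma_i(c_ia_{-i})\vt{w}_i^*(c_i)$, I obtain $-x\,\gamma_i(c_ia_{-i})\vt{w}_i^*(c_i)=y\,\Delta^{a^1_i}_{c_i}$ for the associated coefficients $x\ne0$ and $y$; the analogous relation for $(a^2_ia_{-i},c_ia_{-i})$ links $\vt{w}_i^*(c_i)$ to $\Delta^{a^2_i}_{c_i}$. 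If $\vt{w}_i^*(c_i)\neq\vt{0}$, then, since $x$ and $\gamma_i$ are nonzero, the left-hand sides are nonzero, forcing both relevant $y$'s to be nonzero, so both $\Delta^{a^1_i}_{c_i}$ and $\Delta^{a^2_i}_{c_i}$ are nonzero scalar multiples of $\vt{w}_i^*(c_i)$ and hence collinear — contradicting the linear independence of $\{\Delta^{a^1_i}_{c_i},\Delta^{a^2_i}_{c_i}\}$ assumed in Theorem \ref{prop:product problems} (which applies because $a^1_i,a^2_i,c_i$ are distinct). Therefore $\vt{w}_i^*(c_i)=\vt{0}$, so $c_i\in Z_i$; as $c_i$ was arbitrary, $Z_i=A_i$, establishing the trichotomy.

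For the final claim I argue the contrapositive. Assume case (i), so $\vt{w}_i^*(a_i)=\vt{0}$ for every $a_i$; then $\vt{w}_i(a)=\gamma_i(a)\vt{w}_i^*(a_i)=\vt{0}$ for all $a$, whence $\bar{\vt{X}}(a)=\vt{w}_0(a)+\sum_{j\neq i}\vt{w}_j(a)$. Fix $a_{-i}$ and take any distinct $a_i,b_i$; the actions $a_ia_{-i}$ and $b_ia_{-i}$ are adjacent, so \eqref{eq:zero vectors2} and \eqref{eq:zero vectors3} give $\vt{w}_j(a_ia_{-i})=x(a_ia_{-i},b_ia_{-i})\,\vt{w}_j(b_ia_{-i})$ for every index $j\neq i$ (including $j=0$), all with the \emph{same} scalar $x(a_ia_{-i},b_ia_{-i})\neq0$. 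Summing over these $j$ yields $\bar{\vt{X}}(a_ia_{-i})=x(a_ia_{-i},b_ia_{-i})\,\bar{\vt{X}}(b_ia_{-i})$, so the vectors $\bar{\vt{X}}(a_ia_{-i})$ are pairwise collinear as $a_i$ ranges over $A_i$. This is exactly trivial dependence on task $i$, completing the contrapositive.

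The hard part is the trichotomy step: it rests on producing three mutually adjacent product actions differing only in task $i$ and converting two instances of the Adjacency Lemma relation \eqref{eq:zero vectors1} into a forbidden collinearity. The delicate bookkeeping is tracking which coefficients are guaranteed nonzero — $x$ and $\gamma_i$ always are, while the relevant $y$ is forced nonzero precisely when $\vt{w}_i^*(c_i)\neq\vt{0}$ — so that the within-task linear independence of payoff-difference vectors delivers the contradiction. The second part is comparatively routine, the only real observation being that \eqref{eq:zero vectors2}–\eqref{eq:zero vectors3} carry a common scalar.
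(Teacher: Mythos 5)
Your proof is correct and takes essentially the same route as the paper's: the trichotomy is obtained by showing that two zero actions plus one nonzero action would force $\vt{w}_i^*(c_i)$ to be collinear with both $\Delta_{c_i}^{a_i^1}$ and $\Delta_{c_i}^{a_i^2}$, contradicting the pairwise linear independence assumed in Theorem \ref{prop:product problems}, and the final claim is proved in contrapositive form by deducing collinearity of the $\bar{\vt{X}}(a_ia_{-i})$ from \eqref{eq:zero vectors1}--\eqref{eq:zero vectors3}. Your version is merely more explicit about which coefficients are nonzero and sums the component equations directly rather than first arguing $y(a,b)=0$; these are cosmetic differences.
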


\begin{proof}
Suppose that there are three different actions $a_i$, $b_i$, and $c_i$ such that $\vt{w}_i^*(a_i)=\vt{w}_i^*(b_i)=\vt{0}$ and $\vt{w}_i^*(c_i)\neq0$. Equation (\ref{eq:zero vectors1}) together with Lemma \ref{lem:Product decomposition 1} imply that, for any $c_{-i}$, $\vt{w}_i(c_ic_{-i})$ is simultaneously collinear with $\Delta_{c_i}^{a_i}$ and $\Delta_{c_i}^{b_i}$, contradicting the independence assumption. 

For the last claim, suppose that $\vt{w}_i^*(a_i)=0$ for each $a_i$. Equation (\ref{eq:zero vectors1}) together with Lemma \ref{lem:Product decomposition 1} imply that $y(a,b)=0$ for each $a$ and $b$ such that $a_{-i}=b_{-i}$. But then, for each $a_{-i}$, the vectors $\bar{\vt{X}}(a_{-i}a_i)$ and $\bar{\vt{X}}(a_{-i}b_i)=x(a_{-i}b_i,a_{-i}a_i)\bar{\vt{X}}(a_{-i}a_i)$ are collinear. Hence, $X$ depends on task $i$ trivially.
\end{proof}

Case (i) of Lemma \ref{lem:Product decomposition 2} is equivalent to $X$ depending on problem $i$ trivially. Because $X$ depends nontrivially on at least three tasks, there must be at least three tasks in cases (ii) or (iii).

\subsection{Exact cycles}\label{sec:Exact cycles}

\sloppy

Recall that an adjacency cycle $a^0,\dots,a^n=a^0$ is \textit{exact} if  $\prod_{l<n}x(a^l,a^{l+1})=1.$ The goal of this subsection as well as subsections \ref{sec:Exact cycles - Tools} to \ref{sec:Exact cycles - Free transitions}  is to prove the following result.

\fussy

\begin{lem}\label{lem:all cycles exact}
    The values $x(a,b)$ for free transitions $(a,b)$ can be chosen so that (i) $x(a,b)x(b,a)=1$ for all adjacent $a$ and $b$, and (ii) every adjacency cycle is exact.
\end{lem}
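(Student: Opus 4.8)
The plan is to recast exactness as the vanishing of a multiplicative holonomy and to reduce it to a handful of basic cycles. Granting requirement (i)---automatic on non-free transitions, and imposed on free ones by declaring $x(b,a)=1/x(a,b)$---the assignment $h(C)=\prod_{l<n}x(a^l,a^{l+1})$ is invariant under reversing an edge and multiplicative under concatenation, so it descends to a homomorphism from the cycle space (first homology with $\Z$ coefficients) of the adjacency graph into $\R\setminus\{0\}$. The adjacency graph of a product problem is the Cartesian product of the per-task adjacency graphs (complete graphs, or $K_2$), whose cycle space is generated by two families of basic cycles: triangles inside a single task (three actions sharing $a_{-i}$) and squares across two tasks (the actions $a,\ b_ia_{-i},\ b_ib_ja_{-ij},\ b_ja_{-j}$). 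Hence (ii) reduces to exactness of every basic cycle, and the whole lemma amounts to choosing the free-transition values so that all basic cycles have $h=1$.

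Each basic cycle $C$ is internally independent: its edge vectors are $\{\Delta_{a_i}^{b_i},\Delta_{a_i}^{c_i}\}$ (linearly independent by the within-task hypothesis) for a triangle in task $i$, and $\{\Delta_{a_i}^{b_i},\Delta_{a_j}^{b_j}\}$ (independent because they live in the distinct summands $E_i,E_j$) for a square, so $V_C\subseteq\bigoplus_{m\in T_C}E_m$ where $T_C$ is the cycle's task set. I would dispose of the ``easy'' basic cycles exactly as in the proof of Lemma \ref{lem:adjacency cycles}: iterating Lemma \ref{lem: adjacency} around $C$ gives
\[
\bar{\vt{X}}(a_0)=\Big(\textstyle\prod_l x(a^l,a^{l+1})\Big)\bar{\vt{X}}(a_0)+\big(\text{a vector of }V_C\big),
\]
so whenever $\bar{\vt{X}}(a_0)\notin V_C$ the product of the $x$'s is forced to $1$. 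By the decomposition $\bar{\vt{X}}(a_0)=\sum_m\gamma_m(a_0)\vt{w}^*_m((a_0)_m)$ of Lemmas \ref{lem:Product decomposition 1} and \ref{lem:Product decomposition 2}, this holds as soon as some coordinate outside $T_C$ is ``good,'' i.e.\ $\vt{w}^*_m((a_0)_m)\neq\vt{0}$ for some $m\notin T_C$ (possibly $m=0$). In particular, if $\vt{w}^*_0\neq\vt{0}$ then every $\bar{\vt{X}}(a)$ has a nonzero $E_0$-component, so no transition is free and, via \eqref{eq:zero vectors3}, $x(a,b)=\gamma_0(a)/\gamma_0(b)$ for all adjacent pairs; taking $v=\gamma_0$ makes every cycle exact at once. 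Thus free transitions, and the only basic cycles not yet handled, occur only when $\vt{w}^*_0=\vt{0}$ and every coordinate outside $T_C$ sits at a ``bad'' value $a^0_m$ from case (ii) of Lemma \ref{lem:Product decomposition 2}.

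The remaining and hardest step is to fix the free-transition values so that these degenerate basic cycles are also exact. Here $\bar{\vt{X}}(a_0)\in V_C$, the iteration above is vacuous, and $h(C)$ is a genuinely free parameter of the Adjacency Lemma. The idea, following the cube picture in the main text, is to embed such a square as one face of a ``cube'' obtained by also toggling a third nontrivial task $k$ (which exists because $X$ depends nontrivially on at least three tasks) between its bad value and a good value; the remaining faces can be arranged to contain a good external coordinate and so are exact by the easy case, while all edges interior to the cube cancel by requirement (i), forcing the target face to be exact and simultaneously pinning down its free values consistently. For non-binary tasks one supplements the squares with within-task triangles of equal relevant task in the same manner. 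The main obstacle is precisely this combinatorial bookkeeping: verifying that, for every degenerate basic cycle, enough surrounding exact cycles can be produced and oriented so that all but the target edges cancel, and checking that the free values so determined are mutually consistent (so that (i) is preserved). Once every basic cycle is exact, $h$ vanishes on a generating set of the cycle space and hence on all cycles, which is (ii); requirement (i) holds by construction.
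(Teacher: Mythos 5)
Your architecture coincides with the paper's: the multiplicativity-of-holonomy observation is exactly Lemma \ref{lem:sum of exact cycles}, the reduction to within-task triangles and cross-task squares is Lemma \ref{lem:small cycles}, and your treatment of the easy basic cycles matches the ``grounded'' case (Lemma \ref{lem:grounded cycles}) together with the observation that $\vt{w}^*_0\neq\vt{0}$ rules out free transitions and makes every cycle exact via \eqref{eq:zero vectors3}. Up to that point the proposal is sound.

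The genuine gap is the part you label ``combinatorial bookkeeping,'' which is where most of the proof actually lives (Section \ref{sec:Exact cycles - Free transitions}). Two things are missing. First, you never define $x(a,b)$ on free transitions; saying the cube argument ``simultaneously pins down its free values consistently'' is circular, because a single free edge lies on many basic cycles and one must check that the value forced by one cycle also makes every other cycle through that edge exact. The paper resolves this by first characterizing where free transitions can occur (the ``test'': a free transition $(a,b)$ with $a_{-i}=b_{-i}$ forces $\vt{w}^*_j(a_j)=\vt{0}$ for every $j\neq i$), which splits the analysis into Case I, where a single task is in case (iii) of Lemma \ref{lem:nontrivial problems}, and Case II, where every nontrivial task is in case (ii) and the free transitions are precisely those of the form $(a^0_{-i}a_i,a^0)$; it then defines $x$ on free edges by decreeing specific ``red face'' cycles exact, a definition that is well posed only because each such cycle is shown to contain exactly one free edge. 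Second, verifying that the remaining cycles containing free edges are exact requires a further case analysis (three distinct configurations in each of Cases I and II), as well as Lemma \ref{lem:cycles with collinear Deltas} and Lemma \ref{lem:no free transitions}, which handle small cycles whose relevant components are collinear with the payoff-difference vectors --- a degeneracy your sketch does not address. Without these steps the choice of free values is not specified and their mutual consistency is not established, so the lemma is not proved.
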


Lemma \ref{lem:Product decomposition 1} and a repeated application of equation (\ref{eq:zero vectors3}) shows that, if  $\vt{w}^*_0\neq0$, then, all adjacency cycles are exact. From now on, we assume that $\vt{w}^*_0=0$.

\subsection{Tools}\label{sec:Exact cycles - Tools}

The two results in this section develop tools that we use in the subsequent analysis. 

The first tool allows us to replace the problem of whether a cycle is exact with a problem about related cycles. For each path $c=(a^0,\dots,a^{n_c})$ and each transition $(a,b)$ between two adjacent actions, define 
\[
m_a^b(c)=\#\left\{l<n_c:a^l=a,a^{l+1}=b\right\}-\#\left\{l<n_c:a^l=b,a^{l+1}=a\right\}.
\]

\begin{lem}\label{lem:sum of exact cycles}
Suppose that, for some adjacency cycle $c$, there exists a collection $D$ of exact adjacency cycles such that for each adjacent pair $a,b$,
$m_a^b(c)=\sum_{d\in D}m_a^b(d)$. Then cycle $c$ is exact.
\end{lem}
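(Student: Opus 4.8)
The plan is to reduce the product of $x$-values along the cycle $c$ to a product indexed by unordered adjacent pairs, weighted by the net traversal counts $m_a^b(c)$, and then to exploit the additivity of these counts over the collection $D$ together with the exactness of each member of $D$.

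First I would record the elementary consequence of property (i) of Lemma \ref{lem:all cycles exact}, namely $x(a,b)x(b,a)=1$ for every adjacent pair. Fix the cycle $c=(a^0,\dots,a^{n_c})$ and group its transitions according to the unordered adjacent pair $\{a,b\}$ that each one involves. If $c$ traverses a given edge in direction $a\to b$ exactly $p$ times and in direction $b\to a$ exactly $q$ times, then the combined contribution of these transitions to $\prod_{l<n_c}x(a^l,a^{l+1})$ is $x(a,b)^p\,x(b,a)^q$. Using $x(b,a)=x(a,b)^{-1}$, this equals $x(a,b)^{p-q}=x(a,b)^{m_a^b(c)}$. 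Multiplying over all pairs yields
\[
\prod_{l<n_c}x(a^l,a^{l+1})=\prod_{\{a,b\}}x(a,b)^{m_a^b(c)},
\]
where the right-hand product runs over unordered adjacent pairs and has only finitely many factors different from $1$. This expression is independent of the orientation chosen for each pair, since $m_b^a(c)=-m_a^b(c)$ and $x(b,a)=x(a,b)^{-1}$ give $x(b,a)^{m_b^a(c)}=x(a,b)^{m_a^b(c)}$. The identical computation applies to each cycle $d\in D$.

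Then I would substitute the hypothesis $m_a^b(c)=\sum_{d\in D}m_a^b(d)$ and interchange the two finite products:
\[
\prod_{l<n_c}x(a^l,a^{l+1})
=\prod_{\{a,b\}}x(a,b)^{\sum_{d\in D}m_a^b(d)}
=\prod_{d\in D}\prod_{\{a,b\}}x(a,b)^{m_a^b(d)}
=\prod_{d\in D}\Bigl(\prod_{l<n_d}x(a_d^l,a_d^{l+1})\Bigr).
\]
Since every $d\in D$ is exact, each inner product equals $1$, and hence $\prod_{l<n_c}x(a^l,a^{l+1})=1$; that is, $c$ is exact.

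The argument is essentially bookkeeping: its content is that the map sending a closed walk to its product of $x$-values factors through the net edge-count data $(m_a^b)$, which holds precisely because of the relation $x(a,b)x(b,a)=1$. I do not expect a genuine obstacle here; the only points requiring care are the well-definedness of the pair-indexed product under a change of orientation and the legitimacy of interchanging the two finite products, both of which are routine.
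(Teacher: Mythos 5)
Your proof is correct and is essentially the paper's own argument: the paper likewise rewrites the product along $c$ as $\prod_{a\prec b}x(a,b)^{m_a^b(c)}$ (fixing an arbitrary order on actions rather than speaking of unordered pairs, but relying on the same relation $x(a,b)x(b,a)=1$), substitutes the additivity hypothesis, and interchanges the finite products. Your added remarks on orientation-independence just make explicit a step the paper leaves implicit.
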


\begin{proof}
Let $\prec$ be an arbitrary strict order on the set of actions $A$. Then we have
\begin{align*}
\prod_{l<n_c}x(a^{c,l},a^{c,l+1})
&=\prod_{a\prec b}\left(x(a,b)\right)^{m_a^b(c)}
=\prod_{a\prec b}\left(x(a,b)\right)^{\sum_{d\in D}m_a^b(d)}\\
&=\prod_{d\in D} \prod_{a\prec b}\left(x(a,b)\right)^{m_a^b(d)}
=\prod_{d\in D} \prod_{l<n_d}x(a^{d,l},a^{d,l+1})
= 1,
\end{align*}
where the last equality comes from the fact that cycles in $D$   are exact. 
\end{proof}

The second result shows that it is enough to consider particular kinds of ``small'' cycles.

\begin{lem}\label{lem:small cycles}
Every adjacency cycle is exact if (and only if) the following cycles are exact:
\begin{enumerate}
\item $\left(a,b,a\right)$ for all adjacent actions $a$ and $b$,
\item $\left(a,b,c,a\right)$ for all triples of actions $a$, $b$, and $c$ such that $a_{-i}=b_{-i}=c_{-i}$,
\item $\left(a,a_{-i}b_i, a_{-ij}b_ib_j,a_{-j}b_j,a\right)$ for all actions $a$, $i\neq j$, and $b_i$ and $b_j$.
\end{enumerate}
\end{lem}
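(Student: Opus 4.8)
The plan is to reduce the statement to a purely graph-theoretic fact about the adjacency graph and then invoke Lemma \ref{lem:sum of exact cycles}. The ``only if'' direction is immediate, since cycles of types (1)--(3) are themselves adjacency cycles. For the ``if'' direction, assume every cycle of types (1)--(3) is exact and fix an arbitrary adjacency cycle $c=(a^0,\dots,a^n=a^0)$. By Lemma \ref{lem:sum of exact cycles}, it suffices to produce a collection $D$ of cycles of types (1)--(3) with $m_a^b(c)=\sum_{d\in D}m_a^b(d)$ for every adjacent pair $a,b$. (The regrouping in that lemma relies on $x(a,b)x(b,a)=1$, which is precisely exactness of the type-(1) cycle $(a,b,a)$ and which holds here by our choice of the values $x(a,b)$.)

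I would first reformulate the target. The assignment $c\mapsto(m_a^b(c))_{a,b}$ takes values in the cycle space of the adjacency graph $G$, namely the free abelian group of antisymmetric integer edge-labelings with vanishing net flow at every vertex. Reversing a cycle negates its vector and concatenating cycles adds their vectors, while type-(1) cycles contribute the zero vector; hence the required collection $D$ exists for every $c$ if and only if the vectors $m(d)$ of the type-(2) and type-(3) cycles generate this cycle space over $\mathbb{Z}$ (here I use that the reverse of a type-(2) or type-(3) cycle is again of the same type and is still exact). Under the hypotheses of Theorem \ref{prop:product problems} each task's adjacency graph is complete (a two-action task being a single edge), so two product actions are adjacent exactly when they differ in a single coordinate; thus $G$ is a Hamming graph, and the problem becomes that of generating its cycle space by triangles and squares.

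The combinatorial heart is to show that the within-task triangles (type (2)) and the squares (type (3)) generate the cycle space of $G$. I would argue by reducing an arbitrary closed walk to the trivial walk using two elementary moves, each of which changes the walk's net-transition vector by plus or minus the vector of a small cycle. First, if two consecutive steps change different tasks $i\neq j$, I swap their order; the original pair followed by the reverse of the swapped pair forms a type-(3) square, so the vector changes by that square's vector. Second, if two consecutive steps change the same task $i$, then, because that task's graph is complete, the two outer actions are adjacent, so either the steps backtrack (a type-(1) cycle, contributing zero) or they can be contracted to a single direct step, the discrepancy being a type-(2) triangle. Collecting the steps touching each task by the first move and contracting them by the second reduces $c$ to the empty walk, and the small cycles used along the way furnish $D$. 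Equivalently, one may induct on the number of tasks $I$, writing $G$ as the Cartesian product of the first $I-1$ tasks with the complete graph on $A_I$ and using squares to push all moves within the first $I-1$ tasks into a single layer, leaving a within-layer walk (handled by induction) and a within-column walk in a complete graph (handled by triangles).

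The main obstacle is organizing this reduction so that it provably terminates while tracking the net-transition vectors exactly. I would control termination by the induction on $I$, or by fixing a total order on the tasks and applying the first move only so as to decrease the number of out-of-order adjacent step pairs and the second move only when it strictly shortens the walk. Two features keep the bookkeeping clean: the cycle space of a graph is free abelian, so there is no torsion obstruction to expressing $m(c)$ over $\mathbb{Z}$ once the moves connect $c$ to the trivial walk; and the linear-independence hypotheses of the theorem play no role at this stage, entering only later when one shows that the type-(2) and type-(3) cycles are in fact exact. Once generation is established, Lemma \ref{lem:sum of exact cycles} delivers exactness of $c$, completing the ``if'' direction.
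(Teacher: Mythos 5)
Your proposal is correct and follows essentially the same route as the paper's proof: both reduce an arbitrary cycle via the same two elementary moves (swapping consecutive steps in different tasks using a type-(3) square, and contracting or cancelling consecutive steps in the same task using a type-(2) triangle or a type-(1) backtrack), invoke Lemma \ref{lem:sum of exact cycles} to transfer exactness, and terminate by ordering the tasks and noting that a cycle must change each task an even number of times (at least twice if at all). The cycle-space packaging is a cosmetic difference; the operational content matches the paper's argument.
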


\begin{proof}
Call cycles of the three forms described in the lemma \emph{small} and any other cycle \emph{large}. 

Take any large cycle of adjacent actions $a=a^0,\dots,a^n=a$ and let $i_l$ be such that, for each $l<n$, $a^l_{-i_l}=a^{l+1}_{-i_l}$. For future reference, notice that if the action in some task $i$ is ever changed, then it must be changed at least twice: if $i_l=i$ for some $l$, then there is some $l^\prime\neq l$ such that $i_{l^\prime}=i$. We use Lemma \ref{lem:sum of exact cycles} and small cycles to reorder and reduce the large cycle without changing the value of the product of the associated $x$ terms:
\begin{itemize}
\item if $i_l>i_{l+1}$ for $l<n-1$ we use the small cycle of type (3) to switch the order of the two tasks, i.e., replace the cycle fragment $\dots,a^l,a^{l+1},a^{l+2},\dots$, where $a^{l+1}=a^l_{-i_l}a^{l+2}_{i_l}$, with 
$\dots,a^l,a^l_{-i_{l+1}}a^{l+2}_{i_{l+1}},a^{l+2},\dots$. More precisely, due to Lemma \ref{lem:sum of exact cycles} and the fact that the small cycle is exact, the exactness of the original cycle is equivalent to the exactness of the modified cycle;
\item if $i_l=i_{l+1}$ (including possibly $i_{n-1}=i_0$), we use either type (1) or type (2) to reduce the large cycle, i.e, replace the cycle fragment $\dots,a^l,a^{l+1},a^{l+2},\dots$ with $\dots,a^l,a^{l+2},\dots$ in the case of a type (2) cycle or with $\dots,a^l,\dots$ in the case of a type (1) cycle.
\end{itemize}

Consider a process in which one of the above operations is applied until it cannot be applied anymore. Because either the operations reorder tasks in an increasing direction, or, if they don't change the order, they reduce the length of the cycle, the process never reverts and it will eventually stop. If the process stops at a single-element cycle $a$, then, because $x(a,a)=1$, the original cycle must be exact.

Otherwise, the process stops with a nontrivial cycle $a=a^0,\dots,a^m=a$ for some $2\leq m\leq n$. Then it must be that $i_l<i_{l+1}$ for each $l<m$. But this contradicts the earlier observation that, in an adjacency cycle, if $i$ appears at least once, it must appear at least twice. 
\end{proof}

\subsection{Cycles without free transitions}\label{sec:Exact cycles - No free transitions}

Here, we consider the exactness of small cycles without free transitions. We refer to $i$ from the definition of type (2) cycles as the \textit{relevant task} for this cycle; similarly, we refer to $i$ and $j$ as the relevant tasks for type (3) cycles. We say that a type (2) or type (3) cycle $\left(a^0,\dots,a^n=a^0\right)$  is \textit{grounded} if there exists some $k$ such that $k$ is not a relevant task and $\vt{w}^*_k(a_k^0)\neq0$. 

\begin{lem}\label{lem:grounded cycles}
Any grounded type (2) or type (3) cycle is exact.  
\end{lem}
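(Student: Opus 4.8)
The plan is to exploit the non-relevant coordinate guaranteed by groundedness and run a telescoping argument on the scalars $\gamma_k$ supplied by Lemma \ref{lem:Product decomposition 1}. Write the cycle as $\left(a^0,\dots,a^n=a^0\right)$ and let $k$ be a non-relevant task with $\vt{w}^*_k(a^0_k)\neq0$. The first thing I would record is that, precisely because $k$ is not a relevant task, the $k$-th coordinate is constant along the cycle, i.e.\ $a^l_k=a^0_k$ for every $l$. By Lemma \ref{lem:Product decomposition 1} this gives $\vt{w}_k(a^l)=\gamma_k(a^l)\vt{w}^*_k(a^0_k)$, and since $\gamma_k(a^l)\neq0$ and $\vt{w}^*_k(a^0_k)\neq0$, the $E_k$-component of $\bar{\vt{X}}(a^l)$ is nonzero at every vertex of the cycle.

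Next I would observe that this rules out free transitions. Because $E_k$ is linearly independent from the subspace $E_i$ containing each payoff-difference vector, any vector with a nonzero $E_k$-component is neither $\vt{0}$ nor collinear with a vector $\Delta_{b_i}^{a_i}\in E_i$. Hence every transition in a grounded cycle is non-free, each coefficient $x(a^l,a^{l+1})$ is uniquely determined, and equations \eqref{eq:zero vectors1}--\eqref{eq:zero vectors3} hold with these values. This is what justifies treating grounded cycles in the ``without free transitions'' subsection.

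The core of the argument is then a single application of \eqref{eq:zero vectors2}. For each transition $(a^l,a^{l+1})$ the relevant task is one of the cycle's relevant tasks and hence differs from $k$; applying \eqref{eq:zero vectors2} with $j=k$ yields $\vt{w}_k(a^l)=x(a^l,a^{l+1})\vt{w}_k(a^{l+1})$. Substituting the decomposition and cancelling the common nonzero vector $\vt{w}^*_k(a^0_k)$ gives $\gamma_k(a^l)=x(a^l,a^{l+1})\gamma_k(a^{l+1})$, that is, $x(a^l,a^{l+1})=\gamma_k(a^l)/\gamma_k(a^{l+1})$. Taking the product around the cycle, the right-hand side telescopes:
\[
\prod_{l<n}x(a^l,a^{l+1})=\prod_{l<n}\frac{\gamma_k(a^l)}{\gamma_k(a^{l+1})}=\frac{\gamma_k(a^0)}{\gamma_k(a^n)}=1,
\]
since $a^n=a^0$. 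This is exactly exactness of the cycle.

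I do not expect a genuine obstacle here: all the work is in the bookkeeping of the first two paragraphs---checking that the $k$-coordinate is constant and that each transition's relevant task avoids $k$, so that \eqref{eq:zero vectors2} is applicable---after which exactness follows immediately by telescoping. The only point requiring a moment's care is the free-transition remark, needed to be certain that the uniquely chosen $x$-values actually satisfy \eqref{eq:zero vectors2}; groundedness delivers this for free, which is why the hypothesis appears in the statement.
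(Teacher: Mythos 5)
Your proof is correct and follows essentially the same route as the paper's: both arguments fix the non-relevant task $k$ guaranteed by groundedness, note that $\vt{w}_k(a^l)\neq 0$ at every vertex, and telescope equation \eqref{eq:zero vectors2} around the cycle to force $\prod_{l<n}x(a^l,a^{l+1})=1$. Your additional bookkeeping (that the $k$-th coordinate is constant along the cycle and that groundedness excludes free transitions, so the $x$-coefficients are uniquely pinned down) is left implicit in the paper but is accurate and worth recording.
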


\begin{proof}
Suppose that $\vt{w}^*_k(a_k^0)\neq0$ for some irrelevant task $k$. Then $\vt{w}_k(a^i)\neq0$ for each action $a^i$ in the cycle. A repeated application of (\ref{eq:zero vectors2}) shows that for each $l\leq n$,
\[
\vt{w}_k(a^0)=\vt{w}_k(a^l)x(a^0,a^1)\cdots x(a^{l-1},a^l).
\] 
The result follows from the fact that $a^n=a^0$. 
\end{proof}

\begin{lem}\label{lem:cycles with collinear Deltas}
Let $\left(a,a_{-i}b_i, a_{-ij}b_ib_j,a_{-j}b_j,a\right)$ be a type (3) small cycle such that either (i) $\vt{w}^*_i(a_i)\neq0$ and $\vt{w}^*_i(a_i)$ is not collinear with $\Delta_{a_i}^{b_i}$, or (ii) $\vt{w}^*_j(a_j)\neq0$ and $\vt{w}^*_j(a_j)$ is not collinear with $\Delta_{a_j}^{b_j}$.
Then the cycle is exact. 
\end{lem}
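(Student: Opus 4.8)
The plan is to track only the task-$i$ component $\vt{w}_i$ of $\bar{\vt{X}}$ around the four actions of the cycle and to play it off against the direction $\Delta_{a_i}^{b_i}$, exploiting that under hypothesis (i) the vectors $\vt{w}^*_i(a_i)$ and $\Delta_{a_i}^{b_i}$ are linearly independent inside $E_i$ (case (ii) is symmetric, tracking $\vt{w}_j$ instead). Write the cycle as $a^0=a$, $a^1=a_{-i}b_i$, $a^2=a_{-ij}b_ib_j$, $a^3=a_{-j}b_j$ and set $x_{01}=x(a^0,a^1)$, $x_{12}=x(a^1,a^2)$, $x_{23}=x(a^2,a^3)$, $x_{30}=x(a^3,a^0)$, so that exactness is exactly the claim $P:=x_{01}x_{12}x_{23}x_{30}=1$. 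Inspecting the four actions shows that task $i$ is the \emph{relevant} task precisely on the transitions $a^0\to a^1$ and $a^2\to a^3$, and is unchanged on $a^1\to a^2$ and $a^3\to a^0$.

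First I would record the four linear relations that the decomposition forces on $W_l:=\vt{w}_i(a^l)$. Applying \eqref{eq:zero vectors1} to the two relevant-task transitions and \eqref{eq:zero vectors2} to the two fixed-task transitions, and using $\Delta_{b_i}^{a_i}=-\Delta_{a_i}^{b_i}$, gives
\begin{align*}
W_0 - x_{01}W_1 &= -y(a^0,a^1)\,\Delta_{a_i}^{b_i}, & W_1 &= x_{12}W_2,\\
W_2 - x_{23}W_3 &= y(a^2,a^3)\,\Delta_{a_i}^{b_i}, & W_3 &= x_{30}W_0.
\end{align*}
By Lemma \ref{lem:Product decomposition 1}, $W_0=\gamma_i(a^0)\vt{w}^*_i(a_i)$ and $W_3=\gamma_i(a^3)\vt{w}^*_i(a_i)$ are multiples of $\vt{w}^*_i(a_i)$, while $W_1,W_2$ are multiples of $\vt{w}^*_i(b_i)$.

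Then I would eliminate $W_1,W_2,W_3$ by back-substitution. Replacing $W_1$ by $x_{12}W_2$, then $W_2$ by $x_{23}W_3+y(a^2,a^3)\Delta_{a_i}^{b_i}$, and finally $W_3$ by $x_{30}W_0$, collapses the first relation to
\[
(1-P)\,W_0 = \bigl(x_{01}x_{12}\,y(a^2,a^3)-y(a^0,a^1)\bigr)\Delta_{a_i}^{b_i}.
\]
Since $W_0=\gamma_i(a^0)\vt{w}^*_i(a_i)$ with $\gamma_i(a^0)\neq 0$, the left-hand side is a scalar multiple of $\vt{w}^*_i(a_i)$ and the right-hand side a scalar multiple of $\Delta_{a_i}^{b_i}$. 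Hypothesis (i) makes these two vectors nonzero and non-collinear, hence linearly independent, so both scalar coefficients vanish; in particular $(1-P)\gamma_i(a^0)=0$, and therefore $P=1$.

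The argument is essentially bookkeeping once the right object is chosen; the only real decision is to follow the single task-$i$ component rather than all of $\bar{\vt{X}}$, which isolates the product $P$ against a single independent direction. The points to get right are the orientation of the $\Delta$-terms and the identification of which transitions are relevant. Note that no assumption on $\vt{w}^*_i(b_i)$ is needed: the fixed-task relation $W_1=x_{12}W_2$ is used purely as an identity in the substitution, so the chain goes through even if $\vt{w}^*_i(b_i)=\vt{0}$. Case (ii) is identical after interchanging $i$ and $j$, tracking $\vt{w}_j$ and using that $\vt{w}^*_j(a_j)$ is not collinear with $\Delta_{a_j}^{b_j}$.
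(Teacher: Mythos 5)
Your proof is correct and is essentially the paper's own argument: both track the task-$i$ component $\vt{w}_i$ around the four transitions using \eqref{eq:zero vectors1}--\eqref{eq:zero vectors2} and conclude from the linear independence of $\vt{w}^*_i(a_i)$ and $\Delta_{a_i}^{b_i}$. The only cosmetic difference is that you chain the relations once around the cycle to reach $(1-P)\vt{w}_i(a)=c\,\Delta_{a_i}^{b_i}$, whereas the paper equates two expressions for $\vt{w}_i(a_{-ij}b_ib_j)$ obtained by traversing the two halves of the cycle and then invokes $x(a,b)x(b,a)=1$.
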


\begin{proof}
Using \ref{eq:zero vectors1} and \ref{eq:zero vectors2}, we get
\begin{align*}
x(a_{-ij}b_ib_j,a_{-i}b_i)& \left(x(a_{-i}b_i,a)\vt{w}_i(a)+y(a_{-i}b_i,a)\Delta_{a_i}^{b_i}\right)\\
&=x(a_{-ij}b_ib_j,a_{-i}b_i)\vt{w}_i(a_{-i}b_i)
=\vt{w}_i(a_{-ij}b_ib_j)\\
&=x(a_{-ij}b_ib_j, a_{-j}b_j)\vt{w}_i(a_{-j}b_j)+y(a_{-ij}b_ib_j, a_{-j}b_j)\Delta_{a_i}^{b_i}\\
&=x(a_{-ij}b_ib_j, a_{-j}b_j)x(a_{-j}b_j, a)\vt{w}_i(a)+y(a_{-ij}b_ib_j, a_{-j}b_j)\Delta_{a_i}^{b_i}.
\end{align*}
Suppose without loss of generality that $\vt{w}^*_i(a_i)\neq0$ and $\vt{w}^*_i(a_i)$ is not collinear with $\Delta_{a_i}^{b_i}$, which implies that $\vt{w}_i(a)\neq 0$ and $\vt{w}_i(a)$ is not collinear with $\Delta_{a_i}^{b_i}$. The first and the last line of the above sequence of equalities yield
\[
x(a_{-ij}b_ib_j,a_{-i}b_i) x(a_{-i}b_i,a) = x(a_{-ij}b_ib_j, a_{-j}b_j)x(a_{-j}b_j, a).
\]
Hence, the cycle is exact. 
\end{proof}

\begin{lem}\label{lem:nonzero relevant}
Suppose that a type (2) or type (3) small cycle is such that $\vt{w}^*_i(a_i)\neq 0$ for each action $a$ in the cycle and each relevant task $i$ and none of the transitions is free. Then it is exact.
\end{lem}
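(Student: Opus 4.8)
The plan is to prove exactness by reducing the given small cycle to a collection of cycles already known to be exact, using the hypothesis that $X$ depends nontrivially on at least three tasks. Throughout I would work with the decomposition $\bar{\vt{X}}(c)=\sum_k\vt{w}_k(c)$ and the relations \eqref{eq:zero vectors1}--\eqref{eq:zero vectors3}, recalling that $x(a,b)$ is pinned down by any non-relevant component. First I would dispose of the grounded case: if the cycle is grounded, Lemma \ref{lem:grounded cycles} gives exactness immediately, so I may assume it is not grounded. Since $\vt{w}^*_0=0$, a non-grounded cycle has $\bar{\vt{X}}(c)=\sum_{r\text{ relevant}}\vt{w}_r(c)$ with every relevant component nonzero by hypothesis. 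Because the relevant tasks are nontrivial and there are at least three nontrivial tasks, I can pick a further nontrivial task $k$; non-groundedness forces $\vt{w}^*_k$ to vanish at the cycle's $k$-coordinate, and nontriviality of $k$ (Lemma \ref{lem:Product decomposition 2}) supplies a value $t_k$, distinct from that coordinate, with $\vt{w}^*_k(t_k)\neq0$.

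For a type (3) cycle (a square in relevant tasks $i,j$), I would \emph{extrude} the square in the direction of task $k$ to the value $t_k$, forming a combinatorial cube whose bottom face is the original cycle. The other five faces are all grounded: each of the four side faces is a square in tasks $i,k$ or $j,k$, and at its base the remaining relevant task ($j$ or $i$) has a nonzero component, while the top face is grounded through $\vt{w}^*_k(t_k)\neq0$. Hence, by Lemma \ref{lem:grounded cycles}, all five are exact (and involve no free transitions). On the cube's closed surface every edge lies in exactly two faces with opposite orientations, so the bottom face's signed edge-count equals that of the (reversed, still exact) other five faces; Lemma \ref{lem:sum of exact cycles} then yields exactness of the original cycle. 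Alternatively, when $\bar{\vt{X}}(a^0)\notin V_C$, iterating the Adjacency Lemma around the square already forces the product of the $x$'s to be $1$, and the cube handles only the remaining degenerate configuration.

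For a type (2) cycle (a triangle in a single relevant task $i$), I would extrude the triangle in task $k$ to form a triangular prism. The top triangle is grounded through $t_k$ and hence exact. The three side faces are squares in tasks $i,k$; these are not grounded, but each is a type (3) cycle to which Lemma \ref{lem:cycles with collinear Deltas} applies: the hypothesis that none of the triangle's transitions is free means (using non-groundedness, so that $\bar{\vt{X}}=\vt{w}_i$ on the triangle) that the relevant vector $\vt{w}^*_i$ at the base value is not collinear with the corresponding payoff-difference vector, which is exactly condition (i) of that lemma. With the three side faces and the top triangle exact, the prism's surface again cancels all edges except those of the bottom triangle, so Lemma \ref{lem:sum of exact cycles} delivers its exactness.

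The main obstacle is precisely the degenerate situation in which the local equations around the cycle do not by themselves pin the product of the $x$-coefficients to $1$---for instance when all relevant components are collinear with their payoff-difference vectors, so that $\bar{\vt{X}}$ lies in the span of the cycle's $\Delta$'s. The extrusion into a third task is the device that resolves this: it trades the recalcitrant cycle for grounded faces plus, in the type (2) case, side faces governed by Lemma \ref{lem:cycles with collinear Deltas}. Verifying the bookkeeping for Lemma \ref{lem:sum of exact cycles}---that the extruded faces are genuine adjacency cycles, that none of their transitions is free (so the relevant $x$'s are determined), and that the edge multiplicities match---is the part requiring the most care.
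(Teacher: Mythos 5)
Your proof follows the paper's argument essentially step for step: dispose of the grounded case via Lemma \ref{lem:grounded cycles}, use the existence of a third nontrivial task $k$ to extrude the cycle into a cube (type (3)) or triangular prism (type (2)), establish exactness of the auxiliary faces via groundedness or, for the prism's side squares, via Lemma \ref{lem:cycles with collinear Deltas} using non-freeness to rule out collinearity of $\vt{w}_i^*(a_i)$ with $\Delta_{a_i}^{b_i}$, and conclude with Lemma \ref{lem:sum of exact cycles}. The only difference is presentational---the paper exhibits the auxiliary faces explicitly in Figure \ref{fig:cycles} rather than describing the extrusion abstractly---so the proposal is correct and takes the same route.
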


\begin{figure}[ht]
    \centering
{\scriptsize \scalebox{0.8}{
\begin{tikzpicture}[scale=0.8]
    \coordinate (A) at (0,0,3);\coordinate (AB) at (0,0,0);
    \coordinate (B) at (3,0,3);\coordinate (BB) at (3,0,0);
    \coordinate (C) at (3,3,3);\coordinate (CB) at (3,3,0);
    \coordinate (D) at (0,3,3);\coordinate (DB) at (0,3,0);
    
    \draw[thick] (C) -- (D) -- (DB) -- (CB) -- cycle;
    \draw[thick] (B) -- (A) -- (D) -- (C) -- cycle;
    \fill[orange!50!white, opacity=0.5] (B) -- (A) -- (D) -- (C) -- cycle;
    \draw[thick] (CB) -- (BB);
    \draw[thick] (B) -- (BB);
    \draw[dashed] (DB) -- (AB);
    \draw[dashed] (A) -- (AB);
    \draw[dashed] (BB) -- (AB);

     \node at (A) [circle,fill=black,inner sep=1pt,label=below left:{$a$}] {};
    \node at (B) [circle,fill=black,inner sep=1pt,label=below right:{$a_{-i}b_i$}] {};
    \node at (C) [circle,fill=black,inner sep=1pt,label=above left:{$a_{-ij}b_ib_j$}] {};
    \node at (D) [circle,fill=black,inner sep=1pt,label=above left:{$a_{-j}b_j$}] {};
    \node at (AB) [circle,fill=black,inner sep=1pt,label=below right:{$a_{-k}b_k$}] {};
    \node at (BB) [circle,fill=black,inner sep=1pt,label=above right:{$a_{-ik}b_ib_k$}] {};
    \node at (CB) [circle,fill=black,inner sep=1pt,label=above right:{$a_{-ijk}b_ib_jb_k$}] {};
    \node at (DB) [circle,fill=black,inner sep=1pt,label=above left:{$a_{-jk}b_jb_k$}] {};

    \coordinate (AF) at (6,0,3);\coordinate (AB) at (7,0,0);
    \coordinate (BF) at (10,0,3);\coordinate (BB) at (11,0,0);
    \coordinate (CF) at (8,3,3);\coordinate (CB) at (9,3,0);

    \draw[thick] (AF) -- (CF) -- (BF) -- cycle;
    \fill[orange!50!white, opacity=0.5] (BF) -- (AF) -- (CF) -- cycle;
    \draw[dashed] (AB) --  (BB);
    \draw[thick] (BB) -- (CB);
    \draw[dashed] (CB) --  (AB);
    \draw[dashed] (AB) -- (AF);
    \draw[thick] (BB) -- (BF);
    \draw[thick] (CB) -- (CF);
    
    \node at (AF) [circle,fill=black,inner sep=1pt,label=below left:{$a$}] {};
    \node at (BF) [circle,fill=black,inner sep=1pt,label=below right:{$a_{-i}b_{i}$}] {};
    \node at (CF) [circle,fill=black,inner sep=1pt,label=above left:{$a_{-i}c_{i}$}] {};
    \node at (AB) [circle,fill=black,inner sep=1pt,label=below right:{$a_{-k}b_k$}] {};
    \node at (BB) [circle,fill=black,inner sep=1pt,label=above right:{$a_{-ik}b_ib_k$}] {};
    \node at (CB) [circle,fill=black,inner sep=1pt,label=above right:{$a_{-ik}c_ib_k$}] {};
\end{tikzpicture}
}}
\caption{Small cycles of type (3) and type (2)}\label{fig:cycles}
\end{figure}
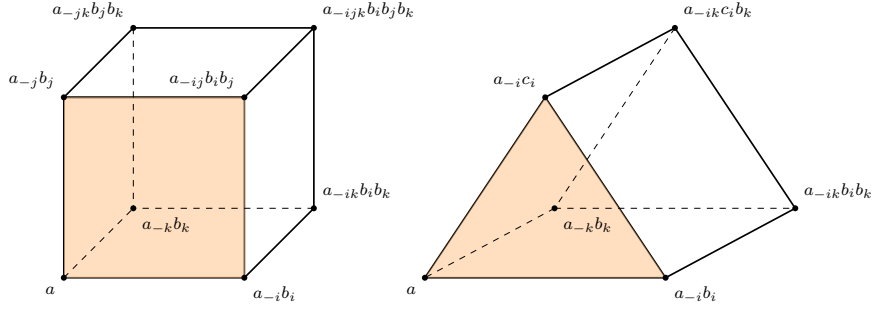

\begin{proof}
If the cycle is grounded, the result follows from Lemma \ref{lem:grounded cycles}. Accordingly, suppose the cycle is not grounded. Then, because of Lemma \ref{lem:nontrivial problems}, and due to the assumption that at least three different tasks are nontrivial, there exists a non-relevant $k$ and an action $b_k\neq a_k$ such that $\vt{w}^*_k(b_k)\neq 0$. 

Suppose that the original cycle $\left(a,a_{-i}b_i, a_{-ij}b_ib_j,a_{-j}b_j,a\right)$ is of type (3). Such a cycle corresponds to the orange face on the left-side of Figure \ref{fig:cycles}. Consider the type (3) cycles that are associated with all five remaining faces of the cube:
\begin{itemize}
\item the bottom, $(a,a_{-i}b_i,a_{-ik}b_ib_k,a_{-k}b_k,a)$;
\item the right, $(a_{-i}b_i,a_{-ij}b_ib_j,a_{-ijk}b_ib_jb_k,a_{-ik}b_ib_k,a_{-i}b_i)$;
\item the top, $(a_{-j}b_j,a_{-jk}b_jb_k,a_{-ijk}b_ib_jb_k,a_{-ij}b_ib_j, a_{-j}b_j)$;
\item the left, $(a,a_{-k}b_k,a_{-jk}b_jb_k,a_{-j}b_j,a)$;
\item and the back, $(a_{-k}b_k,a_{-ik}b_ib_k, a_{-ijk}b_ib_jb_k,a_{-jk}b_jb_k, a_{-k}b_k)$.
\end{itemize}
The first four of these cycles are grounded by the assumption that $\vt{w}^*_i(a_i)\neq 0$ for each action $a$ in the cycle and each relevant task $i$), and the back cycle is grounded by the choice of $k$ and $b_k$. Thus all five cycles are exact by Lemma \ref{lem:grounded cycles}. Moreover, the conditions of Lemma \ref{lem:sum of exact cycles} are satisfied. Therefore, the original cycle is exact as well.

If the cycle is of type (2), then because the cycle is not grounded, we have $\bar{\vt{X}}(a)=\gamma_i(a)\vt{w}_i^*(a_i)$, $\bar{\vt{X}}(a_{-i}b_i)=\gamma_i(a_{-i}b_i)\vt{w}_i^*(b_i)$, and $\bar{\vt{X}}(a_{-i}c_i)=\gamma_i(a_{-i}c_i)\vt{w}_i^*(c_i)$. Because the transition $(a,a_{-i}b_i)$ is not free, the vector $\vt{w}_i^*(a_i)$ cannot be collinear with $\Delta_{a_i}^{b_i}$ (otherwise, $\bar{\vt{X}}(a)$ would also be collinear with $\Delta_{a_i}^{b_i}$, in which case the transition is free). As a result, Lemma \ref{lem:cycles with collinear Deltas} applies to the 4-action cycle $(a,a_{-i}b_i,a_{-ik}b_ib_k,a_{-k}b_k,a)$, which means that this cycle is exact; see the right panel of Figure \ref{fig:cycles}. An analogous observation holds for the other two 4-action cycles depicted in Figure \ref{fig:cycles}. The 3-action cycle in the back is grounded, and hence exact by Lemma \ref{lem:grounded cycles}. Finally, an application of Lemma \ref{lem:sum of exact cycles} shows that the original cycle is exact as well. 
\end{proof}

\begin{lem}\label{lem:no free transitions}
If a type (2) or (3) cycle has no free transitions, then it is exact.
\end{lem}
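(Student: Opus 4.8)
The plan is to dispatch both cycle types by reducing to the three exactness results already in hand: Lemma \ref{lem:grounded cycles} for grounded cycles, Lemma \ref{lem:nonzero relevant} for cycles all of whose relevant-task vectors are nonzero, and Lemma \ref{lem:cycles with collinear Deltas} for type (3) cycles carrying a relevant vector that is nonzero and not collinear with the corresponding $\Delta$. The single observation that organizes everything is that a transition $(a^l,a^{l+1})$ fails to be free only if $\bar{\vt{X}}(a^l)\neq\vt{0}$; since each distinct vertex of the cycle is the tail of some transition, the hypothesis that there are no free transitions forces $\bar{\vt{X}}(a^l)\neq\vt{0}$ at every vertex. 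Combined with the decomposition $\bar{\vt{X}}(a)=\sum_i\vt{w}_i(a)$ into the independent subspaces $E_i$, together with the standing reduction $\vt{w}_0^*=\vt{0}$, this pins down exactly which relevant-task components may vanish.

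First I would dispose of the grounded case, which is immediate from Lemma \ref{lem:grounded cycles}; so I may assume the cycle is not grounded. For a non-grounded cycle, every non-relevant task $k$ has $\vt{w}_k^*(a_k)=\vt{0}$ at its (constant) coordinate $a_k$, so $\bar{\vt{X}}(a^l)$ is supported on the relevant-task subspaces. For a type (2) cycle $\left(a,a_{-i}b_i,a_{-i}c_i,a\right)$ there is a single relevant task $i$, whence $\bar{\vt{X}}(a^l)=\vt{w}_i(a^l)$; nonvanishing of every $\bar{\vt{X}}(a^l)$ then forces $\vt{w}_i^*$ to be nonzero at each of the three distinct values $a_i,b_i,c_i$, and Lemma \ref{lem:nonzero relevant} applies.

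The substantive case, which I expect to be the main obstacle, is a non-grounded type (3) cycle with vertices whose $i$-coordinates lie in $\{a_i,b_i\}$ and $j$-coordinates in $\{a_j,b_j\}$. Here $\bar{\vt{X}}(a^l)=\vt{w}_i(a^l)+\vt{w}_j(a^l)$, so nonvanishing of $\bar{\vt{X}}$ no longer forces both relevant components to be nonzero, since one summand can carry the other. If all four vectors $\vt{w}_i^*(a_i),\vt{w}_i^*(b_i),\vt{w}_j^*(a_j),\vt{w}_j^*(b_j)$ are nonzero, Lemma \ref{lem:nonzero relevant} again finishes. Otherwise some relevant vector vanishes; by the $i\leftrightarrow j$ symmetry and the freedom to rename coordinate values, I may assume $\vt{w}_i^*(a_i)=\vt{0}$. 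Then at the two vertices with $i$-coordinate $a_i$, namely $a$ and $a_{-j}b_j$, we have $\bar{\vt{X}}=\vt{w}_j$, so nonvanishing forces $\vt{w}_j^*(a_j)\neq\vt{0}$ and $\vt{w}_j^*(b_j)\neq\vt{0}$ (in particular no vertex has both relevant components zero). It remains to exhibit a vertex at which some relevant component is simultaneously nonzero and not collinear with the corresponding $\Delta$, so that Lemma \ref{lem:cycles with collinear Deltas} becomes available.

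For this last step I would re-base the 4-cycle at the vertex $a_{-j}b_j$, which is legitimate because exactness is a property of the product of the $x$-coefficients around the cycle and is unchanged under shifting the base point or reversing orientation (reversal yields the reciprocal, which equals $1$ exactly when the original product does, since $x(a,b)x(b,a)=1$). The outgoing cycle transition $(a_{-j}b_j,a)$ is relevant for task $j$ and is not free by hypothesis, so $\bar{\vt{X}}(a_{-j}b_j)=\gamma_j(a_{-j}b_j)\vt{w}_j^*(b_j)$ is nonzero and not collinear with $\Delta_{a_j}^{b_j}$; as collinearity is insensitive to sign, it is likewise not collinear with $\Delta_{b_j}^{a_j}$. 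Reading the cycle as a type (3) cycle based at $a_{-j}b_j$ places us exactly in case (ii) of Lemma \ref{lem:cycles with collinear Deltas}, yielding exactness. The only delicate points to check are the bookkeeping that re-basing matches the template $\left(a',a'_{-i}b'_i,a'_{-ij}b'_ib'_j,a'_{-j}b'_j,a'\right)$ and that the relevant difference vector for the re-based transition is indeed $\Delta_{b_j}^{a_j}=-\Delta_{a_j}^{b_j}$; once these are verified, every type (2) or type (3) cycle without free transitions falls into one of the three covered cases and is exact.
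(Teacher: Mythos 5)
Your proposal is correct and follows essentially the same route as the paper: reduce to Lemmas \ref{lem:grounded cycles} and \ref{lem:nonzero relevant}, note that the absence of free transitions rules out a vanishing $\bar{\vt{X}}$ at any vertex (which kills the type (2) case with a zero relevant component), and handle the remaining type (3) case via Lemma \ref{lem:cycles with collinear Deltas}. The only difference is cosmetic: the paper applies that lemma to the original cycle by observing that non-freeness of the task-$j$ edge at $a$ directly gives $\vt{w}^*_j(a_j)\neq\vt{0}$ and not collinear with $\Delta_{a_j}^{b_j}$, so your re-basing at $a_{-j}b_j$ is sound but unnecessary.
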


\begin{proof}
By Lemmas \ref{lem:grounded cycles} and \ref{lem:nonzero relevant}, it suffices to assume that the cycle is not grounded and $\vt{w}^*_i(a_i)=0$ for some relevant $i$ and action $a$ in the cycle. 
In such a case, if the cycle were of type (2), all transitions to action $a$ would be free. 

Suppose the cycle is of type (3). Let $b\neq a$ be the action in the cycle such that $a_{-j}=b_{-j}$. It follows that $w^*_k(a_k)\neq0$ if and only if $k=j$. Then
\begin{align*}
\bar{\vt{X}}(a)=\gamma_j(a)\vt{w}^*_j(a_j)\text{ and }\bar{\vt{X}}(b)=\gamma_j(b)\vt{w}^*_j(b_j)=x(b,a)\gamma_j(a)\vt{w}^*_j(a_j)
+y(b,a)\Delta_{a_j}^{b_j}.
\end{align*}
Because the transition $(a,b)$ is not free, it must be that $\vt{w}^*_j(a_j)\neq0$ and $\vt{w}^*_j(a_j)$ is not collinear with $\Delta_{a_j}^{b_j}$. Lemma \ref{lem:cycles with collinear Deltas} therefore implies that the cycle is exact.
\end{proof}

\subsection{Cycles with free transitions}\label{sec:Exact cycles - Free transitions}

In this subsection, we show how to determine values of $x(a,b)$ for free transitions so that all cycles that include such transitions are exact. For simplicity, if $X$ depends trivially on task $i$, we say simply that $i$ is trivial (and similarly for nontrivial dependence).

Notice first that, if a transition between adjacent actions $a$ and $b$ such that $a_{-i}=b_{-i}$ is free, then it must be that $\vt{w}^*_j(a_j)=0$ for each $j\neq i$. Indeed, if
$\bar{\vt{X}}(a)$ is collinear with $\Delta_{a_i}^{b_i}$, because of the linear independence of $\Delta_{a_i}^{b_i}$ and $\vt{w}_j(a)$ for each $j\neq i$, it must be that $\vt{w}_j(a)=0$, and hence $\vt{w}^*_j(a_j)=0$.  We refer to this property as the \textit{test} for freeness of the transition (which provides necessary conditions). It follows that, if $j\neq i$ is nontrivial, then task $j$ is in case (ii) from Lemma \ref{lem:nontrivial problems} and $a_j=a_j^0$.

In what follows, we consider two cases.

\emph{Case I}: There exists a single task in case (iii) from Lemma \ref{lem:nontrivial problems}, i.e., one nontrivial $i$ such that $\vt{w}^*_i(a_i)\neq0$ for all $a_i$. Assume without loss of generality that $i=I$. In this case, all free transitions must be between adjacent actions $a$ and $b$ such that $a_{-I}=b_{-I}$. Moreover, it must be that $w^*_I(a_I)$ and $w^*_I(b_I)$ are collinear with $\Delta_{a_I}^{b_I}$.

Assume without loss of generality that task $1$ is nontrivial and fix action $b_1\neq a_1$ so that $\vt{w}^*(b_1)\neq 0$. Let 
\[
x(a,b)=x(a,a_{-1}b_1)x(a_{-1}b_1,a_{-1I}b_1b_I)x(a_{-1I}b_1b_I,b).
\]
The above definition implies that the cycle $(a,a_{-1}b_1,a_{-1I}b_1b_I,b,a)$ is exact. This cycle corresponds to the red face(s) in Figure \ref{fig:I Cycles with free transitions}. The orange edge corresponds to the free transition. Notice that each red face cycle contains only one free transition. This is because of the test: all other transitions of the cycle either keep fixed the action in task $I$ or the action $b_1$, and those actions are associated with nonzero $\vt{w}^*_\cdot(\cdot)$ vectors. 

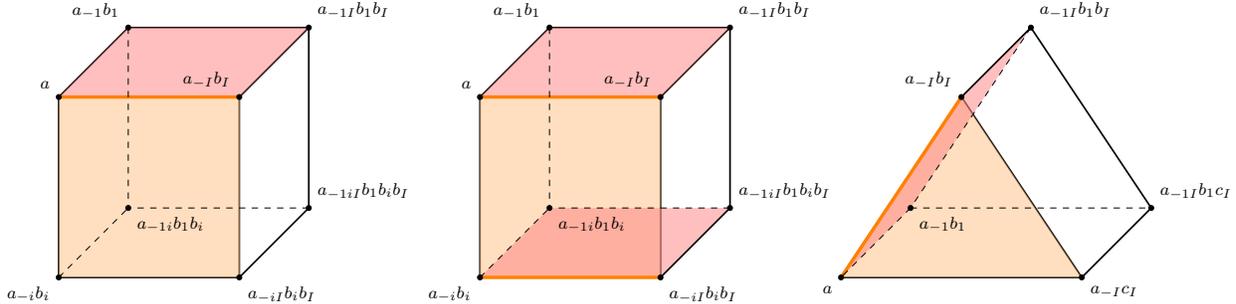
\begin{figure}[ht]
    \centering
    {\scriptsize \scalebox{0.8}{

\begin{tikzpicture}[scale=0.8]\label{fig:I Cycles with free transitions}
    \coordinate (A) at (0,0,3);\coordinate (AB) at (0,0,0);
    \coordinate (B) at (3,0,3);\coordinate (BB) at (3,0,0);
    \coordinate (C) at (3,3,3);\coordinate (CB) at (3,3,0);
    \coordinate (D) at (0,3,3);\coordinate (DB) at (0,3,0);
    
    \draw[thick] (C) -- (D) -- (DB) -- (CB) -- cycle;
    \fill[red!50!white, opacity=0.5] (C) -- (D) -- (DB) -- (CB) -- cycle;
    \draw[thick] (B) -- (A) -- (D) -- (C) -- cycle;
    \fill[orange!50!white, opacity=0.5] (B) -- (A) -- (D) -- (C) -- cycle;
    \draw [ultra thick, orange] (C) -- (D);
    \draw[thick] (CB) -- (BB);
    \draw[thick] (B) -- (BB);
    \draw[dashed] (DB) -- (AB);
    \draw[dashed] (A) -- (AB);
    \draw[dashed] (BB) -- (AB);

    \node at (A) [circle,fill=black,inner sep=1pt,label=below left:{$a_{-i}b_i$}] {};
    \node at (B) [circle,fill=black,inner sep=1pt,label=below right:{$a_{-iI}b_ib_I$}] {};
    \node at (C) [circle,fill=black,inner sep=1pt,label=above left:{$a_{-I}b_I$}] {};
    \node at (D) [circle,fill=black,inner sep=1pt,label=above left:{$a$}] {};
    \node at (AB) [circle,fill=black,inner sep=1pt,label=below right:{$a_{-1i}b_1b_i$}] {};
    \node at (BB) [circle,fill=black,inner sep=1pt,label=above right:{$a_{-1iI}b_1b_ib_I$}] {};
    \node at (CB) [circle,fill=black,inner sep=1pt,label=above right:{$a_{-1I}b_1b_I$}] {};
    \node at (DB) [circle,fill=black,inner sep=1pt,label=above left:{$a_{-1}b_1$}] {};

    \coordinate (A) at (7,0,3);\coordinate (AB) at (7,0,0);
    \coordinate (B) at (10,0,3);\coordinate (BB) at (10,0,0);
    \coordinate (C) at (10,3,3);\coordinate (CB) at (10,3,0);
    \coordinate (D) at (7,3,3);\coordinate (DB) at (7,3,0);
    
    \draw[thick] (C) -- (D) -- (DB) -- (CB) -- cycle;
    \fill[red!50!white, opacity=0.5] (C) -- (D) -- (DB) -- (CB) -- cycle;
    \draw[thick] (B) -- (A) -- (D) -- (C) -- cycle;
    \fill[orange!50!white, opacity=0.5] (B) -- (A) -- (D) -- (C) -- cycle;
    \fill[red!50!white, opacity=0.5] (A) -- (B) -- (BB) -- (AB) -- cycle;
    \draw [ultra thick, orange] (A) -- (B);
    \draw [ultra thick, orange] (C) -- (D);
    \draw[thick] (CB) -- (BB);
    \draw[thick] (B) -- (BB);
    \draw[dashed] (DB) -- (AB);
    \draw[dashed] (A) -- (AB);
    \draw[dashed] (BB) -- (AB);

    \node at (A) [circle,fill=black,inner sep=1pt,label=below left:{$a_{-i}b_i$}] {};
    \node at (B) [circle,fill=black,inner sep=1pt,label=below right:{$a_{-iI}b_ib_I$}] {};
    \node at (C) [circle,fill=black,inner sep=1pt,label=above left:{$a_{-I}b_I$}] {};
    \node at (D) [circle,fill=black,inner sep=1pt,label=above left:{$a$}] {};
    \node at (AB) [circle,fill=black,inner sep=1pt,label=below right:{$a_{-1i}b_1b_i$}] {};
    \node at (BB) [circle,fill=black,inner sep=1pt,label=above right:{$a_{-1iI}b_1b_ib_I$}] {};
    \node at (CB) [circle,fill=black,inner sep=1pt,label=above right:{$a_{-1I}b_1b_I$}] {};
    \node at (DB) [circle,fill=black,inner sep=1pt,label=above left:{$a_{-1}b_1$}] {};



    \coordinate (AF) at (13,0,3);\coordinate (AB) at (13,0,0);
    \coordinate (BF) at (17,0,3);\coordinate (BB) at (17,0,0);
    \coordinate (CF) at (15,3,3);\coordinate (CB) at (15,3,0);
    
    \draw[thick] (AF) -- (CF) -- (BF) -- (AF);
    \fill[orange!50!white, opacity=0.5] (BF) -- (AF) -- (CF) -- cycle;
    \fill[red!50!white, opacity=0.5] (CF) -- (AF) -- (AB) -- (CB) -- cycle;
    \draw [ultra thick, orange] (CF) -- (AF);
    \draw[dashed] (AB) --  (BB);
    \draw[thick] (BB) -- (CB);
    \draw[dashed] (CB) --  (AB);
    \draw[dashed] (AB) -- (AF);
    \draw[thick] (BB) -- (BF);
    \draw[thick] (CB) -- (CF);
    
    \node at (AF) [circle,fill=black,inner sep=1pt,label=below left:{$a$}] {};
    \node at (BF) [circle,fill=black,inner sep=1pt,label=below right:{$a_{-I}c_{I}$}] {};
    \node at (CF) [circle,fill=black,inner sep=1pt,label=above left:{$a_{-I}b_{I}$}] {};
    \node at (AB) [circle,fill=black,inner sep=1pt,label=below right:{$a_{-1}b_1$}] {};
    \node at (BB) [circle,fill=black,inner sep=1pt,label=above right:{$a_{-1I}b_1c_I$}] {};
    \node at (CB) [circle,fill=black,inner sep=1pt,label=above right:{$a_{-1I}b_1b_I$}] {};
\end{tikzpicture}
}}
\caption{Cycles with free transitions in case I.}\label{fig:free cycles I}
\end{figure}

There are three types of cycles that contain free transitions, depicted as orange faces in Figure \ref{fig:I Cycles with free transitions}. 

The left panel corresponds to the cycle $(a, a_{-i}b_i, a_{-iI}b_ib_I, a_{-I}b_I, a)$ when task $i$ is nontrivial. In this case, $b_i\neq a_i^0$, and, by Lemma \ref{lem:Product decomposition 2}, $\vt{w}_i^*(b_i)\neq 0$. The test implies that none of the other transitions in the orange cycle are free: either action $b_i$ or action $a_I$ is fixed. Analogously, an application of the test shows that none of the other cycles (the uncolored faces) is free: one of the actions $a_I$, $b_i$, or $b_1$ is fixed. Proceeding as in the proof of Lemma \ref{lem:nonzero relevant}, we see that this cycle is exact. 

The center panel corresponds to a situation when task $i$ is trivial. In this case, the transition $(a_{-i}b_i,a_{-iI}b_ib_I)$ is free and $x\left(a_{-i}b_i,a_{-iI}b_ib_I\right)$ can be chosen to make the cycle on the bottom face exact. None of the other transitions are free. Because the red faces and the uncolored cycles are exact, the above argument implies that the orange face cycle is exact as well.

The right panel corresponds to the orange cycle $(a, a_{-I}b_I, a_{-I}c_I,a)$. The other transitions of the orange cycle are not free (otherwise, $w^*(a_I)$ would be collinear with $\Delta_{a_I}^{c_I}$, which would violate linear independence of the latter vector with $\Delta_{a_I}^{b_I}$). All other transitions fix one of the actions: $a_I$, $b_I$, $c_I$, or $b_1$. Hence, due to the test, none of the remaining transitions are free. The claim follows from the same reasoning as in Lemma \ref{lem:nonzero relevant}. 

\emph{Case II}: For all nontrivial $i$, there exists a unique $a^0_i$ such that $\vt{w}^*_i(a^0_i)=0$. Let $a^0$ be the product problem action that consists of actions $a^0_i$. Assume without loss of generality that task $i=1$ is nontrivial. Fix action $a^*_1\neq a^0_1$.

In this case, a transition is free if and only if it takes the form $(a^0_{-i}a_i,a^0)$ for some $i$. Indeed, the above observation implies that if transition $(a,b)$ is free, then $a_{-i}=a^0_{-i}$. Furthermore, if neither $a=a^0$ nor $b=a^0$, then both $w^*_i(a_i)$ and $w^*_i(b_i)$ must be collinear with $\Delta_{a_i}^{b_i}$. But, together with the linear independence assumption, this implies that $w^*_i(a_i)$ is not collinear with $\Delta_{a_i}^{a^0_i}$, which contradicts the fact that 
\[x(a^0,a)\gamma_i(a)w^*_i(a_i)+y(a^0,a)\Delta_{a_i}^{a^0_i}
=\gamma_i(a^0)w^*_i(a_i^0)
=0.\]

For each $b_1\neq a^*_1$, each $i\neq 1$, and each $b_i$, let
\begin{align*}
x(a^0_{-1}a^*_1,a^0)&=1,\text{ }
x(a^0_{-1}b_1,a^0)=x(a^0_{-1}b_1,a^0_{-1}a^*_1),\text{ and }\\
\qquad x(a^0_{-i}b_i,a^0)&=x(a^0_{-i}b_i,a^0_{-1i}a^*_1b_i)x(a^0_{-1i}a^*_1b_i,a^0_{-1}a^*_1).
\end{align*} \sloppy
Due to the above definition, the cycles $(a^0,a^0_{-1}b_1,a^0_{-1}a^*_1,a^0)$ and $(a^0,a^0_{-i}b_i,a^0_{-i1}a^*_1b_i,a^0_{-1}a^*_1,a^0)$ are exact. These cycles correspond to the red faces in Figure \ref{fig:II Cycles without free transitions}.

\fussy

There are three types of cycles that contain free transitions in this case other than the cycles listed above. 

First, consider a cycle $(a^0,a^0_{-i}b_i,a^0_{-ij}b_ib_j,a^0_{-j}b_j,a^0)$ for $i\neq 1$. This cycle is depicted in orange on the left panel of Figure \ref{fig:II Cycles without free transitions}. The two cycles depicted in red are exact due to the choice of the $x$ coefficients. Finally, all of the other cycles are exact because they do not contain free transitions. 

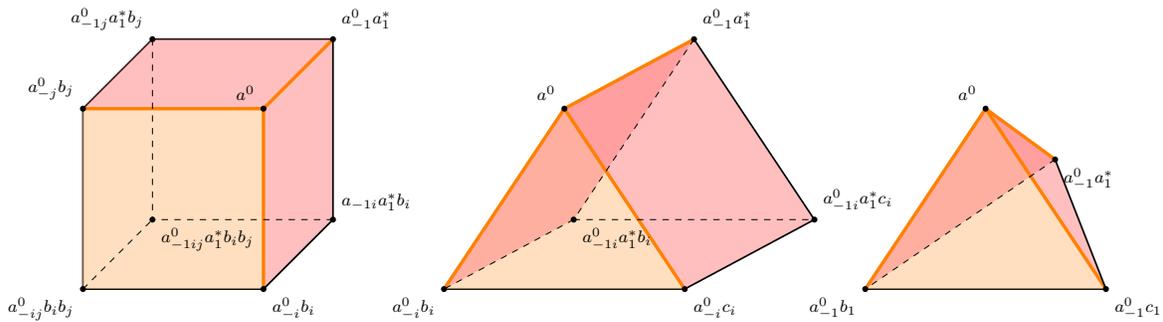
\begin{figure}[ht]
    \centering
{\scriptsize \scalebox{0.8}{
\begin{tikzpicture}[scale=0.8]\label{fig:II Cycles without free transitions}
    \coordinate (A) at (0,0,3);\coordinate (AB) at (0,0,0);
    \coordinate (B) at (3,0,3);\coordinate (BB) at (3,0,0);
    \coordinate (C) at (3,3,3);\coordinate (CB) at (3,3,0);
    \coordinate (D) at (0,3,3);\coordinate (DB) at (0,3,0);
    
    \draw[thick] (C) -- (D) -- (DB) -- (CB) -- cycle;
    \fill[red!50!white, opacity=0.5] (C) -- (D) -- (DB) -- (CB) -- cycle;
    \fill[red!50!white, opacity=0.5] (B) -- (C) -- (CB) -- (BB) -- cycle;
    \draw[thick] (B) -- (A) -- (D) -- (C) -- cycle;
    \fill[orange!50!white, opacity=0.5] (B) -- (A) -- (D) -- (C) -- cycle;
    \draw [ultra thick, orange] (C) -- (D);
    \draw [ultra thick, orange] (C) -- (CB);
    \draw [ultra thick, orange] (C) -- (B);
    \draw[thick] (CB) -- (BB);
    \draw[thick] (B) -- (BB);
    \draw[dashed] (DB) -- (AB);
    \draw[dashed] (A) -- (AB);
    \draw[dashed] (BB) -- (AB);

    \node at (A) [circle,fill=black,inner sep=1pt,label=below left:{$a^0_{-ij} b_ib_j$}] {};
    \node at (B) [circle,fill=black,inner sep=1pt,label=below right:{$a^0_{-i}b_i$}] {};
    \node at (C) [circle,fill=black,inner sep=1pt,label=above left:{$a^0$}] {};
    \node at (D) [circle,fill=black,inner sep=1pt,label=above left:{$a^0_{-j}b_j$}] {};
    \node at (AB) [circle,fill=black,inner sep=1pt,label=below right:{$a^0_{-1ij}a^*_1b_ib_j$}] {};
    \node at (BB) [circle,fill=black,inner sep=1pt,label=above right:{$a_{-1i}a^*_1b_i$}] {};
    \node at (CB) [circle,fill=black,inner sep=1pt,label=above right:{$a^0_{-1}a^*_1$}] {};
    \node at (DB) [circle,fill=black,inner sep=1pt,label=above left:{$a^0_{-1j}a^*_1b_j$}] {};

    \coordinate (AF) at (6,0,3);\coordinate (AB) at (7,0,0);
    \coordinate (BF) at (10,0,3);\coordinate (BB) at (11,0,0);
    \coordinate (CF) at (8,3,3);\coordinate (CB) at (9,3,0);

    \draw[thick] (AF) -- (CF) -- (BF) -- (AF);
    \fill[orange!50!white, opacity=0.5] (BF) -- (AF) -- (CF) -- cycle;
    \fill[red!50!white, opacity=0.5] (CF) -- (AF) -- (AB) -- (CB) -- cycle;
    \fill[red!50!white, opacity=0.5] (CF) -- (CB) -- (BB) -- (BF) -- cycle;
    \draw [ultra thick, orange] (CF) -- (AF);
    \draw [ultra thick, orange] (CF) -- (CB);
    \draw [ultra thick, orange] (CF) -- (BF);
    \draw[dashed] (AB) --  (BB);
    \draw[thick] (BB) -- (CB);
    \draw[dashed] (CB) --  (AB);
    \draw[dashed] (AB) -- (AF);
    \draw[thick] (BB) -- (BF);
    
    \node at (AF) [circle,fill=black,inner sep=1pt,label=below left:{$a^0_{-i}b_i$}] {};
    \node at (BF) [circle,fill=black,inner sep=1pt,label=below right:{$a^0_{-i}c_i$}] {};
    \node at (CF) [circle,fill=black,inner sep=1pt,label=above left:{$a^0$}] {};
    \node at (AB) [circle,fill=black,inner sep=1pt,label=below right:{$a^0_{-1i}a^*_1b_i$}] {};
    \node at (BB) [circle,fill=black,inner sep=1pt,label=above right:{$a^0_{-1i}a^*_1c_i$}] {};
    \node at (CB) [circle,fill=black,inner sep=1pt,label=above right:{$a^0_{-1}a^*_1$}] {};



    \coordinate (AF) at (13,0,3);
    \coordinate (BF) at (17,0,3);
    \coordinate (CF) at (15,3,3);
    \coordinate (B) at (15,1,0);

    \draw[thick] (AF) -- (CF) -- (BF) -- (AF);
    \fill[orange!50!white, opacity=0.5] (BF) -- (AF) -- (CF) -- cycle;
    \fill[red!50!white, opacity=0.5] (CF) -- (AF) -- (B) -- cycle;
    \fill[red!50!white, opacity=0.5] (CF) -- (B) -- (BF) -- cycle;
    \draw [ultra thick, orange] (CF) -- (AF);
    \draw [ultra thick, orange] (CF) -- (B);
    \draw [ultra thick, orange] (CF) -- (BF);
    \draw[dashed] (B) -- (AF);
    \draw[thick] (B) -- (BF);
    
    \node at (AF) [circle,fill=black,inner sep=1pt,label=below left:{$a^0_{-1}b_1$}] {};
    \node at (BF) [circle,fill=black,inner sep=1pt,label=below right:{$a^0_{-1}c_1$}] {};
    \node at (CF) [circle,fill=black,inner sep=1pt,label=above left:{$a^0$}] {};
    \node at (B) [circle,fill=black,inner sep=1pt,label=below right:{$a^0_{-1}a^*_1$}] {};
    
\end{tikzpicture}
}}
\caption{Cycles with free transitions in case II.}\label{fig:free cycles II}
\end{figure}

Second, consider a cycle $(a^0,a^0_{-i}b_i,a^0_{-i}c_i,a^0)$ for $i\neq 1$. This cycle corresponds to front wall depicted in orange on the center panel of Figure \ref{fig:II Cycles without free transitions}. The two cycles depicted in red are exact due to the choice of the $x$ coefficients. All of the other cycles (corresponding to the back and bottom walls) are exact because they do not contain free transitions.

Third, consider a cycle $(a^0,a^0_{-1}b_1,a^0_{-1}c_1,a^0)$. This cycle corresponds to the front wall depicted in orange on the right panel of Figure \ref{fig:II Cycles without free transitions}. The two cycles depicted in red (corresponding to the left and right walls) are exact due to the choice of the $x$ coefficients. The remaining cycle (corresponding to the bottom wall) is exact because it does not contain any free transitions.

Proceeding as in the proof of Lemma \ref{lem:nonzero relevant}, we see that each of the considered cycles is exact. 

Because there can be at most one task in case (iii) of Lemma \ref{lem:nontrivial problems}, there are no other cases to consider. Together with Lemma \ref{lem:no free transitions}, this concludes the proof of Lemma \ref{lem:all cycles exact}.

\subsection{Conclusion of the proof of Theorem \ref{prop:product problems}}\label{sec: Product proof conclusion}

Recall that $\vt{w}_i(a)=\gamma_i(a)\vt{w}_i^*(a_i)$. The next result delivers additional information about the function $\gamma_i(\cdot)$.

\begin{lem}\label{lem:potential}
There exist $\gamma:A\longrightarrow\R$, $\gamma_i^*:A_i\longrightarrow\R$, and $\gamma^*_0\in \R$ such that, for each $a$, $\gamma_i(a)=\gamma(a)\gamma^*_i(a_i)$ for each $i=1,\dots,I$, and $\gamma_0(a)=\gamma(a)\gamma^*_0$. 
In addition, $x(a,b)=\frac{\gamma(a)}{\gamma(b)}$ for any two adjacent actions $a$ and $b$.
\end{lem}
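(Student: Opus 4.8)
The plan is to build a single scalar ``potential'' $\gamma:A\to\R$ out of the edge weights $x(a,b)$, using exactness of all cycles (Lemma \ref{lem:all cycles exact}), and then to read off the task-wise factorizations $\gamma_i(a)=\gamma(a)\gamma_i^*(a_i)$ directly from the path-based definitions of the $\gamma_i$ given in the proof of Lemma \ref{lem:Product decomposition 1}.

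First I would fix a base action $a^*$, set $\gamma(a^*)=1$, and for any $a$ choose a path of adjacent actions $a^*=a^0,\dots,a^n=a$ (such a path exists because the adjacency graph is connected), defining $\gamma(a)=\prod_{l=0}^{n-1}x(a^{l+1},a^l)$. The crucial point is well-definedness: two paths from $a^*$ to $a$ concatenate, after reversing one, into an adjacency cycle, and combining the exactness of that cycle with the identity $x(a,b)x(b,a)=1$ from Lemma \ref{lem:all cycles exact} shows that the two path-products agree. Since every $x$ is nonzero, $\gamma$ is nonzero-valued. Extending a path to $a$ by one step to an adjacent $b$ gives $\gamma(b)=\gamma(a)x(b,a)$, whence $x(a,b)=\gamma(a)/\gamma(b)$, which is the last assertion of the lemma.

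For the factorization I would observe that, because $\gamma$ is path-independent, the product of the terms $x(a^{l+1},a^l)$ along \emph{any} path from an action $P$ to an action $Q$ equals $\gamma(Q)/\gamma(P)$. Recall from the proof of Lemma \ref{lem:Product decomposition 1} that $\gamma_i(a)$ is precisely such a product along a path from $a^*_{-i}a_i$ to $a$; hence $\gamma_i(a)=\gamma(a)/\gamma(a^*_{-i}a_i)$. As $a^*_{-i}a_i$ depends only on the coordinate $a_i$, setting $\gamma_i^*(a_i)=1/\gamma(a^*_{-i}a_i)$ gives $\gamma_i(a)=\gamma(a)\gamma_i^*(a_i)$, with $\gamma_i^*$ a nonzero function of $a_i$ alone. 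The same reasoning applied to $\gamma_0$ --- whose defining path may run from $a^*$ to $a$ through arbitrary coordinate changes, since \eqref{eq:zero vectors3} relates $\vt{w}_0$ across every edge --- yields $\gamma_0(a)=\gamma(a)/\gamma(a^*)=\gamma(a)$, so $\gamma_0^*=1$ works.

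The substantive content --- that these path-products are mutually consistent --- rests entirely on exactness, which is the hard part and is already discharged in Lemma \ref{lem:all cycles exact}; the remaining steps here are bookkeeping. The only point I expect to require care is matching the specific paths used to define $\gamma_i$ and $\gamma_0$ in Lemma \ref{lem:Product decomposition 1} with the path-independent potential, but this is immediate once well-definedness of $\gamma$ is in hand, since path-independence makes the particular choice of path irrelevant.
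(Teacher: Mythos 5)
Your proposal is correct and follows essentially the same route as the paper: define $\gamma$ as a path product of the $x$'s from a fixed base action, use exactness of all cycles (Lemma \ref{lem:all cycles exact}) for path-independence, read off $x(a,b)=\gamma(a)/\gamma(b)$ from a one-step extension, and set $\gamma_i^*(a_i)=1/\gamma(a^*_{-i}a_i)$. The only cosmetic difference is that the paper verifies the factorization of $\gamma_i$ by tracking the vectors $\vt{w}_i(a)$ (with a caveat for $\vt{w}_i^*(a_i)=0$), whereas you work directly with the path-product definition of $\gamma_i$ from Lemma \ref{lem:Product decomposition 1}, which is equally valid.
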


\begin{proof}
Fix an action $a^*$. For each $a$, find a path of adjacent actions $a^*=a^0,\dots,a^m=a$. Define $\gamma(a)=x(a^m,a^{m-1})\cdots x(a^1,a^0).$ Lemma \ref{lem:all cycles exact} implies that $\gamma(a)$ is well defined in that its definition does not depend on the choice of path from $a^*$ to $a$. Moreover, for any two adjacent actions $a$ and $b$, if $a^*=a^0,\dots,a^m=a$ is an adjacency path from $a^*$ to $a$, then $a^*=a^0,\dots,a^m,b$ is an adjacency path from $a^*$ to $b$, and
\[
\gamma(b)=x(b,a)x(a^m,a^{m-1})\cdots x(a^1,a^0)=x(b,a)\gamma(a).
\]
Let $\gamma^*_i(a_i)=\gamma_i(a^*_{-i}a_i)/\gamma(a^*_{-i}a_i)$. The claim for $i>0$ follows from the fact that, for each $a$, if $\vt{w}^*_i(a_i)\neq 0$, then
\[
\vt{w}_i(a)=\frac{\gamma_i(a)}{\gamma_i(a^*_{-i}a_i)}\vt{w}_i(a^*_{-i}a_i)
=\frac{\gamma(a)}{\gamma(a^*_{-i}a_i)}\vt{w}_i(a^*_{-i}a_i).
\]
A similar argument establishes the claim for $i=0$ (see also the proof of Lemma \ref{lem:Product decomposition 1}). 
\end{proof}

\begin{lem}
There exist $y^*_i\in \R$ and $\vt{d}_i\in E_i$ such that
$\gamma^*_i(a_i)\vt{w}^*_i(a_i) = y^*_i\bar{\vt{u}}_i(a_i) + \vt{d}_i$ for any $a_i$.
\end{lem}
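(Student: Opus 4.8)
The plan is to combine the decomposition relation (\ref{eq:zero vectors1}) with Lemma \ref{lem:potential} to turn the claim into a purely within-task statement, and then to exploit the pairwise linear independence of payoff differences to pin down a single scalar $y^*_i$. Throughout I write $\vt{z}_i(a_i):=\gamma^*_i(a_i)\vt{w}^*_i(a_i)$, the quantity the lemma is about.

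First I would substitute $\vt{w}_i(a)=\gamma_i(a)\vt{w}^*_i(a_i)=\gamma(a)\gamma^*_i(a_i)\vt{w}^*_i(a_i)$ and $x(a,b)=\gamma(a)/\gamma(b)$ (Lemma \ref{lem:potential}) into (\ref{eq:zero vectors1}), for any two adjacent actions $a,b$ with $a_{-i}=b_{-i}$. The common factor $\gamma(a)$ cancels and one obtains $\vt{z}_i(a_i)-\vt{z}_i(b_i)=\lambda_i(a_i,b_i)\,\Delta^{a_i}_{b_i}$, where $\lambda_i(a_i,b_i)=y(a,b)/\gamma(a)$ and $\Delta^{a_i}_{b_i}=\bar{\vt{u}}_i(a_i)-\bar{\vt{u}}_i(b_i)\neq\vt{0}$ by the absence of redundant actions. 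The key observation is that the left-hand side depends only on the task-$i$ coordinates $a_i,b_i$ (since $\vt{z}_i$ is a function of a single coordinate); because $\Delta^{a_i}_{b_i}$ is a fixed nonzero vector, the scalar $\lambda_i(a_i,b_i)$ is thereby well defined, independent of the chosen $a_{-i}$, and this holds whether or not the transition is free (the chosen $x,y$ still satisfy (\ref{eq:zero vectors1}) and Lemma \ref{lem:potential}). Each such pair is realized by a genuine adjacency: when $|A_i|=2$ the single pair is adjacent, and when the task graph is complete every pair is, so $a_{-i}a_i$ and $a_{-i}b_i$ are adjacent in the product problem.

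Next I would show $\lambda_i(a_i,b_i)$ is constant in the pair. This is immediate when $|A_i|=2$. When $|A_i|\geq 3$, take three distinct actions $a_i,b_i,c_i$, add the relations for $(a_i,b_i)$ and $(b_i,c_i)$, and equate the result with the relation for $(a_i,c_i)$; writing everything with base point $c_i$ and using $\Delta^{a_i}_{b_i}=\Delta^{a_i}_{c_i}-\Delta^{b_i}_{c_i}$ gives
\[
\bigl(\lambda_i(a_i,c_i)-\lambda_i(a_i,b_i)\bigr)\Delta^{a_i}_{c_i}+\bigl(\lambda_i(a_i,b_i)-\lambda_i(b_i,c_i)\bigr)\Delta^{b_i}_{c_i}=\vt{0}.
\]
The pairwise linear independence assumption on $\{\Delta^{a_i}_{c_i},\Delta^{b_i}_{c_i}\}$ forces both coefficients to vanish, so $\lambda_i$ agrees on all three pairs; since any two pairs of actions in $A_i$ can be linked through such triples, $\lambda_i$ is a single constant, which I define to be $y^*_i$.

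Finally, with $\lambda_i\equiv y^*_i$ the relation reads $\vt{z}_i(a_i)-\vt{z}_i(b_i)=y^*_i\bigl(\bar{\vt{u}}_i(a_i)-\bar{\vt{u}}_i(b_i)\bigr)$ for every pair, so $\vt{z}_i(a_i)-y^*_i\bar{\vt{u}}_i(a_i)$ is independent of $a_i$; calling this common vector $\vt{d}_i$ gives the claimed identity $\gamma^*_i(a_i)\vt{w}^*_i(a_i)=y^*_i\bar{\vt{u}}_i(a_i)+\vt{d}_i$. For the membership $\vt{d}_i\in E_i$, I would note that $\vt{z}_i(a_i)\in E_i$ and that all differences $\Delta^{a_i}_{b_i}$ lie in $E_i$, so the component of $\bar{\vt{u}}_i(a_i)$ outside $E_i$ is the same for every $a_i$; reading $\bar{\vt{u}}_i(a_i)$ as its $E_i$-part (the fixed complementary component is action-independent and is absorbed into the global constant $\vt{d}$ of (\ref{eq: weighted-aligned})) yields $\vt{d}_i\in E_i$. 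The hard part will be the constancy step: establishing that the proportionality scalar does not drift from pair to pair is exactly where the pairwise independence hypothesis bites, and it is the only place the argument can fail if that hypothesis is weakened.
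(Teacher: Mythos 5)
Your proof is correct and follows essentially the same route as the paper's: substitute Lemma \ref{lem:potential} into \eqref{eq:zero vectors1} to get a within-task relation with scalar $y(a,b)/\gamma(a)$, use the three-action triangle identity together with pairwise linear independence of the $\Delta$ vectors to show the scalar is constant, and then define $\vt{d}_i$ as the common difference. Your extra remark on why $\vt{d}_i\in E_i$ (absorbing the action-independent component of $\bar{\vt{u}}_i(a_i)$ outside $E_i$) addresses a detail the paper's proof passes over silently, but it is a refinement of the same argument rather than a different one.
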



\begin{proof}
By (\ref{eq:zero vectors1}), for any actions $a$ and $b$ such that $a_{-i}=b_{-i}$,
\begin{align}\label{eq:w representation}
&\gamma_i^*(a_i)\vt{w}_i^*(a_i)-\gamma_i^*(b_i)\vt{w}_i^*(b_i)
=\frac{1}{\gamma(a)}\vt{w}_i(a)-\frac{1}{\gamma(b)}\vt{w}_i(b)\\
\nonumber =&\frac{1}{\gamma(a)}\left(\vt{w}_i(a)-x(a,b)\vt{w}_i(b)\right)
=\frac{1}{\gamma(a)}y(a,b)\Delta^{a_i}_{b_i}.
\end{align}
Because the left-hand side and $\Delta^{a_i}_{b_i}$ do not depend on $a_{-i}$, neither does $y(a,b)/\gamma(a)$. Let $y_i^*(a_i,b_i)=y(a,b)/\gamma(a)$. 

If task $i$ has only two actions, it is easy to see that $y^*_i(a_i,b_i)=y^*_i(b_i,a_i)=:y^*_i$. The claim follows. 

If task $i$ has at least three actions, take $a$, $b$, and $c$ such that $a_{-i}=b_{-i}=c_{-i}$ and $a_i$, $b_i$, and $c_i$ are distinct. Applying the above equation to pairs $(a,b)$, $(b,c)$, and $(c,a)$ yields
\begin{align*}\label{eq:w representation Independence}
y^*_i(a_i,c_i)\Delta^{a_i}_{b_i}
+y^*_i(a_i,c_i)\Delta^{b_i}_{c_i}
&=y^*_i(a_i,c_i)\Delta^{a_i}_{c_i}
=\gamma_i^*(a_i)\vt{w}_i^*(a_i)-\gamma_i^*(c_i)\vt{w}_i^*(c_i)\\
&=\gamma_i^*(a_i)\vt{w}_i^*(a_i)-\gamma_i^*(b_i)\vt{w}_i^*(b_i)
+\gamma_i^*(b_i)\vt{w}_i^*(b_i)-\gamma_i^*(c_i)\vt{w}_i^*(c_i)\\
&=y^*_i(a_i,b_i)\Delta^{a_i}_{b_i}
+y^*_i(b_i,c_i)\Delta^{b_i}_{c_i}.
\end{align*}
The independence assumption implies that $y^*_i(a_i,b_i)=y^*_i(a_i,c_i)$. Because the claim holds for arbitrary and distinct actions, there must be $y^*_i$ such that for all $a_i$ and $b_i$, we have
$y^*_i\left(a_i,b_i\right)=y^*_i$.

Finally, fix $a_i^*$ and take $\vt{d}_i=\gamma_i^*(a_i^*)\vt{w}_i^*(a_i^*)-y_i^*\bar{\vt{u}}_i(a^*_i)$. The claim follows from equation (\ref{eq:w representation}).
\end{proof}

Substituting the observations from the two lemmas back into the decomposition of $\vt{\bar{X}}=\sum_{i}\vt{w}_i(a)$, we obtain
\begin{align*}
\bar{X}(a)
&=\gamma(a)\left(\sum_{i}\gamma^*_i(a_i)\vt{w}^*_i(a_i)+\gamma_0^*\vt{w}^*_0\right)=\gamma(a)\left(
\sum_{i}y^*_i\bar{\vt{u}}_i(a_i)
+\left[\sum_{i}\vt{d}_i+\gamma_0^*\vt{w}^*_0\right]\right).
\end{align*}
Let $\vt{d}$ be the vector in the square brackets. The result follows.


\subsection{Converse}

We have shown that \eqref{eq: weighted-aligned} is necessary for incentivizability. All that remains is to show that it is sufficient.

Notice that if $\tau_i>0$ for all $i$, the product problem is equivalent (in terms of $u$-optimal choices) to a problem with payoffs
$u\left(a;\theta\right)=\sum_{i}\tau_i u_{i}\left(a_{i},\theta_{i}\right).$
In this latter problem, any $X$ satisfying \eqref{eq: weighted-aligned} is aligned with $u$, and is therefore incentivizable. This in turn implies that $X$ is incentivizable in the original problem.

If $\tau_i\leq 0$ for some $i$, note that, by the proof of Proposition \ref{Observation 0}, to show that $X$ of the form described in \eqref{eq: weighted-aligned} is incentivizable, it suffices to show that
$X(a;\theta)=d(\theta)+\sum_i \tau_i u_i(a_i,\theta_i)$ is, where we may assume $|\tau_i|<1$ for all $i$ and $X(a;\theta)\in [0,1]$ for all $a$ and $\theta$. Letting
\begin{align*}
V(r,a,\theta)&=\int_0^r X(a;\theta)dx + \int_r^1 xdx -\frac{1}{2}+\sum_i u_i(a_i,\theta_i) \\
&= rd(\theta) +\sum_i (1+r\tau_i)u_i(a_i,\theta_i)-\frac{r^2}{2},
\end{align*}
the $V$-optimal choice of $a$ is the same as the $u$-optimal one since $1+r\tau_i>0$ for all $i$ and $r$, and the optimal choice of $r$ is $\mathbb{E}_p[X(a;\theta)]$, as needed.

{\small
\printbibliography

@unpublished{healy2024belief,
	title={Belief elicitation: A user's guide},
	author={Healy, Paul J. and Leo, Greg},
	year={2024},
	note={To appear in the Handbook of Experimental Methodology, ed.\ L. Yariv and E. Snowberg}
}

@article{enke2023cognitive,
  title={Cognitive uncertainty},
  author={Enke, Benjamin and Graeber, Thomas},
  journal={Quarterly Journal of Economics},
  volume={138},
  number={4},
  pages={2021--2067},
  year={2023},
  publisher={Oxford University Press}
}

@article{hossain2013binarized,
  title={The binarized scoring rule},
  author={Hossain, Tanjim and Okui, Ryo},
  journal={Review of Economic Studies},
  volume={80},
  number={3},
  pages={984--1001},
  year={2013},
  publisher={Oxford University Press}
}

@unpublished{amelio2022cognitive,
  title={Cognitive uncertainty and overconfidence},
  author={Amelio, Andrea},
  year={2022},
  note={ECONtribute Discussion Paper No. 173}
}

@article{danz2022belief,
  title={Belief elicitation and behavioral incentive compatibility},
  author={Danz, David and Vesterlund, Lise and Wilson, Alistair J},
  journal={American Economic Review},
  volume={112},
  number={9},
  pages={2851--2883},
  year={2022},
  publisher={American Economic Association 2014 Broadway, Suite 305, Nashville, TN 37203}
}

@inproceedings{lambert2008eliciting,
  title={Eliciting properties of probability distributions},
  author={Lambert, Nicolas and Pennock, David M and Shoham, Yoav},
  booktitle={Proceedings of the 9th ACM Conference on Electronic Commerce},
  pages={129--138},
  year={2008}
}

@unpublished{lambert2011elicitation,
  title={Elicitation and evaluation of statistical forecasts},
  author={Lambert, Nicolas},
  year={2019},
note={Working paper}
}

@article{arts2024measuring,
  title={Measuring decision confidence},
  author={Arts, Sara and Ong, Qiyan and Qiu, Jianying},
  journal={Experimental Economics},
  pages={1--22},
  year={2024},
  publisher={Springer}
}

@article{azrieli2018incentives,
  title={Incentives in experiments: A theoretical analysis},
  author={Azrieli, Yaron and Chambers, Christopher P and Healy, Paul J.},
  journal={Journal of Political Economy},
  volume={126},
  number={4},
  pages={1472--1503},
  year={2018},
  publisher={University of Chicago Press Chicago, IL}
}

@article{chambers2018dynamic,
  title={Dynamic belief elicitation},
  author={Chambers, Christopher P and Lambert, Nicolas},
  journal={Econometrica},
  volume={89},
  number={1},
  pages={375--414},
  year={2021},
  publisher={The Econometric Society}
}

@article{schlag2015penny,
  title={A penny for your thoughts: A survey of methods for eliciting beliefs},
  author={Schlag, Karl H and Tremewan, James and Van der Weele, Jo{\"e}l J},
  journal={Experimental Economics},
  volume={18},
  pages={457--490},
  year={2015},
  publisher={Springer}
}

@unpublished{hu2023confidence,
  title={Confidence in Inference},
  author={Hu, En Hua},
  year={2023},
  note={Working paper}
}

@article{haaland2023designing,
  title={Designing information provision experiments},
  author={Haaland, Ingar and Roth, Christopher and Wohlfart, Johannes},
  journal={Journal of Economic Literature},
  volume={61},
  number={1},
  pages={3--40},
  year={2023},
  publisher={American Economic Association 2014 Broadway, Suite 305, Nashville, TN 37203-2425}
}

@article{xiang2021confidence,
  title={Confidence and central tendency in perceptual judgment},
  author={Xiang, Yang and Graeber, Thomas and Enke, Benjamin and Gershman, Samuel J},
  journal={Attention, Perception, \& Psychophysics},
  volume={83},
  pages={3024--3034},
  year={2021},
  publisher={Springer}
}

@article{charness2021experimental,
  title={Experimental methods: Eliciting beliefs},
  author={Charness, Gary and Gneezy, Uri and Rasocha, Vlastimil},
  journal={Journal of Economic Behavior \& Organization},
  volume={189},
  pages={234--256},
  year={2021},
  publisher={Elsevier}
}

@article{butler2007imprecision,
  title={Imprecision as an account of the preference reversal phenomenon},
  author={Butler, David J and Loomes, Graham C},
  journal={American Economic Review},
  volume={97},
  number={1},
  pages={277--297},
  year={2007},
  publisher={American Economic Association}
}

@article{mobius2022managingself,
  title={Managing self-confidence: theory and experimental evidence},
  author={Möbius, Markus M. and Niederle, Muriel and Niehaus, Paul and Rosenblat, Tanya S.},
  journal={Management Science},
  volume={68},
  number={11},
  pages={7793–-7817},
  year={2022}
}

@article{blanco2010belief,
  title={Belief elicitation in experiments: Is there a hedging problem?},
  author={Blanco, Mariana and Engelmann, Dirk and Koch, Alexander K and Normann, Hans-Theo},
  journal={Experimental Economics},
  volume={13},
  pages={412--438},
  year={2010},
  publisher={Springer}
}

@article{charness2016experimental,
  title={Experimental methods: Pay one or pay all},
  author={Charness, Gary and Gneezy, Uri and Halladay, Brianna},
  journal={Journal of Economic Behavior \& Organization},
  volume={131},
  pages={141--150},
  year={2016},
  publisher={Elsevier}
}

@article{morris2004best,
  title={Best response equivalence},
  author={Morris, Stephen and Ui, Takashi},
  journal={Games and Economic Behavior},
  volume={49},
  number={2},
  pages={260--287},
  year={2004},
  publisher={Elsevier}
}

@article{SchotterTrevino2014,
   author = "Schotter, Andrew and Trevino, Isabel",
   title = "Belief Elicitation in the Laboratory", 
   journal= "Annual Review of Economics",
   year = "2014",
   volume = "6",
   pages = "103-128",
   doi = "https://doi.org/10.1146/annurev-economics-080213-040927",
   url = "https://www.annualreviews.org/content/journals/10.1146/annurev-economics-080213-040927",
   publisher = "Annual Reviews",
   issn = "1941-1391",
   type = "Journal Article",
   keywords = "decision theory",
   keywords = "experiments",
   abstract = "One constraint we face as economists is not being able to observe all the relevant variables required to test our theories or make policy prescriptions. Laboratory techniques allow us to convert many variables (such as beliefs) that are unobservable in the field into observables. This article presents a survey of the literature on belief elicitation in laboratory experimental economics. We discuss several techniques available to elicit beliefs in an incentive-compatible manner and the problems involved in their use. We then look at how successful these techniques have been when employed in laboratory studies. We find that despite some problems, beliefs elicited in the laboratory are meaningful (i.e., they are generally used as the basis for behavior), and the process of eliciting beliefs seems not to be too intrusive. One hope for the future is that by eliciting beliefs, we may be able to develop better theories of belief formation.",
  }

@article{DizonRoss19,
Author = {Dizon-Ross, Rebecca},
Title = {Parents' Beliefs about Their Children's Academic Ability: Implications for Educational Investments},
Journal = {American Economic Review},
Volume = {109},
Number = {8},
Year = {2019},
Month = {August},
Pages = {2728–65},
DOI = {10.1257/aer.20171172},
URL = {https://www.aeaweb.org/articles?id=10.1257/aer.20171172}}

@article{ERTAC2011532,
title = {Does self-relevance affect information processing? Experimental evidence on the response to performance and non-performance feedback},
journal = {Journal of Economic Behavior \& Organization},
volume = {80},
number = {3},
pages = {532-545},
year = {2011},
issn = {0167-2681},
doi = {https://doi.org/10.1016/j.jebo.2011.05.012},
url = {https://www.sciencedirect.com/science/article/pii/S0167268111001326},
author = {Seda Ertac},
keywords = {Experiments, Beliefs, Performance feedback, Information processing, Self-confidence, Gender},
abstract = {In many settings, individuals are confronted with decision problems that involve information relevant to their self-image. This paper uses an experiment to explore whether the self-relevance of information influences information processing. The experiment implements two information processing tasks that are identical from a theoretical perspective, but differ in the type of information provided: performance feedback versus information within the context of a purely statistical updating problem. The results suggest that information processing differs significantly across self-relevant and self-irrelevant contexts. In the self-relevant context, except in cases where initial self-confidence is high, subjects overweigh unfavorable performance feedback, leading to overly pessimistic beliefs. This is in contrast to the corresponding self-irrelevant setup, where departures from Bayes’ rule do not follow a consistent pattern in terms of direction, and are smaller in magnitude. In addition, I find that women may interpret positive feedback more conservatively than men, leading to more pessimistic posteriors.}
}

@article{EilRao11,
 ISSN = {19457669, 19457685},
 URL = {http://www.jstor.org/stable/41237187},
 abstract = {We study processing and acquisition of objective information regarding qualities that people care about, intelligence and beauty. Subjects receiving negative feedback did not respect the strength of these signals, were far less predictable in their updating behavior and exhibited an aversion to new information. In response to good news, inference conformed more closely to Bayes' Rule, both in accuracy and precision. Signal direction did not affect updating or acquisition in our neutral control. Unlike past work, our design varied direction and agreement with priors independently. The results indicate that confirmation bias is driven by direction; confirmation alone had no effect.},
 author = {David Eil and Justin M. Rao},
 journal = {American Economic Journal: Microeconomics},
 number = {2},
 pages = {114--138},
 publisher = {American Economic Association},
 title = {The Good News-Bad News Effect: Asymmetric Processing of Objective Information about Yourself},
 urldate = {2025-09-24},
 volume = {3},
 year = {2011}
}

@article{Zimmerman20,
 ISSN = {00028282, 19447981},
 URL = {https://www.jstor.org/stable/26875123},
 abstract = {A key question in the literature on motivated reasoning and self-deception is how motivated beliefs are sustained in the presence of feedback. In this paper, we explore dynamic motivated belief patterns after feedback. We establish that positive feedback has a persistent effect on beliefs. Negative feedback, instead, influences beliefs in the short run, but this effect fades over time. We investigate the mechanisms of this dynamic pattern, and provide evidence for an asymmetry in the recall of feedback. Finally, we establish that, in line with theoretical accounts, incentives for belief accuracy mitigate the role of motivated reasoning.},
 author = {Florian Zimmermann},
 journal = {American Economic Review},
 number = {2},
 pages = {pp. 337--363},
 publisher = {American Economic Association},
 title = {The Dynamics of Motivated Beliefs},
 urldate = {2025-09-24},
 volume = {110},
 year = {2020}
}

@article{AoyagiFréchetteYuksel24,
Author = {Aoyagi, Masaki and Fréchette, Guillaume R. and Yuksel, Sevgi},
Title = {Beliefs in Repeated Games: An Experiment},
Journal = {American Economic Review},
Volume = {114},
Number = {12},
Year = {2024},
Month = {December},
Pages = {3944–75},
DOI = {10.1257/aer.20220639},
URL = {https://www.aeaweb.org/articles?id=10.1257/aer.20220639}}

@article{ExleyNielsen24,
Author = {Exley, Christine L. and Nielsen, Kirby},
Title = {The Gender Gap in Confidence: Expected but Not Accounted For},
Journal = {American Economic Review},
Volume = {114},
Number = {3},
Year = {2024},
Month = {March},
Pages = {851–85},
DOI = {10.1257/aer.20221413},
URL = {https://www.aeaweb.org/articles?id=10.1257/aer.20221413}}

@article{FehrPowellWilkening21,
Author = {Fehr, Ernst and Powell, Michael and Wilkening, Tom},
Title = {Behavioral Constraints on the Design of Subgame-Perfect Implementation Mechanisms},
Journal = {American Economic Review},
Volume = {111},
Number = {4},
Year = {2021},
Month = {April},
Pages = {1055–91},
DOI = {10.1257/aer.20170297},
URL = {https://www.aeaweb.org/articles?id=10.1257/aer.20170297}}

@article{BDM,
author = {Becker, Gordon M. and Degroot, Morris H. and Marschak, Jacob},
title = {Measuring utility by a single-response sequential method},
journal = {Behavioral Science},
volume = {9},
number = {3},
pages = {226-232},
doi = {https://doi.org/10.1002/bs.3830090304},
url = {https://onlinelibrary.wiley.com/doi/abs/10.1002/bs.3830090304},
eprint = {https://onlinelibrary.wiley.com/doi/pdf/10.1002/bs.3830090304},
abstract = {Abstract A person deciding on a career, a wife, or a place to live bases his choice on two factors: (1) How much do I like each of the available alternatives? and (2) What are the chances for a successful outcome of each alternative? These two factors comprise the utility of each outcome for the person making the choice. This notion of utility is fundamental to most current theories of decision behavior. According to the expected utility hypothesis, if we could know the utility function of a person, we could predict his choice from among any set of actions or objects. But the utility function of a given subject is almost impossible to measure directly. To circumvent this difficulty, stochastic models of choice behavior have been formulated which do not predict the subject's choices but make statements about the probabilities that the subject will choose a given action. This paper reports an experiment to measure utility and to test one stochastic model of choice behavior.},
year = {1964}
}

@article{BelzilMaurelSidibe21,
author = {Belzil, Christian and Maurel, Arnaud and Sidib\'{e}, Modibo},
title = {Estimating the Value of Higher Education Financial Aid: Evidence from a Field Experiment},
journal = {Journal of Labor Economics},
volume = {39},
number = {2},
pages = {361-395},
year = {2021},
doi = {10.1086/710701},
URL = {https://doi.org/10.1086/710701},
eprint = {https://doi.org/10.1086/710701},
abstract = { Using data from a Canadian field experiment on financial barriers to higher education, we estimate the distribution of the value of financial aid for prospective students. We find that a considerable share of prospective students perceive significant credit constraints. Most individuals are willing to pay a sizable interest premium above the prevailing market rate for the option to take up a loan, with a median interest rate wedge equal to 6.8 percentage points for a \$1,000 loan. The willingness to pay for financial aid is heterogeneous across students, with discount factors playing a key role in accounting for this variation. }
}

@article{ChassangPadróIMiquelSnowberg12,
Author = {Chassang, Sylvain and Padró I Miquel, Gerard and Snowberg, Erik},
Title = {Selective Trials: A Principal-Agent Approach to Randomized Controlled Experiments},
Journal = {American Economic Review},
Volume = {102},
Number = {4},
Year = {2012},
Month = {June},
Pages = {1279–1309},
DOI = {10.1257/aer.102.4.1279},
URL = {https://www.aeaweb.org/articles?id=10.1257/aer.102.4.1279}}

@article{Dupas14,
author = {Dupas, Pascaline},
title = {Short-Run Subsidies and Long-Run Adoption of New Health Products: Evidence From a Field Experiment},
journal = {Econometrica},
volume = {82},
number = {1},
pages = {197-228},
keywords = {Technology adoption, experimentation, social learning, anchoring, malaria, prevention},
doi = {https://doi.org/10.3982/ECTA9508},
url = {https://onlinelibrary.wiley.com/doi/abs/10.3982/ECTA9508},
eprint = {https://onlinelibrary.wiley.com/doi/pdf/10.3982/ECTA9508},
abstract = {Short-run subsidies for health products are common in poor countries. How do they affect long-run adoption? A common fear among development practitioners is that one-off subsidies may negatively affect long-run adoption through reference-dependence: People might anchor around the subsidized price and be unwilling to pay more for the product later. But for experience goods, one-off subsidies could also boost long-run adoption through learning. This paper uses data from a two-stage randomized pricing experiment in Kenya to estimate the relative importance of these effects for a new, improved antimalarial bed net. Reduced form estimates show that a one-time subsidy has a positive impact on willingness to pay a year later. To separately identify the learning and anchoring effects, we estimate a parsimonious experience-good model. Estimation results show a large, positive learning effect but no anchoring. We then discuss the types of products and the contexts for which these results may apply.},
year = {2014}
}

@article{Serra-GarciaGneezy21,
Author = {Serra-Garcia, Marta and Gneezy, Uri},
Title = {Mistakes, Overconfidence, and the Effect of Sharing on Detecting Lies},
Journal = {American Economic Review},
Volume = {111},
Number = {10},
Year = {2021},
Month = {October},
Pages = {3160–83},
DOI = {10.1257/aer.20191295},
URL = {https://www.aeaweb.org/articles?id=10.1257/aer.20191295}}

@article{blavatskyy_betting_2009,
	title = {Betting on own knowledge: Experimental test of overconfidence},
	volume = {38},
	issn = {1573-0476},
	url = {https://doi.org/10.1007/s11166-008-9048-7},
	doi = {10.1007/s11166-008-9048-7},
	abstract = {This paper presents a new incentive compatible method for measuring confidence in own knowledge. This method consists of two parts. First, an individual answers several general knowledge questions. Second, the individual chooses among three alternatives: (1) one question is selected at random and the individual receives a payoff if he or she has answered this question correctly; (2) the individual receives the same payoff with a probability equal to the percentage of correctly answered questions; (3) either the first or the second alternative is selected. The choice of the first (second) alternative reveals overconfidence (underconfidence). The individual is well calibrated if he or she chooses the third alternative. Experimental results show that subjects, on average, exhibit underconfidence about their own knowledge when the incentive compatible mechanism is used. Their confidence in own knowledge does not depend on their attitude towards risk/ambiguity.},
	pages = {39--49},
	number = {1},
	journaltitle = {Journal of Risk and Uncertainty},
	shortjournal = {Journal of Risk and Uncertainty},
	author = {Blavatskyy, Pavlo R.},
	date = {2009-02-01},
}

@article{abdellaoui2024unpacking,
  title={Unpacking overconfident behavior when betting on oneself},
  author={Abdellaoui, Mohammed and Bleichrodt, Han and Gutierrez, C{\'e}dric},
  journal={Management Science},
  volume={70},
  number={10},
  pages={7042--7061},
  year={2024},
  publisher={INFORMS}
}

@incollection{hoffrage2016overconfidence,
  title={Overconfidence},
  author={Hoffrage, Ulrich},
  booktitle={Cognitive Illusions},
  pages={291--314},
  year={2016},
  publisher={Psychology Press}
}

@article{liberman2004local,
  title={Local and global judgments of confidence.},
  author={Liberman, Varda},
  journal={Journal of Experimental Psychology: Learning, Memory, and Cognition},
  volume={30},
  number={3},
  pages={729},
  year={2004},
  publisher={American Psychological Association}
}

@article{hoff2006discrimination,
  title={Discrimination, social identity, and durable inequalities},
  author={Hoff, Karla and Pandey, Priyanka},
  journal={American Economic Review},
  volume={96},
  number={2},
  pages={206--211},
  year={2006},
  publisher={American Economic Association}
}

@article{AshrafBerryShapiro10,
Author = {Ashraf, Nava and Berry, James and Shapiro, Jesse M.},
Title = {Can Higher Prices Stimulate Product Use? Evidence from a Field Experiment in Zambia},
Journal = {American Economic Review},
Volume = {100},
Number = {5},
Year = {2010},
Month = {December},
Pages = {2383–2413},
DOI = {10.1257/aer.100.5.2383},
URL = {https://www.aeaweb.org/articles?id=10.1257/aer.100.5.2383}}

@article{rutstrom_stated_2009,
	title = {Stated beliefs versus inferred beliefs: A methodological inquiry and experimental test},
	volume = {67},
	issn = {0899-8256},
	url = {https://www.sciencedirect.com/science/article/pii/S0899825609000591},
	doi = {https://doi.org/10.1016/j.geb.2009.04.001},
	abstract = {Belief elicitation in game experiments may be problematic if it changes game play. We experimentally verify that belief elicitation can alter paths of play in a two-player repeated asymmetric matching pennies game. Importantly, this effect occurs only during early periods and only for players with strongly asymmetric payoffs, consistent with a cognitive/affective effect on priors that may serve as a substitute for experience. These effects occur with a common scoring rule elicitation procedure, but not with simpler (unmotivated) statements of expected choices of opponents. Scoring rule belief elicitation improves the goodness of fit of structural models of belief learning, and prior beliefs implied by such models are both stronger and more realistic when beliefs are elicited than when they are not. We also find that “inferred beliefs” (beliefs estimated from past observed actions of opponents) can predict observed actions better than the “stated beliefs” from scoring rule belief elicitation.},
	pages = {616--632},
	number = {2},
	journaltitle = {Games and Economic Behavior},
	author = {Rutström, E. Elisabet and Wilcox, Nathaniel T.},
	date = {2009},
	keywords = {Experimental methods, Inferred beliefs, Repeated games, Stated beliefs},
}

@article{NyarkoSchotter02,
author = {Nyarko, Yaw and Schotter, Andrew},
title = {An Experimental Study of Belief Learning Using Elicited Beliefs},
journal = {Econometrica},
volume = {70},
number = {3},
pages = {971-1005},
keywords = {belief learning, game theory, experimental economics},
doi = {https://doi.org/10.1111/1468-0262.00316},
url = {https://onlinelibrary.wiley.com/doi/abs/10.1111/1468-0262.00316},
eprint = {https://onlinelibrary.wiley.com/doi/pdf/10.1111/1468-0262.00316},
abstract = {This paper investigates belief learning. Unlike other investigators who have been forced to use observable proxies to approximate unobserved beliefs, we have, using a belief elicitation procedure (proper scoring rule), elicited subject beliefs directly. As a result we were able to perform a more direct test of the proposition that people behave in a manner consistent with belief learning. What we find is interesting. First to the extent that subjects tend to “belief learn,” the beliefs they use are the stated beliefs we elicit from them and not the “empirical beliefs” posited by fictitious play or Cournot models. Second, we present evidence that the stated beliefs of our subjects differ dramatically, both quantitatively and qualitatively, from the type of empirical or historical beliefs usually used as proxies for them. Third, our belief elicitation procedures allow us to examine how far we can be led astray when we are forced to infer the value of parameters using observable proxies for variables previously thought to be unobservable. By transforming a heretofore unobservable into an observable, we can see directly how parameter estimates change when this new information is introduced. Again, we demonstrate that such differences can be dramatic. Finally, our belief learning model using stated beliefs outperforms both a reinforcement and EWA model when all three models are estimated using our data.},
year = {2002}
}

@article{smith1961consistency,
  title={Consistency in statistical inference and decision},
  author={Smith, Cedric AB},
  journal={Journal of the Royal Statistical Society: Series B (Methodological)},
  volume={23},
  number={1},
  pages={1--25},
  year={1961},
  publisher={Wiley Online Library}
}

@article{gillen2019experimenting,
  title={Experimenting with measurement error: Techniques with applications to the Caltech cohort study},
  author={Gillen, Ben and Snowberg, Erik and Yariv, Leeat},
  journal={Journal of Political Economy},
  volume={127},
  number={4},
  pages={1826--1863},
  year={2019},
  publisher={The University of Chicago Press Chicago, IL}
}

@article{burks2013overconfidence,
  title={Overconfidence and social signalling},
  author={Burks, Stephen V and Carpenter, Jeffrey P and Goette, Lorenz and Rustichini, Aldo},
  journal={Review of Economic Studies},
  volume={80},
  number={3},
  pages={949--983},
  year={2013},
  publisher={Oxford University Press}
}

@article{benoit2015does,
  title={Does the better-than-average effect show that people are overconfident?: Two experiments},
  author={Beno\^{i}t, Jean-Pierre and Dubra, Juan and Moore, Don A},
  journal={Journal of the European Economic Association},
  volume={13},
  number={2},
  pages={293--329},
  year={2015},
  publisher={Oxford University Press}
}

@article{clark2009overconfidence,
  title={Overconfidence in forecasts of own performance: An experimental study},
  author={Clark, Jeremy and Friesen, Lana},
  journal={The Economic Journal},
  volume={119},
  number={534},
  pages={229--251},
  year={2009},
  publisher={Oxford University Press Oxford, UK}
}

@article{ortoleva2015overconfidence,
  title={Overconfidence in political behavior},
  author={Ortoleva, Pietro and Snowberg, Erik},
  journal={American Economic Review},
  volume={105},
  number={2},
  pages={504--535},
  year={2015},
  publisher={American Economic Association 2014 Broadway, Suite 305, Nashville, TN 37203}
}

@unpublished{de2024caution,
  title={Caution in the face of complexity},
  author={de Clippel, Geoffroy and Moscariello, Paola and Ortoleva, Pietro and Rozen, Kareen},
  year={2024},
  note={Working paper}
}

@article{niederle2007women,
  title={Do women shy away from competition? Do men compete too much?},
  author={Niederle, Muriel and Vesterlund, Lise},
  journal={Quarterly Journal of Economics},
  volume={122},
  number={3},
  pages={1067--1101},
  year={2007},
  publisher={MIT Press}
}

@article{coffman2014evidence,
  title={Evidence on self-stereotyping and the contribution of ideas},
  author={Coffman, Katherine Baldiga},
  journal={Quarterly Journal of Economics},
  volume={129},
  number={4},
  pages={1625--1660},
  year={2014},
  publisher={MIT Press}
}

@article{bordalo2019beliefs,
  title={Beliefs about gender},
  author={Bordalo, Pedro and Coffman, Katherine and Gennaioli, Nicola and Shleifer, Andrei},
  journal={American Economic Review},
  volume={109},
  number={3},
  pages={739--773},
  year={2019},
  publisher={American Economic Association 2014 Broadway, Suite 305, Nashville, TN 37203}
}

@article{moore2008trouble,
  title={The trouble with overconfidence.},
  author={Moore, Don A and Healy, Paul J},
  journal={Psychological Review},
  volume={115},
  number={2},
  pages={502},
  year={2008},
  publisher={American Psychological Association}
}

@article{trautmann2015belief,
  title={Belief elicitation: A horse race among truth serums},
  author={Trautmann, Stefan T and van de Kuilen, Gijs},
  journal={The Economic Journal},
  volume={125},
  number={589},
  pages={2116--2135},
  year={2015},
  publisher={Oxford University Press Oxford, UK}
}

@article{bellemare2008measuring,
  title={Measuring inequity aversion in a heterogeneous population using experimental decisions and subjective probabilities},
  author={Bellemare, Charles and Kr{\"o}ger, Sabine and Van Soest, Arthur},
  journal={Econometrica},
  volume={76},
  number={4},
  pages={815--839},
  year={2008},
  publisher={Wiley Online Library}
}

@article{chen2026ask,
  title={How to Ask for Belief Statistics without Distortion?},
  author={Chen, Yi-Chun and Wang, Ruoyu and Zhang, Xinhan},
  journal={arXiv preprint arXiv:2602.10474},
  year={2026}
}

@InProceedings{Frongillo15,
  title = 	 {Vector-Valued Property Elicitation},
  author = 	 {Frongillo, Rafael and Kash, Ian A.},
  booktitle = 	 {Proceedings of The 28th Conference on Learning Theory},
  pages = 	 {710--727},
  year = 	 {2015},
  editor = 	 {Grünwald, Peter and Hazan, Elad and Kale, Satyen},
  volume = 	 {40},
  series = 	 {Proceedings of Machine Learning Research},
  address = 	 {Paris, France},
  month = 	 {03--06 Jul},
  publisher =    {PMLR},
  pdf = 	 {http://proceedings.mlr.press/v40/Frongillo15.pdf},
  url = 	 {https://proceedings.mlr.press/v40/Frongillo15.html},
  abstract = 	 {The elicitation of a statistic, or property of a distribution, is the task of devising proper scoring rules, equivalently proper losses, which incentivize an agent or algorithm to truthfully estimate the desired property of the underlying probability distribution or data set.  Leveraging connections between elicitation and convex analysis, we address the vector-valued property case, which has received little attention in the literature despite its applications to both machine learning and statistics. We first provide a very general characterization of linear and ratio-of-linear properties, the first of which resolves an open problem by unifying and strengthening several previous characterizations in machine learning and statistics.  We then ask which vectors of properties admit nonseparable scores, which cannot be expressed as a sum of scores for each coordinate separately, a natural desideratum for machine learning.  We show that linear and ratio-of-linear do admit nonseparable scores, and provide evidence for a conjecture that these are the only such properties (up to link functions). Finally, we give a general method for producing identification functions and address an open problem by showing that convex maximal level sets are insufficient for elicitability in general.}
}

@unpublished{halevy2025magic,
  title={Magic Mirror on the Wall, who is the Smartest One of All?},
  author={Halevy, Yoram and Hoelzemann, Johannes C and Kneeland, Terri},
  year={2025},
  note={Working paper}
}
}

\end{document}